\def\mucalc{$\mu$-calculus\xspace}
\def\muknl{$\mu$-calculus of knowledge\xspace}
\def\munonmix{$\mu$-calculus of non-mixing epistemic fixpoints\xspace}
\renewcommand{\sdpart}{\rightharpoonup}
\def\supp{{ \mathsf{supp}}}
\def\lfp{\mathsf{lfp}}
\def\gfp{\mathsf{gfp}}
\def\outdegree{\mathsf{outdeg}}
\def\ZZZ{\mathcal{Z}}
\def\Runs{{ \mathsf{Runs}}}
\newtheorem{rem}[theorem]{\textbf{Remark}}
\title{Model-checking an Epistemic $\mu$-calculus with Synchronous and Perfect Recall Semantics}
\titlerunning{Model-checking Epistemic $\mu$-calculus}
\begin{document}

\author{Rodica Bozianu\inst{1,3} \and C\u at\u alin Dima\inst{1} \and Constantin Enea\inst{2} }

\institute{
LACL, Universit\'{e} Paris Est-Cr\'{e}teil, \\
61 av. du G-ral de Gaulle,
94010 Cr\'eteil, France
\and
LIAFA, CNRS UMR 7089, 
Universit\'{e} Paris Diderot - Paris 7, \\
Case 7014, 
75205 Paris Cedex 13, France
\and
Faculty of Computer Science, "A.I.Cuza" University\\
700483 Ia\c{s}i, Romania
}

\maketitle

\vspace*{-20pt}
\begin{abstract}
We show that the model-checking problem is decidable for a fragment 
of the epistemic $\mu$-calculus with imperfect information and perfect recall.
The fragment allows free variables within the scope of epistemic modalities in a restricted 
form that avoids constructing formulas embodying any form of common knowledge.
Our calculus subsumes known decidable fragments of 
epistemic $CTL/LTL$, may express winning strategies in 
%dynamic epistemic logic and 
two-player games with one player having imperfect information and 
non-observable objectives,
and, with a suitable encoding, 
decidable instances of the model-checking problem for $ATL_{iR}$ can be 
encoded as instances of the model-checking problem for the \muknl.
%We also show that the model-checking 
%for $CTL^*$ with individual knowledge -- and thus 
%for our fragment of $\mu$-calculus is non-elementarily-hard.
\end{abstract}

\vspace*{-10pt}
\section{Introduction}

The $\mu$-calculus of knowledge is an enrichment of the $\mu$-calculus on trees
with individual epistemic modalities $K_a$ 
(and its dual, denoted $P_a$).
It is designed with the aim that, like the classical modal $\mu$-calculus, 
it would subsume most combinations of temporal and epistemic logics.
The \muknl is more expressive than linear or branching temporal epistemic logics
\cite{HalpernVardi86,shilov-garanina}, propositional dynamic epistemic logics \cite{van-benthem-pacuit},
or the alternating epistemic $\mu$-calculus \cite{bulling-jamroga-mu}. 
%On one hand, its expressive power is witnessed by the fact that
%common knowledge can be expressed as a fixpoint 
%using the individual knowledge operators \cite{shilov-garanina}.
%fact which implies the undecidability of the model-checking problem for 
%the epistemic $\mu$-calculus in the presence of agents that have \emph{perfect recall} of 
%their system observations \cite{shilov-garanina}.
On the other hand, some gaps in its expressive power seem to exist, 
as witnessed by recent observations in \cite{bulling-jamroga-mu}
showing that formulas like $\llangle a \rrangle p_1 \UUU p_2$
are not expressible in the fixpoint version of $ATL$.
%(results obtained for state-based interpretations of the epistemic operators, 
%but which can be 
%extended to the synchronous and perfect recall framework).
This expressivity gap can be reproduced in the \muknl, 
though the \muknl is richer than the alternating $\mu$-calculus.

%On the other hand, using results from \cite{dima-jlc}, 
%one may also get the undecidability of the satisfiability problem for 
%this calculus, when the indistinguishability relation for each agent is 
%defined by subsets of atomic propositions which characterize the observations that each agent
%is able to make on the system state.

%In this paper we identify a fragment 
%of the $\mu$-calculus of knowledge for which the model-checking problem is decidable.

A rather straightforward fragment of the epistemic $\mu$-calculus which 
has a decidable model-checking problem is the one 
in which knowledge modalities apply only to closed formulas,
that is, formulas in which all second-order variables are bound by some fixpoint operator.
The decidability of this fragment follows from recent results 
on the decidability of the emptiness problem for two player games with 
one player having incomplete information and with non-observable winning conditions \cite{chatterjee-doyen-fullpogames}.

However more expressive fragments having a decidable model-checking problem seem to exist.
For example, winning strategies in two-player games with imperfect information can be encoded as 
fixpoint formulas in the \muknl, but not in the above-mentioned restricted fragment.
The same holds for some formulas in 
%One of the logics that could not be embedded in this restricted fragment is 
$ATL$ with imperfect information and perfect recall ($ATL_{iR}$) \cite{schobbens-atl-ir2004,bulling-atl-survey}:
the $ATL$ formula 
$\llangle a \rrangle \Box p$
%, where $a$ is an agent,
can be expressed in a modal $\mu$-calculus of knowledge as 
$\ds\nu Z. \bigvee_{\alpha \in Act_a} K_a \big( p \wedge \!\!\!\bigwedge_{ \beta \in Act_{Ag\setminus \{a\}}} \!\!\![\alpha, \beta] Z \big)$.
And there are variants of $ATL_{iR}$ 
for which the model-checking problem is decidable \cite{dima-gandalf}.
Note that a translation of each instance of the model-checking problem
for $ATL$ into instances of the model-checking problems for the \muknl 
is also possible but requires the modification of the models, as suggested on page \pageref{atl} below.

%denoting that agent $a$ has a 
%strategy to enforce the property $\phi \mathsf{U} \psi$, does not have a fixpoint 
%characterization in the presence of strategies with imperfect information.
%More specifically, 
%the fixpoint formula 
%\[
%\mu Z. \big( \psi \vee ( \phi \wedge K_a \llangle a \rrangle Z ) \big) 
%\]
%is not equivalent with $\llangle a \rrangle \phi \mathsf{U} \psi$
%because the actions that agent $a$ can choose 
%at different but identically observable states in order to enforce the scope $\phi \mathsf{U} \psi$
%can be different, which would make its strategy unfeasible.

%In this paper we are interested in a larger fragment of the epistemic $\mu$-calculus,
%in which the restriction for applying knowledge modalities is relaxed.

Our aim in this paper is to identify such a larger fragment of the epistemic $\mu$-calculus for which 
model-checking is decidable.
The fragment we propose here allows an epistemic modality $K_a$ to be applied to 
a non-closed $\mu$-calculus formula $\phi$, but in such a way that avoids 
expressing properties that construct any variant of common knowledge for two or more agents.
% when the observations of one agent are not included in the observations of the second agent. 
Roughly, the technical restriction is the following: 
two epistemic operators, referring to the knowledge of two different agents $a$ and $b$,
can be applied to non-closed parts of a formula only if the two agents 
have \emph{compatible} observations
%for any non-closed formula $\phi$ that contains two knowledge operators $K_a/P_a$ and $K_b/P_b$
%applied to sub-formulas that contain free variables of $\phi$, $K_a/P_a$ and $K_b/P_b$ are \emph{compatible}
(in the sense that the observability relation of one of the agents is a refinement of the observability relation of 
the other agent).
The variant presented here relies on a \emph{concrete} semantics, in the sense of \cite{dima-jlc},
with the observability relation for each agent $a$ being syntactically identified by a subset $\Pi_a$ 
of atomic propositions. 
We require this in order to syntactically define our 
fragment of \muknl with a decidable model-checking problem: 
the compatibility of two observability relations $\sim_a$ and $\sim_b$  %indistinguishability stating that the
is specified at the syntactic level by imposing that either $\Pi_a \subseteq \Pi_b$ or vice-versa.

The epistemic $\mu$-calculus with perfect recall has a history-based 
semantics: for each finite transition system $T$, 
the formulas of the epistemic $\mu$-calculus must be interpreted over 
the \emph{tree unfolding} of $T$.
This makes it closer with the tree interpretations of the $\mu$-calculus 
from \cite{emerson-jutla91}.
For the classical $\mu$-calculus, there are two ways of proving that the satisfiability and the model-checking 
problem for the tree interpretation of the logic is decidable:
either by providing translations to parity games, or by 
means of a Finite Model Theorem which ensures that 
a formula has a tree interpretation iff it has a \emph{state-based} interpretation
over a finite transition system (this is known 
to be equivalent with memoryless determinacy for parity games, see e.g. \cite{bradfield-modal}).

The generalization of the automata approach does not seem to be possible 
for epistemic $\mu$-calculus, 
mainly due to the absence of an appropriate generalization of tree automata equivalent with the \muknl.
%and we refer the reader to the conclusions section of this paper 
%for some details on the reasons of this impossibility. 
%So we take a different approach, involves a weaker form 
So we take the approach of providing a generalization of the Finite Model Theorem for 
%the classical $\mu$-calculus, 
%which is generalized to 
our fragment of the epistemic $\mu$-calculus.
%Our variant of the Finite Model Theorem says that 
%the tree unfolding of a finite transition system is a model of 
%the tree-interpreted $\mu$-calculus iff the transition system itself is 
%a model of the finite interpretation of the $\mu$-calculus. The proof  
%is given in terms of commutative diagrams between %an elegant proof 
%predicate transformers that are interpretations of non-closed formulas.
%For the generalization of this result to the epistemic $\mu$-calculus
%we also need a state-based interpretation of the epistemic $\mu$-calculus,
%which can be seen as a finite abstraction of the perfect recall semantics. 
This result
says roughly that the tree interpretation of a 
formula over the tree unfolding of a given finite transition system $T$
which contains the epistemic operators $K_a$ or $P_a$ is exactly the ``tree unfolding'' of
the finitary interpretation of the formula in a second transition system $T'$,
%This latter system $T'$ 
which is obtained by determinizing the projection of $T$ 
onto the observations of agent $a$, 
a construction that is common for decidable fragments of temporal epistemic logics.
Our contribution consists in showing that this construction 
can be applied for the appropriate fragment of the \muknl.
The proof is given in terms of commutative diagramms between 
predicate transformers that are the interpretations of non-closed formulas. 

% and it has the same tree unfolding as $T$. 

%only if $T'$ is .
%A second result shows that there exist transition systems $T$ for which 
%no finitary refinement of $T$ exist which is ``compatible'', in the above sense,
%with both $K_a$ and $K_b$, when the indistinguishability relations 
%for the two agents $a$ and $b$ are such that $\sim_a \subsetneq \sim_b$ and $\sim_b \subsetneq \sim_a$,
%which can be seen as a counterpart of a well-known result in 
%multi-player games sayng that such games are decidable only when 
%the observability relations of the agents form a chain in the 
%lattice of relations.

The model checking problem for the decidable fragment of the epistemic $\mu$-calculus 
is non-elementary hard due to the non-elementary hardness of the model-checking problem for 
the linear temporal logic of knowledge \cite{meyden-shilov}.
In the full version of this paper \cite{bozianu-dima-enea-arxiv}, 
we provide a self-contained proof of this result, by 
a reduction of the emptiness 
problem for star-free regular expressions. 
%The proof is a slight adaptation of the one for
%the fragment of the modal logic of knowledge and linear time in \cite{meyden-shilov}.

%We also provide a proof of the nonelementary lower bound for the complexity of 
%the model-checking problem for the decidable fragment of the epistemic $\mu$-calculus.
%The proof is actually given for $CTL^*K$, the combination of $CTL^*$ with individual epistemic 
%modalities, and involves a reduction of the emptiness problem for star-free regular expressions
%to the model-checking problem for our $\mu$-calculus. 

The rest of the paper is divided as follows:
in the next section we recall the predicate transformer semantics of the $\mu$-calculus and 
adapt it to our epistemic extension, both for the tree interpretation and the finitary interpretation. 
We then give our weak variant of the Finite Model Theorem for the classical $\mu$-calculus in the third section.
The fourth section serves for introducing our fragment of the epistemic $\mu$-calculus
and for proving the decidability of its model-checking problem.
%The fourth section gives the proof of the lower bound of the complexity of the model-checking problem.
We end with a section with conclusions and comments.

% on our reasons for 
%considering that other automata-related techniques would not have worked for this setting,
%also providing some conjectures about the expressiveness of our logic.
%

\vspace{-1mm}
\section{Preliminaries}

%\vspace{-1mm}
We start by fixing a series of notions and notations used in the rest of the paper.

$A^*$ denotes the set of words over $A$.
The length of $\alpha \in A^*$, is denoted $|\alpha|$
%the $i$th
%symbol of $\alpha$ is denoted $\alpha[i]$, 
and the prefix of $\alpha$ up to position $i$ is denoted $\alpha[1..i]$.
Hence, $\alpha[1..0] = \eps$ is the empty word.
The (strict) prefix ordering on $A^*$ is denoted $\preceq$ ($\prec$).

%\vspace{.5mm}
%\noindent
%{\bf Trees:}

Given a set $A$ and an integer $n \in \Nset$, 
an \emph{$A$-tree of outdegree $\leq n$} is a partial function $t : [1\ldots n]^* \sdpart A$ 
whose support, denoted $\supp(t)$, is a prefix-closed subset of 
the finite sequences of integers in $[1\ldots n]$.
A \emph{node} of $t$ is an element of its support.
A \emph{path} in $t$ is a pair $(x,\rho)$ consisting of a node $x$ and 
the sequence of $t$-labels of all the nodes which are prefixes of $x$,
$\rho = \big(t(x[1\ldots i])\big)_{0\leq i \leq |x|}$.

A \textbf{multi-agent system} (\textbf{MAS}, for short) is a tuple $M = \big(Q,Ag,\delta,q_0,\Pi,(\Pi_a)_{a\in Ag},\pi\big)$
with $Ag$ being the set of agents, $Q$ the set of states, $q_0$ the initial state of the system, $\delta \subseteq Q \times Q$, 
$\Pi$ the set of \emph{atomic propositions}, $\pi : Q \sd 2^\Pi$ and for all $a \in Ag$, $\Pi_a \subseteq \Pi$.
A run in the structure $M$ from a state $q_0$ is an infinite sequence of states $\rho = q_0q_1q_2...$ such that $(q_i,q_{i+1}) \in \delta$ for all $i \geq 0$.
The set of finite runs in $M$ is denoted $\Runs(M)$.
Throughout this paper we consider only finite systems, with $Q = \{1,\ldots,n\}$ 
and $q_0 = 1$, and we assume that $Q$ contains only reachable states.\label{assumplion}

The $2^\Pi$-tree representing the \emph{unfolding} of a MAS $M$, denoted $t_M$, is defined by
$\supp(t_M) = \{ x \in \Nset^* \mid 1x \in \Runs(M)\} \text{ and } t_M(x) = x[|x|]$.
%
%Given a set of symbols $\Pi$ which will denote in the sequel the set of \emph{atomic propositions}, 
%a subset $\Pi_a \subseteq \Pi$,
%a $2^\Pi$-labeled tree $t$, and 
For any two positions $x,y \in \supp(t_M)$ with $|x| = |y|$,
we denote $x \sim_a y$ if for any $n\leq |x|$ we have that 
\vspace{-1mm}
$$
\pi(t(x[1..n]))\cap \Pi_a = \pi(t(y[1..n]))\cap \Pi_a
\vspace{-1.5mm}
$$
Henceforth, for a word $w \in (2^\Pi)^*$, by $w \restr{\Pi_a}$ we denote
the sequence defined by 
$w\restr{\Pi_a}[i] = w[i]\cap \Pi_a$ for each $1\leq i\leq |w|$.
%
%we also consider here \emph{finite} transition systems,
%in which all set components of $M$ are finite.
Note also that the relation $x \sim_a x'$ is both a relation on 
the nodes of the tree $t_M$ and on the runs of $M$.

%The straightforward connection between $t_M$ and $\delta$ is the following:
%for each run $x\in \Runs(M)$,
%$xi \in \supp(t)$ iff $(x[|x|],i) \in \delta$.
%Subsets of $\supp(t_M)$ which are inverse images, by $t_M^{-1}$, of sets of states 
%are called \emph{$t_M$-regular}.

%\smallskip
\vspace{.5mm}
\noindent
{\bf Predicate transformers:}
Given a set $A$, an \emph{$A$-transformer} is a mapping $f : \big(2^{A}\big)^n \sd 2^{A}$.
%The usual boolean operations can be extended on $A$-transformers,
%making the set of $A$-transformers a complete lattice.
%: 
%given $f,g : \big(2^{A}\big)^n \sd 2^{A}$,
%we define $f\cup g, f\cap g : \big(2^{A}\big)^n \sd 2^{A}$ as:
%\vspace{-2mm}
%\begin{align*}
%(f\cup g) (B_1,\ldots, B_n) &= f(B_1,\ldots, B_n) \cup g(B_1,\ldots, B_n) \text{ and}\\
%(f\cap g) (B_1,\ldots, B_n) &= f(B_1,\ldots, B_n) \cap g(B_1,\ldots, B_n)
%\end{align*}

%It is well-known that if both $f$ and $g$ are monotone,
%then $f\cup g$ and $f\cap g$ are.
%Also, if $f : 2^{A} \sd 2^{A}$ is monotone and $g : \big(2^{A}\big)^n \sd 2^{A}$
%is monotone too, then $f \circ g$ is also monotone.

%\vspace{-2.5mm}

Following the Knaster-Tarski theorem, any monotone $A$-transformer $f : 2^{A} \sd 2^{A}$
has a unique least and greatest fixpoint, denoted $\lfp_f$, resp. $\gfp_f$.

For an $A$-transformer $f : \big(2^{A}\big)^n \sd 2^{A}$, 
a tuple of sets $B_1,\ldots, B_n \subseteq A$ 
and some $k \leq n$ 
we denote 
$f_k(B_1,\ldots,B_{k-1},\cdot,B_{k+1},\ldots, B_n)  : 2^A \sd 2^A$ 
the $A$-transformer with 
\vspace{-1mm}
$$
f_k(B_1,\ldots,B_{k-1},\cdot,B_{k+1},\ldots, B_n)(B) = f(B_1,\ldots,B_{k-1},B,B_{k+1},\ldots, B_n)
\vspace{-1.5mm}
$$
Note that when $f$ is monotone, $f_k(B_1,\ldots,B_{k-1},\cdot,B_{k+1},\ldots, B_n)$ is monotone too.
Hence, both $\lfp_{f_k(B_1,\ldots,B_{k-1},\cdot,B_{k+1},\ldots, B_n)}$
and $\gfp_{f_k(B_1,\ldots,B_{k-1},\cdot,B_{k+1},\ldots, B_n)}$ exist.
These fixpoints can also be seen as the following $A$-transformers:
$\lfp^k_{f} : (2^{A})^{n} \sd 2^{A}$ and 
$\gfp^k_{f} : (2^{A})^{n} \sd 2^{A}$, defined respectively as:
\vspace{-2mm}
%\begin{align*}
$$\lfp^k_{f} (B_1,\ldots, B_{n})  = \lfp_{f_k(B_1,\ldots,B_{k-1},\cdot,B_{k+1},\ldots, B_{n})}$$
\vspace{-5mm} 
$$\gfp^k_{f} (B_1,\ldots, B_{n})  = \gfp_{f_k(B_1,\ldots,B_{k-1},\cdot,B_{k+1},\ldots, B_{n})}$$
%\end{align*}

%\vspace{-2.5mm}
Note that both these $A$-transformers are constant in their $k$-th argument.
It is also known that both these $A$-transformers are monotone if $f$ is monotone.

%In the sequel we will use $\lfp_{f_k}$ and $\gfp_{f_k}$ as $A$-transformers 
%with domain $(2^{A})^n$, rather than $(2^{A})^{n-1}$, 
%considering that they are constant in their $k$-th argument.

%The set of $A$-transformers is a complete lattice 
%w.r.t. the boolean operations, where the least and the greatest elements are $\bot_A$,resp. $\top_A$, defined as follows:
%for all $B \subseteq A$, 
%\[
%\bot_A(B) & = \emptyset \text{ and } \top_A(B) = A
%\]

\vspace{-1mm}
\section{The $\mu$-calculus of Knowledge}

%The syntax of the \textbf{\muknl} is the following:
%\begin{align*}
%\varphi ::= & p \mid Z \mid \varphi \wedge \varphi \mid \neg \varphi 
%\mid AX \varphi \mid K_a \phi \mid \mu Z. \varphi 
%\end{align*}

%We give here the syntax and semantics of the \muknl. 
%Its language is built
{\bf Syntax:}
The syntax of the \textbf{\muknl} (in positive form) is based on the following sets of 
symbols: a finite set of \emph{agents} $Ag$, 
a family of finite sets of \emph{atomic propositions} $(\Pi_a)_{a\in Ag}$ 
(no restrictions apply on the pairwise intersections between these sets),
with $\Pi = \bigcup_{a\in Ag} \Pi_a$, and a finite set of \emph{second-order variables} 
$\ZZZ = \{Z_1,\ldots,Z_k\}$.
The set $\Pi_a$ represents the set of atoms whose value is \emph{observable} by agent $a$
at each instant (in the sense to be developed further).

The grammar for the formulas of the \muknl is:
\vspace{-1mm}
%\begin{align*}
$$\varphi ::=  p \mid \neg p \mid Z \mid \varphi \wedge \varphi \mid \varphi \vee \varphi
\mid AX \varphi \mid EX \varphi 
 \mid K_a \phi \mid P_a \phi \mid \mu Z. \varphi \mid \nu Z. \varphi $$
%\end{align*}

\vspace{-2mm}
\noindent
where $p \in \Pi$, $a\in Ag$ and $Z \in \ZZZ$.
%, and, in formulas of the type $\mu Z. \varphi$ or $\nu Z. \varphi$, 
%the variable $Z$ has a positive occurrence in $\varphi$.
Formulas of the type $K_a \phi$ are read as \emph{agent $a$ knows that $\phi$ holds}.
The dual of $K_a$, denoted $P_a$,
(and definable as $P_a \phi = \neg K_a \neg \phi$ if negation were allowed), 
reads as \emph{agent $a$ considers that $\phi$ is possible}.
As usual, for a subset of agents $A \subseteq Ag$
we may denote $E_A$ the ``everybody knows'' operator, 
$E_A \phi = \bigwedge_{a \in A} K_a\phi$.

The fragment of the \muknl which does not involve the knowledge operator $K_a$ (or its dual)
is called here the \emph{plain \mucalc,} or simply the \mucalc, when there's no 
risk of confusion.
As usual, we say that a formula $\phi$ is \emph{closed} if 
each variable $Z$ in $\phi$ occurs in the scope of a fixpoint operator 
for $Z$.

We will also briefly consider in this paper 
the \emph{modal $\mu$-calculus of knowledge}, for the sake of comparison with 
other combinations of temporal and epistemic logics.
It has almost the same grammar, but with the nexttime operators $EX$ and $AX$ replaced 
with modal nexttime operators $\langle \br\alpha \rangle$, 
resp. $[\alpha]$ with $\br\alpha$ representing
a tuple of action symbols $\br\alpha = (\alpha_a)_{a\in Ag}$.
%(A single action name could be used also.)
%but involves overloading many of the notations.
Note that the modal \muknl can be translated to the non-modal \muknl
by converting each action name $\alpha\in Act_a$ 
into an atomic proposition,
so the main results of this paper generalize easily to this calculus.

%A useful notation is the \emph{subformula} relation:
%we put $\phi \prec_{subf} \phi'$ if $\phi$ is a subformula of $\phi'$ with $\phi \neq \phi'$.
%\text{Direct} subformulas are then subformulas $\phi'$ of a formula $\phi$ 
%for which there exists no third formula $\phi''$ with 
%$\phi' \prec_{subf} \phi'' \prec_{subf} \phi$.
%We denote then $\phi' \prec_{dir} \phi$.

%set $Ag$ (of \emph{agents})
%and for each $a \in A$ a set of atomic propositions $\Pi_a \subseteq \Pi$, 
%which represents the set of atomic propositions \emph{observable by agent $a$}.
%with $\Pi = \bigcup_{a\in A}\Pi_a$.
%and is  given by the following rules: 
%We give an alternative semantics for \muknl in which 
%the construction of the $\supp(t)$-transformers associated with each 
%formula is made explicit.
%In this variant of the semantics, 

We give two semantics of the \muknl: a tree semantics and a finitary semantics. 
The tree semantics is required because we assume that agents have \emph{perfect recall},
and hence they remember all observations made since the system started.
The second is necessary for the decision problem. 
The equivalence between the two semantics on trees generated by MASs, 
which gives the decidability of the model-checking problem,
is a weak form of \emph{memoryless determinacy for tree automata}.
We present here both semantics of the \muknl in a predicate-transformer flavor, more appropriate
for stating a number of properties of the logic.

\vspace{.5mm}
\noindent
{\bf The tree semantics}
of the \muknl is given in terms of $2^{\Pi\cup \ZZZ}$-trees.
For a given tree $t$, 
each formula $\phi$ which contains variables $Z_1,\ldots, Z_n$
is associated with 
a $\supp(t)$-transformer $\|\phi\| : \big(2^{\supp(t)}\big)^n \sd 2^{\supp(t)}$ by structural induction, as follows:
\vspace{-1mm}
\begin{itemize}
\item The two atoms $p$ and $\neg p$ are interpreted as constant $\supp(t)$-transformers
$\|p\|: \big(2^{\supp(t)}\big)^n \sd 2^{\supp(t)}$ and $\|\neg p\| : \big(2^{\supp(t)}\big)^n \sd 2^{\supp(t)}$, 
defined by the sets $\|p\| = \{ x \in supp(t) \mid p \in \pi(t(x)) \}$,
%\item $\|\neg p\| : \big(2^{\supp(t)}\big)^n \sd 2^{\supp(t)}$ is the constant $\supp(t)$-transformer
%defined by the set 
resp. $\|\neg p\| = \{ x \in supp(t) \mid p \not \in \pi(t(x)) \}$.
\item Each variable $Z_i \in \ZZZ$ is interpreted 
as the $i$-th projection on $\big(2^{\supp(t)}\big)^n $, that is, 
as the $\supp(t)$-transformer
$\|Z_i\| : \big(2^{\supp(t)}\big)^n \sd 2^{\supp(t)}$ with 
$
\|Z_i\| ( S_1,\ldots,S_n)  = S_i, \forall S_1,\ldots,S_n \subseteq \supp(t)
$.

\item $\| \phi_1 \vee \phi_2\| = \|\phi_1\| \cup \|\phi_2\|$
and $\| \phi_1 \wedge \phi_2\| = \|\phi_1\| \cap \|\phi_2\|$.
\item Each of the two nexttime operators is mapped to a $\supp(t)$-transformer, denoted
$AX$, resp. $EX : 2^{\supp(t)} \sd 2^{\supp(t)}$, defined as follows: for each $S \subseteq \supp(t)$,
 \vspace{-1mm}
\begin{align*}
AX(S) & = \{ x \in \supp(t) \mid \forall i \in \Nset \text{ if } xi \in \supp(t) 
% \qquad \qquad \qquad \qquad
  \text{ then } xi \in S \} \\
EX(S) & = \{ x \in \supp(t) \mid \exists i \in \Nset \text{ with } xi \in \supp(t) 
 %\qquad \qquad \qquad \qquad 
 \text{ and }  xi \in S \}
\end{align*}

  \vspace{-2mm}
  \noindent
Then $\| AX\phi\| = AX \circ \|\phi\|$, 
and $\| EX\phi\| = EX \circ \|\phi\|$.
%where $EX : 2^{\supp(t)} \sd 2^{\supp(t)}$ is the 
%$\supp(t)$-transformer defined as 
\item Each pair of epistemic operators is mapped to
$\supp(t)$-transformers $K_a$, resp. $P_a : 2^{\supp(t)} \sd 2^{\supp(t)}$,
defined as follows: for each $S \subseteq \supp(t)$,
 \vspace{-1mm}
\begin{align*}
K_a(S) & = \{ x \in \supp(t) \mid \forall y\in \supp(t) \text{ with } x \sim_a y 
% \qquad \qquad \qquad \qquad 
 \text{ we have } y\in S \} \\
P_a(S) & = \{ x \in \supp(t) \mid \exists y\in \supp(t) \text{ with } x \sim_a y 
% \qquad \qquad \qquad \qquad 
 \text{ and } y\in S \}
\end{align*}

\vspace{-2mm}
  \noindent
Then $\|K_a \phi\| = K_a \circ \|\phi\|$ and 
$\|P_a \phi\| = P_a \circ \|\phi\|$.
%where $P_a : 2^{\supp(t)} \sd 2^{\supp(t)}$ 
%is the $\supp(t)$-transformer defined as:
\item For the fixpoint operators we put $\|\mu Z_i. \phi\| = \lfp^i_{\|\phi\|}$
and $\|\nu Z_i. \phi\| = \gfp^i_{\|\phi\|}$.
\end{itemize}

\vspace{-1mm}
\noindent
Note that the two $supp(t)$-transformers $K_a$ and $P_a$ are dual and we have that $K_a(S) = \br{P_a(\br S)}$, with $\br \cdot$ denoting the set complementation.
{\em We also denote $t \models \phi$ iff $\eps \in \|\phi\|$.}

%for any tree $t : \Nset^* \sdpart 2^{\Pi\cup \ZZZ}$,
%\begin{align*}
%(t,x) & \models p && \text{ if } p \in t(x) \\
%(t,x) & \models \neg p & \text{ if } p \not\in t(x) \\
%(t,x) & \models Z && \text{ if } Z \in t(x) \\
%(t,x) & \models \phi_1 \wedge \phi_2 && \text{ if } (t,x) \models \phi_1 \text{ and } (t,x) \models \phi_2 \\
%(t,x) & \models \neg \phi && \text{ if } (t,x) \not \models \phi \\
%%(t,x) & \models \phi_1 \vee \phi_2 & \text{ if } (t,x) \models \phi_1 \text{ or } (t,x) \models \phi_2 \\
%(t,x) & \models AX\phi && \text{ if for all } i\in \Nset \text{ with } xi \in \supp(t) \\
%& && \qquad \qquad \text{ we have } (t,xi) \models \phi \\ 
%%(t,x) & \models EX\phi & \text{ if there exists } i\in \Nset \text{ with } xi \in \supp(t) \text{ and } (t,xi) \models \phi \\ 
%(t,x) & \models K_a\phi && \text{ if for all } y\in \supp(t) \text{ with } x\sim_{\Pi_a} y \\
%& && \qquad \qquad \text{ we have }  (t,y) \models \phi \\ 
%%(t,x) & \models P_a\phi & \text{ if there exists } y\in \supp(t) \text{ with } x\sim_a y \text{ and }  (t,y) \models \phi \\ 
%(t,x) & \models \mu Z.\phi && \text{ if } x \in \lfp_{\br \phi_Z}
%%(t,x) & \models \nu Z.\phi & \text{ if } (t,x) \models \gfp_{\phi_Z}
%\end{align*}
%where $\br\phi_Z$ denotes the monotone $\supp(t)$-transformer defined by 
%\[
%\br\phi_Z(A) = \{ x \in \supp(t) \mid (t[Z\la A],x) \models \phi \}
%\]
%Then, we denote $t \models \phi$ if $(t,\eps) \models \phi$.

%\begin{remark}
%The usual CTLK 
%\end{remark}

The following property says that the \muknl cannot distinguish between 
isomorphic trees:
\begin{proposition}\label{prop:tree-iso}
For any two MASs $M_1$ and $M_2$ for which there exists some 
tree isomorphism 
$\chi : 2^{supp(t_{M_1})} \sd 2^{supp(t_{M_2})}$,
and for any \muknl formula $\phi$,
the following diagram commutes:
\vspace*{-7pt}

\begin{equation}
\begin{diagram}[tight,height=20pt,width=4em]
\big(2^{\supp(t_1)}\big)^n & \rTo^{\|\phi \|_{M_1}} & 2^{\supp(t_2)} \\
\dTo^{(\chi)^n} & & \dTo_{\chi} \\
\big(2^{\supp(t_2)}\big)^n & \rTo^{\|\phi\|_{M_2}} & 2^{\supp(t_2)} \\
\end{diagram} 
\label{diagram:finite} 
\vspace*{-6pt}
\end{equation}
\end{proposition}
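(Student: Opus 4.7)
The plan is to proceed by structural induction on the formula $\phi$, and the essential observation is that the hypothesized tree isomorphism $\chi$ must (implicitly in the definition) preserve the tree structure, the labeling by atomic propositions, and the observation-equivalence relations $\sim_a$ for every agent $a$. Once these preservation properties are in hand, each clause of the semantics commutes with $\chi$ almost by inspection.

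For the base cases, the atomic propositions $p$ and $\neg p$ are interpreted as constant predicate transformers, so commutativity reduces to the fact that $x \in \|p\|_{M_1}$ iff $\chi(x) \in \|p\|_{M_2}$, which follows from preservation of labels. The projection interpretation of each variable $Z_i$ commutes with $\chi$ trivially, since $\chi$ applied to the $i$-th component of the tuple is exactly what $\|Z_i\|_{M_2}$ produces. Boolean connectives follow because $\chi$, viewed as a set-function induced by a bijection, commutes with unions and intersections.

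For the modal cases $AX\phi$ and $EX\phi$ I would use the fact that $\chi$ is a tree isomorphism, so it restricts to a bijection on children: $xi \in \supp(t_{M_1})$ iff there is a unique $j$ with $\chi(x)j \in \supp(t_{M_2})$ and $\chi(xi) = \chi(x)j$. Combined with the induction hypothesis for $\phi$, this gives $\chi \circ AX \circ \|\phi\|_{M_1} = AX \circ \chi \circ \|\phi\|_{M_1} = AX \circ \|\phi\|_{M_2} \circ \chi^n$, and analogously for $EX$. The epistemic cases $K_a\phi$ and $P_a\phi$ are handled in exactly the same way, invoking instead preservation of $\sim_a$: $x \sim_a y$ iff $\chi(x) \sim_a \chi(y)$.

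The fixpoint cases $\mu Z_i.\phi$ and $\nu Z_i.\phi$ are where a little care is needed, but they rest on a standard general fact: if $f : 2^A \to 2^A$ and $g : 2^B \to 2^B$ are monotone and $\chi : 2^A \to 2^B$ is a bijection induced by a bijection $A \to B$ (hence preserves all suprema and infima) with $\chi \circ f = g \circ \chi$, then $\chi(\lfp_f) = \lfp_g$ and $\chi(\gfp_f) = \gfp_g$. One proves this by transfinite induction on the Kleene iteration sequences, starting from $\emptyset$ (respectively $\supp(t_{M_1})$), which $\chi$ sends to $\emptyset$ (respectively $\supp(t_{M_2})$). Applying this with $f = \|\phi\|_{M_1,i}(S_1,\ldots,S_{i-1},\cdot,S_{i+1},\ldots,S_n)$ and $g = \|\phi\|_{M_2,i}(\chi(S_1),\ldots,\chi(S_{i-1}),\cdot,\chi(S_{i+1}),\ldots,\chi(S_n))$, the induction hypothesis on $\phi$ supplies the required intertwining $\chi \circ f = g \circ \chi$, and the general fact closes the case. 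The main conceptual point to watch is simply that an isomorphism between the two trees really does induce a bijection on supports that preserves every structure needed for the inductive clauses; once this is set up explicitly, the remainder of the proof is routine verification, and no single step should present a genuine obstacle.
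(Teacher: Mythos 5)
Your proof is correct and follows essentially the same route as the paper's: structural induction on $\phi$, transferring each semantic clause through the bijection on nodes induced by the tree isomorphism (which, as you note, preserves labels, the child relation, and hence each $\sim_a$). The only cosmetic difference is the fixpoint case, where you iterate the (transfinite) Kleene approximants and transport them through $\chi$, while the paper argues directly that the order-isomorphism $\chi$ maps the least element of the set of fixpoints of one transformer to that of the intertwined transformer; both arguments are sound.
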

\begin{proof}
By straightforward structural induction on the formula $\phi$.

Let $S_1,...,S_n \subseteq 2^{supp(t_1)}$. We have to prove that $\chi(\|\phi \|_{M_1}(S_1,...,S_n)) = \| \phi \|_{M_2} (\chi(S_1),...,\chi(S_n))$.
\begin{enumerate}
	\item For $\phi = p$ we have that 
	\begin{align}
	\chi (\|\phi & \|_{M_1}(S_1,...,S_n)) = \chi(\{ x\in t_{M_1} \mid \phi \in t_{M_1}(x)\}) \tag*{}\\
 		&= \{ \chi(x) \mid x \in t_{M_1}, \phi \in t_{M_1}(x)\}  \tag*{}\\
 		&= \{ y \in t_{M_2} \mid \exists x \in t_{M_1} \text{ with } \phi \in t_{M_1}(x) s.t. \chi (x) = y \} \tag*{ since $\chi$  is a bijection} \\
	 &= \{ y \in t_{M_2} \mid \phi \in t_{M_2}(y)\} \tag*{}\\
	 &= \| \phi \|_{M_2} (\chi (S_1),...,\chi (S_n)) \tag*{}
	\end{align}
	  The proof is similar for $\phi = \neg p$.
	
	\item For $\phi = Z_i \in \ZZZ $, $\| \phi \|_{M_1} (S_1,...S_n) = S_i$. Then,
	\begin{align*}
	\chi(\|\phi \|_{M_1}&(S_1,...,S_n)) = \chi(S_i) 
	 = \| \phi \|_{M_2} (\chi(S_1),...,\chi(S_n)).
	\end{align*}
	
	\item For $\phi = \phi_1 \vee \phi_2$, we have $\| \phi \| = \| \phi_1 \| \cup \| \phi_2 \|$. 
	By assuming that the property holds for $\phi_1$ and $\phi_2$, we get 
	\begin{align*}
	\chi(\| \phi \|_{M_1}&(S_1,...,S_n)) = \chi(\| \phi_1 \|_{M_1}(S_1,...,S_n) \cup \| \phi_2 \|_{M_1}(S_1,...,S_n)) \\ 				& = \chi(\| \phi_1 \|_{M_1}(S_1,...,S_n)) \cup \chi(\| \phi_2 \|_{M_1}(S_1,...,S_n)) \\
	& = \| \phi_1 \|_{M_2}(\chi(S_1),...,\chi(S_n)) \cup \| \phi_2 \|_{M_2}(\chi(S_1),...,\chi(S_n)) \\
	& = \| \phi_1 \vee \phi_2 \|_{M_2}(\chi(S_1),...,\chi(S_n)).
	\end{align*}
	 We similar proof can be given for $\phi = \phi_1 \wedge \phi_2$.
	
	\item For $\phi = AX \phi_1$, $\| \phi \| = AX \circ \| \phi_1 \|$. We have that 
	\begin{align}
	\chi &\big (\| AX \phi_1 \|_{M_1}(S_1,...,S_n)\big) = \chi \Big(AX \big(\| \phi_1 \|_{M_1}(S_1,...,S_n)\big)\Big) \tag*{}\\
	&= \{\chi(x) \mid x \in \supp(t_{M_1}) \text{ and } \forall i \in \Nset, \text{ if } xi \in \supp(t_{M_1}) \tag*{}\\
		&  \qquad \qquad \text{ then } t_{M_1}(xi) \in \|\phi_1\|_{M_1}(S_1,...,S_n) \} \tag*{}\\ 
	&= \{ y \mid \chi^{-1}(y) \in \supp(t_{M_1}) \text{ and } \forall j \in \Nset, \text{ if } \chi^{-1}(y)\chi^{-1}(j) \in \supp(t_{M_1}),\tag*{}\\
			&  \qquad \qquad \text{ then } t_{M_1}(\chi^{-1}(y)\chi^{-1}(j)) \in \| \phi_1 \|_{M_1}(S_1,...,S_n)\} \tag*{since $\chi$ is bijective}\\
	 &= \{ y \mid y \in \supp(t_{M_2}) \text{ and } \forall j \in \Nset, \text{ if } yj \in \supp(t_{M_2}) \tag*{}\\
	 		&  \qquad \qquad \text{ then } t_{M_2}(yj) \in \chi(\| \phi_1\|_{M_1}(S_1,...,S_n))\} \tag*{}\\
	 &= \{ y \in \supp(t_{M_2}) \mid \forall j \in \Nset \text{ if } yj \in \supp(t_{M_2}) \tag*{}\\
	 		&  \qquad \qquad \text{ then } t_{M_2}(yj) \in \| \phi_1 \|_{M_2}(\chi(S_1),...,\chi(S_n))\} \tag*{}\\ 
	 &= AX(\|\phi_1\|_{M_2}(\chi(S_1),...,\chi(S_n)))\tag*{}\\
	 &= AX \circ \| \phi_1 \|_{M_2}(\chi(S_1),...,\chi(S_n))\tag*{}\\
	 &= \| AX\phi_1 \|_{M_2}(\chi(S_1),...,\chi(S_n)).\tag*{}
	\end{align}	
	  The proof is similar for $\phi = EX\phi_1$.
	
	\item For $\phi = K_a \phi_1$, $\|\phi \| = K_a \circ \| \phi_1 \|$. Then, 
	\begin{align*}
	\chi(\| \phi \|_{M_1}&(S_1,...,S_n)) = \chi(K_a(\|\phi_1\|_{M_1}(S_1,...,S_n)))  \\
	&= \chi(\{ x \in \supp(t_{M_1}) \mid \forall y \in \supp(t_{M_1}) \text{ with } x \sim_a y \\
		& \qquad \qquad \text{ we have } y \in \|\phi_1 \|_{M_1}(S_1,...,S_n)\})  \\
	&= \{\chi(x) \in \supp(t_{M_2}) \mid s \in \supp(t_{M_1} \text{ and } \forall y \in \supp(t_{M_1} \text{ with } x \sim_a y \\
		& \qquad \qquad \text{ we have } y \in \|\phi_1\|_{M_1}(S_1,...,S_n)\} \\
	&= \{ x' \in \supp(t_{M_2}) \mid \forall \chi^{-1}(y') \in \supp(t_{M_1} \text{ with }\chi^{-1}(x') \sim_a \chi^{-1}(y') \\
		& \qquad \qquad \text{ we have } \chi^{-1}(y') \in \|\phi_1\|_{M_1}(S_1,...,S_n)\} \\
	&=  \{x' \in \supp(t_{M_2}) \mid \forall y' \in \supp(t_{M_2}) \text{ with } x' \sim_a y' \\
		& \qquad \qquad \text{ we have } y' \in \chi(\|\phi_1\|_{M_1}(S_1,...,S_n))\} \\
	&= K_a(\chi(\|\phi_1\|_{M_1}(S_1,...,S_n))) \\
	&= K_a(\|\phi_1\|_{M_2}(\chi(S_1),...,\chi(S_n))) \\
	&= K_a \circ \|\phi_1 \|_{M_2}(\chi(S_1),...,\chi(S_n))\\
	&= \| \phi \|_{M_2}(\chi(S_1),...,\chi(S_n)).	
	\end{align*}
	A similar proof can be given for $\phi = P_a \phi_1$.

	\item For $\phi = \mu Z_i.\phi_1$, $\| \phi \| = \lfp^i_{\| \phi\| }$. Hence,
	\begin{align}
	\chi &(\| \phi \|_{M_1}(S_1,...,S_n)) = \chi( \lfp^i_{\| \phi\|_{M_1} }(S_1,...,S_n)) \tag*{} \\
	&= \chi(\lfp_{\| \phi_1\|_{i,M_1}}(S_1,...,S_{i-1},\cdot,S_{i+1},...,S_n)) \tag*{}\\
	& = \chi( \min \{S \mid \|\phi_1\|_{i,M_1}(S_1,...,S_{i-1},S,S_{i+1},...,S_n) = S\}) \tag*{} \\ 
	& = \big\{ \min \chi(S) \mid S \text{ s.t. } \|\phi_1\|_{i,M_1}(S_1,...,S_{i-1},S,S_{i+1},...,S_n) = S \big \} \tag*{$\chi$ is monotonous} \\
	&= \min \big \{ \chi(S) \mid \chi(\|\phi_1\|_{i,M_1}(S_1,...,S_{i-1},S,S_{i+1},...,S_n)) = \chi(S) \big \} \tag*{}\\
	&= \lfp_{\chi(\|\phi_1\|_{i,M_1}(S_1,...,S_{i-1},\cdot,S_{i+1},...,S_n))} \tag*{inductive step}\\
	&= \lfp_{\|\phi_1\|_{i,M_2}(\chi(S_1),...,\chi(S_{i-1}),\cdot,\chi(S_{i+1}),...,\chi(S_n))} \tag*{}\\
	&= \lfp^i_{\|\phi_1\|_{M_2}}(\chi(S_1),...,\chi(S_n)) \tag*{}\\
	& = \|\phi\|_{M_2}(\chi(S_1),...,\chi(S_n)). \tag*{}
	\end{align}
	The proof is similar for $\phi = \nu Z_i.\phi_1$.
\end{enumerate}
\end{proof}

%\begin{remark}
%\vspace*{-6pt}
%\noindent

\vspace{-2mm}
\subsection{Comparison with other temporal epistemic frameworks}
We discuss the relationship between the \muknl and 
other temporal epistemic logics or game models with imperfect information and perfect recall.

%The expressive power of the \muknl subsumes most of the temporal epistemic logics studied. 
As already noted e.g. in \cite{shilov-garanina}, 
%the \emph{common knowledge} operator can be expressed in the \muknl:
%as it was already noted by \cite{shilov-garanina}.
the following fixpoint formula defines the common knowledge operator for two agents:
%\cite{HalpernVardi86}:
$C_{a,b} \phi = \nu Z . (\phi \wedge K_a Z \wedge K_b Z )$.
%That is, $C_{a,b} \phi$ holds in all nodes of the tree 
%where both $a$ and $b$ know $\phi$, and $a$ knows that $b$ knows $\phi$, 
%and $a$ knows that $b$ knows that $a$ knows $\phi$, etc.
%\end{remark}

%\begin{remark}\label{rema:atl}
On the other hand, it's easy to see that the (modal variant of the) \muknl is more expressive than 
the alternating epistemic $\mu$-calculus of \cite{bulling-jamroga-mu}, 
due to the possibility to insert knowledge operators ``in between'' the 
quantifiers that occur in the semantics of the coalition operators.
The relationship with $ATL_{iR}$ is more involved, as we detail in the sequel.

Given a set of agents $A \subseteq Ag$, denote $Act_A$ the cartesian product 
of the set of action symbols for each agent in $A$, $Act_A = \bigtimes_{a\in A} Act_a$. 
Then, formulas of the type 
$\llangle A \rrangle \Box p$ can be expressed as the fixpoint formula 
$\ds\nu Z. \bigvee_{\alpha \in Act_a} K_a \big( p \wedge \!\!\!\bigwedge_{ \beta \in Act_{Ag\setminus \{a\}}} \!\!\![\alpha, \beta] Z \big)$.

\label{atl}
Formulas containing the until operator cannot be translated into the \muknl.
The reason is similar to the one explained in 
\cite{bulling-jamroga-mu}: 
in formulas of the type $\llangle a \rrangle \Diamond p$
%$ \llangle A \rrangle p_1 \UUU p_2$,
the objective $p$ might not be observable by the agent $a$,
who might only be able to know that, at some given time instance, 
sometimes in the past, the objective was achieved on all identically observable traces.

Given an $ATL_{iR}$ formula $\phi = \llangle a \rrangle p_1 \UUU p_2$ where $p_1$ and $p_2$ are atomic proposition, a MAS $M$ 
and a finite run $\rho$ in $M$, 
the instance of the model-checking problem $M,\rho \models \phi$
can be translated to an instance of the model-checking problem in the modal $\mu$-calculus of knowledge of the following formula:
 \vspace{-1.5mm}
$$
\mu Z. \bigvee_{\alpha \in Act_a} K_a \Big( p_2 \vee past_{p_2} \vee  \big( p_1 \wedge 
\!\!\!\bigwedge_{\beta \in Act_{Ag\setminus \{a\}}} \!\!\! [\alpha,\beta] Z \big)\Big) 
 \vspace{-2mm}
$$
and the \emph{modified} system $M'$, in which are created some copies of the successors of the states $s$ labelled with the atomic proposition $p_2$ and the corresponding paths. The copies are labelled with the existing atomic propositions in the successor of $s$ to which is added the new atomic proposition 
$past_{p_2}$. It will label all the states occurring \emph{after} state $s$ carrying a $p_2$.
This mechanism is similar with the ``bookkeeping'' 
employed in the two-player games utilized in \cite{dima-gandalf} for 
checking whether the same formula $\phi$ holds at a state of a MAS.

The formalisation of the modification of $M$ is given below:
For any multi-agent system $M= (Q,Ag,\delta,q_0,\Pi,(\Pi_a)_{a\in Ag},\pi, (Act_a)_{a\in Ag})$,  we compute the multi-agent system $M'$ such that $M'=(Q',Ag,\delta',q'_0,\Pi,(\Pi_a)_{a\in Ag},\pi', (Act'_a)_{a\in Ag})$ with $Q' = Q \times \{0,1\}$, $q'_0 = (q_0,0)$, $\pi'(q,0) = \pi(q)$, $\pi'(q,1) = \pi(q) \cup \{ past_{p_2} \}$, $Act'_a = Act_a \times \{ 0,1 \}, \forall a \in Ag$ and the transition relation defined as: 
 For any transition $q \xrightarrow{(\alpha, \beta)} r$ where $\alpha \in Act_a$ and $\beta = \beta_1, \cdots, \beta_n$ with $\beta_i \in Act_{Ag \setminus \{a\}}$, in $M'$ we have:
 \begin{itemize}
 \item $(q,0) \xrightarrow{((\alpha,0), \beta)} (r,0)$
 \item $(q,1) \xrightarrow{((\alpha,x), \beta)} (r,1)$, $x \in \{0,1\}$
 \item $(q,0) \xrightarrow{((\alpha,1), \beta)} (r,1)$ if $p \in \pi(q)$
 \item $(q,0) \xrightarrow{((\alpha,1), \beta)} (r,0)$ if $p \not \in \pi(q)$
 \end{itemize}

Given a run $\rho = q_0 \xrightarrow{\alpha_1} q_1 \xrightarrow{\alpha_2} ...$ we denote $q_i$ by $\rho[i]$, $i=0,...,|\rho|$ and $\alpha_{i+1}$ by $act(\rho,i)$, $i=0,...,|\rho|-1$. 
We redefine the tree unfolding for the multi player games as being a partial mapping $t_M = (t_M^{node}, t_M^{edge})$ with $t_M^{node}: \Nset \sdpart Q$ , $t_M^{node}(x) = \pi(x[|x|])$ and $t_M^{edge}: \Nset \sdpart \Pi_{a \in Ag} Act_a$, $t_M^{edge}(xi) = \alpha$ if $x \xrightarrow{\alpha} xi$ where $\alpha = (\alpha_1,...,\alpha_k) \in \Pi_{a \in Ag} Act_a$.
In this case, we say that two runs $\rho$ and $\rho'$ are indistinguishable (observationally equivalent) to a coalition $A$ (and note $\rho \sim_A \rho'$) if $|\rho|=|\rho'|$, $act(\rho,i)\restr{A} = act(\rho',i)\restr{A}$ for all $i< |\rho|$, and $\pi_A(\rho[i]) = \pi_A(\rho'[i])$ for all $i \leq |\rho|$.

We define a strategy, as it is defined in \cite{dima-gandalf}, $\sigma$ for a coalition $A$ as any mapping $\sigma : \big( 2^{\Pi_A} \big)^* \sd Act_A$. A strategy $\sigma$ is compatible with a run $\rho = q_0 \xrightarrow{\alpha_1} q_1 \xrightarrow{\alpha_2} ...$ if $\sigma(\pi_A(\rho[0])...\pi_A(\rho[i])) = \alpha_{i+1} \restr{A}$ for all $i \leq |\rho|$. If $\sigma$ is compatible with a run $\rho$, then it is compatible with any run that is indistinguishable from $\rho$ to A.

We also use $[\alpha]p$ to express the fact that for all the successors $xi$ of $x$ for which $t_M^{edge}(xi) = \alpha$ we have that $p \in \pi(t_M^{node}(xi))$.

In order to prove the equivalence between the two problems, we prove that 
for any system $M$ and any $ATL_{iR}$ formula $\phi = \llangle a \rrangle p_1 \UUU p_2$, there exists a system $M'$ as defined below and a formula $\phi' = \mu Z. \bigvee_{\alpha \in Act_a} K_a \Big( p_2 \vee past_{p_2} \vee \big( p_1 \wedge \bigwedge_{\beta \in Act_{Ag\setminus \{a\}}} [\alpha,\beta] Z \big)\Big)$ such that for any run $\rho$ in $M$ and any run $\br \rho$ in $M'$ for which the projection in $M$ is $\rho$, $M,\rho \models \phi$ if and only if $M',\br \rho \models \phi'$.

This construction can be extended to the whole $ATL$ by structural induction on the formula.\\

%Here $D_A$ is the \emph{distributed knowledge} operator that 
%acts as if $A$ represents a new agent whose observability is defined by $\bigcup_{a \in A} \Pi_a$.

%where $[\alpha,\beta]$ represents the modal operator 
%two tuples $\alpha \in Act_A$ and $\beta \in Act_{Ag\setminus A}$, 
%the concatenation of 

%This formula simulates the ATL operator.
%\footnote{This formula could be easier to write in a modal $\mu$-calculus.}  
%The notation $\beta \in Act_{Ag\setminus A}$ stands for 
%tuples of actions.

%The formula above can also be translated to the non-modal framework, 
%it would give the formula 
%%\begin{multline*}
%$$\mu Z. \bigwedge_{a\in A} \bigvee_{\alpha_a\in Act_a} E_A \Big( p_2 \vee 
%\bigwedge_{b \in Ag \setminus A} \bigwedge_{\beta_b\in Act_b}
%\big(AX \bigwedge_{a\in A, b\in B}(\alpha_a \wedge \beta_b) \rightarrow (p_1 \wedge Z)\big)\Big)
%$$
%%\end{multline*}
%The translation is constructed 
%by converting each action name $\alpha\in Act_a$ 
%into an atomic proposition.
%\end{remark}

%\begin{remark}\label{rema:games}
\label{rema:games}
Multi-player games with incomplete information can also be translated into the \muknl. 
Recall briefly that a (synchronous) two-player game is a tuple \\
$G = \Big(Q, Ag, (Act_a)_{a\in Ag}, \delta, Q_0, (Obs_a)_{a\in Ag}, (o_a)_{a\in Ag}, par\Big)$ with 
$Q$ denoting the set of states, $Ag=\{A,B\}$ the set of players, 
%belonging to agent $a$, $Q = \bigcup_{a\in Ag} Q_a$,
$\delta \subseteq Q \times \bigtimes_{a\in Ag} Act_a \times Q$ denoting the transition relation,
$o_a : Q \sd Obs_a$ denoting the observability relation for player $a$ and
$par : Q\sd \Nset$ defining the \emph{parity} of each state.

%Recall briefly that a (turn-based) two-player grame is a tuple 
%$G = \Big(Ag, (Q_a)_{a\in Ag}, (Act_a)_{a\in Ag}, \delta, Q_0, (Obs_a)_{a\in Ag}, (o_a)_{a\in Ag}, par\Big)$ with 
%$Q_a$ denoting the set of states belonging to agent $a$, 
%$Q = \bigcup_{a\in Ag} Q_a$,
%$\delta \subseteq \bigcup_{a\in Ag} Q_a \times Act_a \times Q$ denoting the transition relation,
%$o_a : Q \sd Obs_a$ denoting the observability relation and
%$par : Q\sd \Nset$ defining the \emph{parity} of each state.

%and $W \subseteq Q^\omega$ being a parity winning condition \cite{chatterjee}.

A player $a\in Ag$ plays by choosing a \emph{feasible strategy},
which is a mapping $\sigma : (Obs_a)^* \sd Act_a$.
%A tuple of strategies for agents in a coalition $A \subseteq Ag$ is \emph{winning} 
%when all the runs that are compatible with those strategies satisfy the \emph{parity property}:
%the maximal parity of a state which occurs infinitely often in the run is even.
%Note that the winning condition might be non-observable to the agents in the coalition,
%as it might happen that two states $q_1,q_2 \in Q$ that
%are identically observable, $obs_a(q_1) = obs_a(q_2)$, might have different parities, $par(q_1) \neq par(q_2)$.
A strategy for $a$ is \emph{winning} 
when all the runs that are compatible with that strategy satisfy the property: %\emph{ property}:
the maximal parity of a state which occurs infinitely often in the run is even.
The winning condition might be non-observable to  $a$,
as it might happen that two identically observable states $q_1,q_2 \in Q$ 
% $obs_a(q_1) = obs_a(q_2)$, 
might have different parities.
%, $par(q_1) \neq par(q_2)$.

The set of winning strategies for a player in a multi-player game with imperfect information is then expressible within the \muknl,
similarly to the known encoding of the set of winning strategies in a parity game into the \mucalc 
from e.g. \cite{emerson-jutla91,obdrzalek-thesis}. 
Assuming that the largest parity in $Q$ is even and
the atomic proposition $p_i$ holds exactly in all states with parity $i$,
the following \muknl formula encodes the winning strategies for player $a$:
 \vspace{-1mm}
$$
\nu Z_n \mu Z_{n-1} \ldots \mu Z_1 .
\bigvee_{\alpha \in Act_a} K_a  \bigvee_{i\leq n} \big (p_i \wedge \bigwedge_{\beta \in Act_{Ag\setminus \{a\}}} [ \alpha,\beta]  Z_i\big) 
 \vspace{-1.5mm}
$$

%\end{remark}
%\[
%\nu Z_n \mu Z_{n-1} \ldots \mu_Z_1 \Big( \bigwedge_{i\leq n} (X_i \wedge \bigvee_{\alpha \in Act_a} K_a \bigwedge_{\beta \in Act_b} \alpha,\beta> Z_i) 
%\vee \bigwedge_{i\leq n} (X_i \wedge AX. Z_i) 
% \Big)
%\]

%\[
%\nu Z_n \mu Z_{n-1} \ldots \mu Z_1 .
%\Big( \bigwedge_{i\leq n} (p_i \wedge \bigvee_{\alpha \in Act_A} K_A \bigwedge_{\beta \in Act_B} \langle %\alpha,\beta\rangle Z_i) 
% \Big)
%\]
%where $p_i$ holds exactly in all states with parity $i$.

%\vspace{.5mm}
%\noindent
\subsection{The model-checking problem}

The model-checking problem for the \muknl is the problem of 
deciding, given a MAS $M$ and a closed formula $\phi$, whether $t_M \models \phi$.
%\end{definition}

The undecidability of the model-checking problem for combinations of temporal and epistemic logics
based on a synchronous and perfect recall semantics and containing the common knowledge operator 
\cite{meyden-shilov,van-benthem-pacuit} implies the following  result.

%Due to the possibility to define common knowledge, we have the 
%following undecidability theorem, which translates to the \muknl 
%the undecidability of the model-checking problem for combinations of temporal and epistemic logics
%based on a synchronous and perfect recall semantics and containing the common knowledge operator, 
%see e.g. \cite{meyden-shilov,van-benthem-pacuit}:
\begin{theorem}
The model-checking problem for the \muknl 
is undecidable.
\end{theorem}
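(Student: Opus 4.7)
The plan is to obtain undecidability as a direct consequence of known undecidability results for branching-time (or linear-time) temporal epistemic logics that include a common-knowledge operator, via a semantics-preserving embedding into the \muknl. The candidate source theorems are van der Meyden--Shilov and van Benthem--Pacuit, both of which show undecidability of model-checking for synchronous perfect-recall epistemic logics once a common-knowledge (or iterated ``everybody-knows'') operator for at least two agents is allowed.

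First I would fix a target logic whose model-checking is known undecidable under the synchronous perfect-recall semantics. A natural choice is $CTL$ (or $CTLK$) extended with the binary common-knowledge operator $C_{a,b}$ for two agents $a,b$, interpreted on the tree unfolding $t_M$ of a MAS exactly as in our setting, with the observational equivalences $\sim_a, \sim_b$ induced by $\Pi_a, \Pi_b$. Call this logic $\mathcal{L}$. By Meyden--Shilov (and the corresponding remark in van Benthem--Pacuit), the problem ``given MAS $M$ and closed $\phi\in\mathcal{L}$, does $t_M\models\phi$?'' is undecidable.

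Next I would exhibit an effective translation $(\cdot)^\sharp$ from $\mathcal{L}$-formulas to closed \muknl-formulas such that $t_M\models\phi$ iff $t_M\models\phi^\sharp$. Atomic propositions, boolean connectives, the nexttime operators $AX,EX$, and the individual knowledge operators $K_a,P_a$ are translated as themselves, since they already belong to the \muknl syntax. The $CTL$ until and release operators are handled by their standard fixpoint encodings, $E[\phi_1 U \phi_2] \mapsto \mu Z.\,\phi_2^\sharp \vee (\phi_1^\sharp \wedge EX Z)$, etc. The crucial case is common knowledge: as already observed in the ``Comparison'' subsection above, we use
\[
(C_{a,b}\phi)^\sharp \;=\; \nu Z.\,(\phi^\sharp \wedge K_a Z \wedge K_b Z).
\]
Correctness of this encoding is the only real verification: one has to check that the predicate transformer $S\mapsto \|\phi^\sharp\|\cap K_a(S)\cap K_b(S)$ is monotone over $2^{\supp(t_M)}$ (immediate from monotonicity of $K_a,K_b$ as defined in the tree semantics), and that its greatest fixpoint coincides with the standard inductive semantics of $C_{a,b}\phi$ as the set of nodes $x$ such that every $(\sim_a\cup\sim_b)^*$-reachable node from $x$ satisfies $\phi$. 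This is a routine Knaster--Tarski argument: the greatest fixpoint is the largest set of nodes closed under both $\sim_a$ and $\sim_b$ and contained in $\|\phi^\sharp\|$, which matches the semantics of $C_{a,b}$.

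Finally, structural induction on $\phi$ using Proposition~\ref{prop:tree-iso} and the compositional definition of $\|\cdot\|$ gives $\|\phi\|_{t_M} = \|\phi^\sharp\|_{t_M}$, so $t_M\models\phi$ iff $t_M\models\phi^\sharp$. Since $\phi^\sharp$ is computable from $\phi$ in polynomial time and closed whenever $\phi$ is, a decision procedure for model-checking the \muknl would transfer to $\mathcal{L}$, contradicting its known undecidability. The main technical point, and the only place requiring care, is the fixpoint-versus-inductive equivalence for $C_{a,b}$ under the perfect-recall, tree-based semantics; everything else is bookkeeping.
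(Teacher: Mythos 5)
Your proposal is correct and follows essentially the same route as the paper: the paper's proof is a one-line appeal to the undecidability of model-checking synchronous perfect-recall temporal epistemic logics with common knowledge \cite{meyden-shilov,van-benthem-pacuit}, combined with the observation (made earlier in the paper) that $C_{a,b}\phi$ is expressible as $\nu Z.(\phi \wedge K_a Z \wedge K_b Z)$. You have merely filled in the details of that same reduction, including the Knaster--Tarski verification that the greatest fixpoint matches the inductive semantics of common knowledge, which the paper leaves implicit.
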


The next two sections are dedicated to finding 
a fragment of the \muknl with a decidable model-checking problem.

\section{Revisiting the Decidability of the Model-checking Problem for the Tree Semantics of the plain
$\mu$-calculus}

%\noindent
%{\bf A Finitary Semantics for the \muknl.}

Given a multi-agent system $M = (Q,Ag,\delta,q_0,\Pi,(\Pi_a)_{a\in Ag},\pi)$,
and an agent $a \in Ag$,
we may define the relation $\Gamma_a^M \subseteq Q \times Q$\label{def:Gamma} as follows: 
$(q,r) \in \Gamma_a^M$ if for any run $\rho$ in $M$ ending in $q$ (i.e. $\rho[|\rho|] = q$) there exists a run $\rho'$ ending in $r$ with 
$\rho \sim_a \rho'$. 
Whenever the MAS M is understood from the context, we use the notation $\Gamma_a$ instead of $\Gamma_a^M$.

We now define a second semantics for the \muknl, which works on the \emph{set of states} of a 
MAS $M$.
Each formula $\phi$ which contains variables $Z_1,\ldots, Z_n$
is associated with 
a $Q$-transformer $\lceil\phi\rceil  : \big(2^{Q}\big)^n \sd 2^{Q}$, again by structural induction:
\begin{itemize}
\item $\lceil p\rceil$ resp. $\lceil \neg p \rceil$  
%$: \big(2^{Q}\big)^n \sd 2^{Q}$ 
are the constant $Q$-transformers
$\lceil p\rceil = \{ q \in Q \mid p \in \pi(q) \}$, resp. 
%\item $\lceil \neg p\rceil  : \big(2^{Q}\big)^n \sd 2^{Q}$ is the constant $Q$-transformer
%defined by the set 
$\lceil \neg p \rceil = \{ q \in Q \mid p \not \in \pi(q) \}$.
\item $\lceil Z_i\rceil  : \big(2^{Q}\big)^n \sd 2^{Q}$ is the $i$-th projection, i.e., 
given $S_1,\ldots,S_n \subseteq Q$, 
$\lceil Z_i\rceil  ( S_1,\ldots,S_n)  = S_i$.
\item $\lceil  \phi_1 \vee \phi_2\rceil  = \lceil \phi_1\rceil  \cup \lceil \phi_2\rceil $,
and $\lceil  \phi_1 \wedge \phi_2\rceil  = \lceil \phi_1\rceil  \cap \lceil \phi_2\rceil $.
\item Both nexttime modalities are associated with $Q$-transformers 
$AX^f, EX^f : 2^{Q} \sd 2^{Q}$ defined as:
\begin{align*}
\vspace*{-17pt}
AX^f(S) & = \{ q \in Q \mid \forall r \in Q \text{ if } (q, r) \in \delta 
\text{ then } r \in S \} \\
EX^f(S) & = \{ q \in Q \mid \exists r \in Q \text{ with } (q, r) \in \delta 
\text{ and } r \in S \}
\end{align*}

\vspace{-1.5mm}
\noindent
Then $\lceil  AX\phi\rceil  = AX^f \circ \lceil \phi\rceil $ and, 
similarly, 
$\lceil  EX\phi\rceil  = EX^f \circ \lceil \phi\rceil $,
%\item $\lceil  EX\phi\rceil  = EX_f \circ \lceil \phi\rceil $, where $EX_f : 2^{Q} \sd 2^{Q}$ is the 
%$Q$-transformer defined as 
%\[
%\!\!\!\!\!EX_f(S) = \{ q \in Q \mid \exists r \in Q \text{ with } q\sd r \in \delta 
%\text{ and } r \in S \}
%\]
\item Both epistemic operators are associated with $Q$-transformers
$K_a^{f}, P_a^{f} : 2^{Q} \sd 2^{Q}$ 
defined as:
%$\lceil K_a \phi\rceil  = K_a^{f,Q} \circ \lceil \phi\rceil $ where $K_a^{f,Q} : 2^{Q} \sd 2^{Q}$ 
%is the $Q$-transformer $K_a^{f,M} : 2^Q \sd 2^Q$ defined as:
%\begin{align*}
%K_a^{f,Q}(S) = & \{ q \in Q \mid \forall \rho,\rho' \in \Runs(M) \text{ with } \rho[|\rho|] = q \\
%& \text{ and } |\rho|=|\rho'|, \text{ if } \rho \sim_{\Pi_a} \rho' 
%\text{ then } \rho'[|\rho'|] \in S \} \\
%P_a^{f,Q}(S) = & \{ q \in Q \mid \forall \rho \in \Runs(M) \text{ with } \rho[|\rho|] = q \\
%& \exists \rho' \in \Runs(M)  \text{ with } 
%|\rho|=|\rho'|, \rho \sim_{\Pi_a} \rho' \\
%& \text{ and } \rho'[|\rho'|] \in S \}
%\end{align*}
\vspace*{-5pt}
\begin{align*}
K_a^{f}(S) = \br{\Gamma_a(\br S)} = & \{ q \in Q \mid \forall s \in Q,\text{ if }(s,q)\in \Gamma_a \text{ then } s\in S\} \\
P_a^{f}(S) = \Gamma_a(S) = & \{ q \in Q \mid \exists s \in S \text{ s.t. } (s,q) \in \Gamma_a \} 
\end{align*}

\vspace{-1.5mm}
\noindent
Then $\lceil P_a \phi\rceil  = P_a^{f} \circ \lceil \phi\rceil $ 
and $\lceil K_a \phi\rceil  = K_a^{f} \circ \lceil \phi\rceil $.

%When there is no source of confusion, we use the notation $K_a^f$, resp $P_a^f$, 
%instead of $K_a^{f,Q}$, resp. $P_a^{f,Q}$.

%\item $\lceil P_a \phi\rceil  = P_a^{f,Q} \circ \lceil \phi\rceil $ where $P_a^{f,Q} : 2^{Q} \sd 2^{Q}$ 
%is the $Q$-transformer defined as:
%\begin{align*}
%P_a^{f,Q}(S) = & \{ q \in Q \mid \forall \rho \in \Runs(M) \text{ with } \rho[|\rho|] = q \\
%& \exists \rho' \in \Runs(M)  \text{ with } 
%|\rho|=|\rho'|, \rho \sim_{\Pi_a} \rho' \\
%& \text{ and } \rho'[|\rho'|] \in S \}
%\end{align*}
%Similarly to the previous point, when there is no source of confusion, we use the notation $P_a^f$ instead of $P_a^{f,M}$.

%\begin{align*}
%P_a^f(S) = & \{ q \in Q \mid \exists ? \rho \in \Runs(M) \text{ with } \rho[|\rho|] = q \\
%& \exists \rho' \in \Runs(M)  \text{ with } 
%|\rho|=|\rho'|, \rho \sim_{\Pi_a} \rho' \\
%& \text{ and } \rho'[|\rho'|] \in S \}
%\end{align*}

\item $\lceil \mu Z_i. \phi\rceil  = \lfp^i_{\lceil \phi\rceil }$
and $\lceil \nu Z_i. \phi\rceil  = \gfp^i_{\lceil \phi\rceil }$.
\end{itemize}
\noindent
%{\bf Decidability of model checking for \mucalc:}
The following result
represents a variant of the Finite Model Theorem for \mucalc and 
is proved by structural induction on the formula $\phi$ in 
\cite{bozianu-dima-enea-arxiv}:
%give an alternative proof of 
%the decidability of the model checking problem for the tree-interpreted \mucalc
%%this known result
%without using translations to tree automata or two-player games:

\begin{theorem}\label{teo:plain}
Given a MAS $M = (Q,Ag,\delta,q_0,\Pi,(\Pi_a)_{a\in Ag},\pi)$ 
in which $Q = \{1,\ldots,n\}$ and $q_0=1$, and a (plain) \mucalc formula $\phi$, 
the following diagram commutes:
%\vspace*{-5pt}
%\begin{figure}[ht]
%\normalsize
%\vspace{-3mm}
\begin{equation}
\begin{diagram}[tight,height=20pt,width=4em]
\big(2^Q\big)^n & \rTo^{\lceil \phi \rceil} & 2^Q \\
\dTo^{(t_M^{-1})^n} & & \dTo_{t_M^{-1}} \\
\big(2^{\supp(t)}\big)^n & \rTo^{\|\phi\|} & 2^{\supp(t)} \\
\end{diagram}\label{diagram:plain}  
%\vspace*{-8pt}
\vspace*{-4.5mm}
\end{equation}
\medskip
%\end{figure}
\end{theorem}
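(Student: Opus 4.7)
My plan is to proceed by structural induction on $\phi$, establishing for every $S_1,\ldots,S_n \subseteq Q$ the identity
$$\|\phi\|(t_M^{-1}(S_1),\ldots,t_M^{-1}(S_n)) = t_M^{-1}(\lceil\phi\rceil(S_1,\ldots,S_n)).$$
The atomic and boolean cases are routine: for $\phi=p$ both sides coincide with $\{x\in\supp(t_M)\mid p\in\pi(t_M(x))\}$, for $\phi=Z_i$ both equal $t_M^{-1}(S_i)$, and for $\phi_1\vee\phi_2$ or $\phi_1\wedge\phi_2$ we use that $t_M^{-1}$, being a preimage, commutes with union and intersection.

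For the nexttime cases the crucial observation is that the children of a node $x\in\supp(t_M)$ are in bijection with the $\delta$-successors of $t_M(x)$: by the definition $\supp(t_M)=\{y\in\Nset^*\mid 1y\in\Runs(M)\}$, a word $xi$ lies in $\supp(t_M)$ precisely when $(t_M(x),i)\in\delta$, and then $t_M(xi)=i$. Consequently $AX(t_M^{-1}(T))=t_M^{-1}(AX^f(T))$ and similarly $EX(t_M^{-1}(T))=t_M^{-1}(EX^f(T))$ for every $T\subseteq Q$, which combined with the inductive hypothesis closes these cases.

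The fixpoint case is the technical heart. For $\phi=\mu Z_i.\phi_1$ let $g:2^Q\to 2^Q$ denote the one-variable $Q$-transformer obtained from $\lceil\phi_1\rceil$ by fixing its $j$-th argument ($j\neq i$) to $S_j$, and let $h:2^{\supp(t_M)}\to 2^{\supp(t_M)}$ be obtained analogously from $\|\phi_1\|$ with the $j$-th argument $t_M^{-1}(S_j)$. The inductive hypothesis on $\phi_1$, evaluated with an arbitrary $S\subseteq Q$ placed in the $i$-th slot, gives $h(t_M^{-1}(S))=t_M^{-1}(g(S))$, i.e.\ $h\circ t_M^{-1}=t_M^{-1}\circ g$. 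Because $Q$ is finite, Kleene iteration for $g$ stabilises: $\lfp(g)=g^m(\emptyset)$ for some $m\in\Nset$. A straightforward induction on $k$, with base $h^0(\emptyset)=\emptyset=t_M^{-1}(\emptyset)$, yields $h^k(\emptyset)=t_M^{-1}(g^k(\emptyset))$ for all $k$. Taking $k=m$, and using $g^{m+1}(\emptyset)=g^m(\emptyset)$, one verifies that $t_M^{-1}(\lfp(g))=h^m(\emptyset)$ is already a fixed point of $h$, so $\lfp(h)\subseteq t_M^{-1}(\lfp(g))$; the converse inclusion follows from $h^m(\emptyset)\subseteq\lfp(h)$, which holds for every Kleene iterate by monotonicity. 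The $\nu$-case is dual, iterating $h$ downwards from $\supp(t_M)=t_M^{-1}(Q)$.

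The principal obstacle is precisely this fixpoint step, since the lattice $2^{\supp(t_M)}$ is uncountable and Kleene iteration for a monotone map on such a lattice need not converge in $\omega$ steps in general. The remedy is to exploit finiteness of $Q$: the inductive commutation transports the (necessarily finite) iteration of $g$ to a finite iteration of $h$ whose output is already a fixed point, bypassing any transfinite argument on the tree side.
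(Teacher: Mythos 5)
Your proposal is correct and follows essentially the same route as the paper: structural induction on $\phi$, commutation of the preimage map $t_M^{-1}$ with the boolean and nexttime operators, and, for the fixpoint case, transporting the finite Kleene iteration on the finite lattice $2^Q$ to the tree side via the identity $h\circ t_M^{-1}=t_M^{-1}\circ g$ (the paper's Claim on $\hat\phi^i_{\br S}$ and $\lceil\hat\phi^i_{\br S}\rceil$). Your explicit two-inclusion argument for why the transported iterate is the \emph{least} fixpoint on the uncountable lattice $2^{\supp(t_M)}$ is a welcome clarification of a point the paper leaves terse, but it is the same proof.
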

We also say that the diagram \ref{diagram:plain} holds (or commutes) for the formula $\phi$ 
in the system $M$.

\begin{proof}
We proceed by structural induction
on the formula $\phi$.
%
%Note first that, by construction, for any formula $\phi$, both 
%$\|\phi\|$ and $\lceil \phi \rceil$ are monotonous and commute with least and greatest fixpoints.
%
Note first that the diagram \ref{diagram:plain} holds for the base cases:
\begin{align*}
& t_M^{-1}(\lceil p \rceil) =  \|p\| \qquad t_M^{-1}(\lceil \neg p \rceil) =  \|\neg p\| \\
& t_M^{-1}(\lceil Z_i \rceil (S_1, \ldots,S_n))  = \| Z_i \| (S_1,\ldots,S_n) \text{ for all $S_j \subseteq Q$, $1\leq j\leq n$.}
\end{align*}

The induction step relies on two groups of properties: 
on one side, the commutativity  of $t_M^{-1}$
with finite unions/intersections, and two commutativity
diagrams relating $t_M^{-1}$ with the mappings $AX$/$AX^f$, resp. $EX$/$EX^f$.
The second group of properties
is represented by two characterizations for 
the restrictions of $\lfp^i_{\|\phi\|}$ and $\gfp^i_{\|\phi\|}$ 
on $t_M$-regular sets of nodes of $t_M$.

The first group of properties is summarized in the following identities:
\begin{enumerate}
\item For any two sets $S_1,S_2 \subseteq Q$,
\begin{align*}
t_M^{-1}(S_1 \cup S_2) & = t_M^{-1}(S_1) \cup t_M^{-1}(S_2)\\
t_M^{-1}(S_1 \cap S_2) & = t_M^{-1}(S_1) \cap t_M^{-1}(S_2)
\end{align*}
%\item For any family of sets $\SSS \subseteq 2^Q$, 
%\begin{align*}
%t_M^{-1}(\bigcap \SSS) & = \bigcap \{ t_M^{-1}(S) \mid S \in \SSS \} \\
%t_M^{-1}(\bigcup \SSS) & = \bigcup \{ t_M^{-1}(S) \mid S \in \SSS \}
%\end{align*}
\item For any set $S \subseteq Q$,
\begin{align*}
t_M^{-1}(AX^f(S)) & = AX(t_M^{-1}(S)) \\
t_M^{-1}(EX^f(S)) & = EX(t_M^{-1}(S)) 
\end{align*}
\end{enumerate}

The following property is essential for the induction step involving the fixpoint operators:
\begin{claim}\label{claim:fixpoint}
Suppose $\phi$ is a \mucalc formula for which the commutative diagram \ref{diagram:plain} holds.
Given $\br S \in \big(2^Q\big)^n$ with $\br S = (S_1,\ldots,S_n)$ 
and an index $i\leq n$,
denote $\hat\phi^i_{\br S}$ the function 
\begin{align}
\hat\phi^i_{\br S} & : 2^{\supp(t_M)} \sd 2^{\supp(t_M)}\notag \\
\hat\phi^i_{\br S} (T) & = \|\phi\|(t_M^{-1}(S_1),\ldots,t_M^{-1}(S_{i-1}),T,%\notag
%& \qquad \qquad \qquad \qquad
 t_M^{-1}(S_{i+1}),\ldots,t_M^{-1}(S_n)) 
\end{align}
Also denote 
$\lceil \hat \phi^i_{\br S} \rceil $ the function $\lceil \hat\phi^i_{\br S}\rceil : 2^{Q} \sd 2^{Q}$ with 
\[
\lceil \hat\phi^i_{\br S}\rceil (R) = \lceil \phi\rceil(S_1,\ldots,S_{i-1},R,S_{i+1},\ldots,S_n) 
\]

%Then there exist $k_{\lfp}, k_{\gfp} \in \Nset$ such that 
%\begin{alignat}{3}
%\lfp_{\hat\phi^i_{\br S}} & = %\bigcup_{j\leq k_{\lfp}} 
%(\hat\phi^i_{\br S})^{k_\lfp}(\emptyset) && =
%t_M^{-1} \big(\lceil \hat\phi^i_{\br S}\rceil^{k_\lfp}(\emptyset)\big) \label{lfp}\\
%\gfp_{\hat\phi^i_{\br S}} & = %\bigcap_{j\leq k_{\gfp}} 
%(\hat\phi^i_{\br S})^{k_\gfp}(\supp(t_M)) && =
%t_M^{-1} \big(\lceil \hat\phi^i_{\br S}\rceil^{k_\gfp}(\supp(t_M))\big)\label{gfp}
%\end{alignat}

Then $\lfp_{\hat\phi^i_{\br S}} = %\bigcup_{j\leq k_{\lfp}} 
%(\hat\phi^i_{\br S})^{k_\lfp}(\emptyset) && =
t_M^{-1} (\lfp_{\lceil \hat\phi^i_{\br S}\rceil})$ and  %\label{lfp}$
$\gfp_{\hat\phi^i_{\br S}} = %\bigcap_{j\leq k_{\gfp}} 
%(\hat\phi^i_{\br S})^{k_\gfp}(\supp(t_M)) && =
t_M^{-1} (\gfp_{\lceil \hat\phi^i_{\br S}\rceil}))
%\label{gfp}
$.

%where 
%%$(\hat\phi^i_{\br S})^j$ denotes the $j$-th power of the function $\hat\phi^i_{\br S} $,
%%and 
%the functions $\lceil \hat\phi^i_{\br S}\rceil : 2^Q \sd 2^Q$ are defined 
%in Proposition \ref{prop:fixpoint-Q}
%and $\top = \supp(t_M)$.
\end{claim}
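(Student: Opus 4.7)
The plan is to leverage the induction hypothesis on $\phi$ to transfer the fixpoint computation on the tree lattice $2^{\supp(t_M)}$ to the finite state lattice $2^Q$, and then pull the resulting least/greatest fixed point back via $t_M^{-1}$.

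First I would record the key consequence of diagram \ref{diagram:plain} for $\phi$: for every $R \subseteq Q$,
\[
\hat\phi^i_{\br S}(t_M^{-1}(R)) \;=\; t_M^{-1}\bigl(\lceil\hat\phi^i_{\br S}\rceil(R)\bigr),
\]
obtained by instantiating the commuting square at the tuple $(S_1,\ldots,S_{i-1},R,S_{i+1},\ldots,S_n)$. I would also observe that $t_M^{-1}$ is the componentwise preimage map under the labeling $x \mapsto x[|x|]$, so it preserves arbitrary unions and arbitrary intersections, and in particular $t_M^{-1}(\emptyset) = \emptyset$ and $t_M^{-1}(Q) = \supp(t_M)$.

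Next I would prove by transfinite induction on the ordinal $\alpha$ that the $\alpha$-th Knaster--Tarski iterate of $\hat\phi^i_{\br S}$ starting from $\emptyset$ equals $t_M^{-1}$ applied to the $\alpha$-th iterate of $\lceil\hat\phi^i_{\br S}\rceil$ starting from $\emptyset$. The base case is $\emptyset = t_M^{-1}(\emptyset)$; the successor case is a direct application of the conjugation equation above; the limit case uses that $t_M^{-1}$ commutes with directed unions. Since $Q$ is finite, the iteration on the state side stabilizes at some finite stage $N$ with value $\lfp_{\lceil\hat\phi^i_{\br S}\rceil}$, so the tree-side iterate at stage $N$ equals $t_M^{-1}(\lfp_{\lceil\hat\phi^i_{\br S}\rceil})$; the conjugation equation applied to $\lfp_{\lceil\hat\phi^i_{\br S}\rceil}$ confirms that this set is a fixed point of $\hat\phi^i_{\br S}$. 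Monotonicity of $\hat\phi^i_{\br S}$ finally shows it is the \emph{least} fixed point, since any other fixed point dominates all iterates from $\emptyset$. The statement for $\gfp$ is perfectly dual, starting the iteration at the top element $\supp(t_M) = t_M^{-1}(Q)$ and using that $t_M^{-1}$ preserves intersections.

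The delicate point is that the lattice $2^{\supp(t_M)}$ is strictly larger than the image of $t_M^{-1}$, so a priori the least fixed point of $\hat\phi^i_{\br S}$ could lie outside this image. The transfinite iteration argument circumvents this by exhibiting the fixpoint constructively as a $t_M^{-1}$-image, exploiting both the conjugation identity (which requires the induction hypothesis on $\phi$) and the preservation of joins by $t_M^{-1}$ (which uses only that the labeling is a total function on nodes).
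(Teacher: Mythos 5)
Your proposal is correct and follows essentially the same route as the paper: establish the conjugation identity $\hat\phi^i_{\br S}(t_M^{-1}(R)) = t_M^{-1}(\lceil\hat\phi^i_{\br S}\rceil(R))$ from diagram \ref{diagram:plain}, show by induction that the Kleene iterates from $\emptyset$ (resp.\ from the top element) correspond under $t_M^{-1}$, use finiteness of $Q$ to stabilize the state-side iteration, and transfer the resulting fixpoint back to the tree. The only differences are cosmetic --- the transfinite limit case is never needed precisely because $Q$ is finite, and you take the monotonicity of $\lceil\hat\phi^i_{\br S}\rceil$ as given where the paper proves it by a structural induction on $\phi$, while conversely you are more explicit than the paper about why the transferred set is the \emph{least} fixpoint on the infinite tree lattice.
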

\smallskip

\begin{proof}
We may prove by induction on $j \in \Nset$ that 
\begin{equation}\label{id:hat}
(\hat\phi^i_{\br S})^j(\emptyset) = t_M^{-1} (\lceil \hat\phi^i_{\br S}\rceil^j(\emptyset))
\end{equation}
where the first empty set is an element of $2^{\supp(t_M)}$ whereas the second is an element 
of $2^Q$.

The base case is straightforward, since, for $j=0$, Identity \ref{id:hat} reduces
to $\emptyset = t_M^{-1}(\emptyset)$.

For the induction step we may use the induction hypothesis 
about the commutative diagram \ref{diagram:plain} (applied for producing the third identity below)
to conclude that:
\begin{align*}
(\hat\phi^i_{\br S})^{j+1}&(\emptyset) = 
(\hat\phi^i_{\br S}) ( (\hat\phi^i_{\br S})^j(\emptyset) ) \\
& = (\hat\phi^i_{\br S}) ( t_M^{-1}(\lceil \hat\phi^i_{\br S}\rceil^j(\emptyset) ))\\
& = \|\phi\| (t_M^{-1}(S_1),\ldots,t_M^{-1}(S_{i-1}), t_M^{-1}(\lceil \hat\phi^i_{\br S}\rceil^j(\emptyset) ),
%& \qquad \qquad \qquad \qquad
 t_M^{-1}(S_{i+1}),\ldots,t_M^{-1}(S_n))  \\
& = t_M^{-1} ( \lceil \phi\rceil(S_1,\ldots,S_{i-1}, \lceil \hat\phi^i_{\br S}\rceil^j(\emptyset) ,
%& \qquad \qquad \qquad \qquad 
S_{i+1},\ldots,S_n) ) \\
& = t_M^{-1} ( \lceil \hat\phi^i_{\br S}\rceil  ( \lceil \hat\phi^i_{\br S}\rceil^j (\emptyset) )) \\
& = t_M^{-1} (\lceil \hat\phi^i_{\br S}\rceil^{j+1}(\emptyset))
\end{align*}

%\begin{align*}
%(\hat\phi^i_{\br S})^{j+1}&(\emptyset) = 
%(\hat\phi^i_{\br S})^j ( \hat\phi^i_{\br S}(\emptyset) ) \\
%& = (\hat\phi^i_{\br S})^j ( \|\phi\|(t_M^{-1}(S_1),\ldots,t_M^{-1}(S_{i-1}),\emptyset,\\
%& \qquad \qquad \qquad \qquad t_M^{-1}(S_{i+1}),\ldots,t_M^{-1}(S_n)) ) \\
%& = (\hat\phi^i_{\br S})^j (t_M^{-1} ( \lceil \phi\rceil(S_1,\ldots,S_{i-1},\emptyset,S_{i+1},\ldots,S_n) )) \\
%& = t_M^{-1} ( (\hat\phi^i_{\br S})^j ( \lceil \phi\rceil(S_1,\ldots,S_{i-1},\emptyset,S_{i+1},\ldots,S_n) )) \\
%& = t_M^{-1} (\lceil \hat\phi^i_{\br S}\rceil^{j+1}(\emptyset))
%\end{align*}

We then need to prove that $\Big(\lceil \hat\phi^i_{\br S}\rceil^j(\emptyset)\Big)_{j\geq 0}$
is an increasing sequence of subsets of $Q$. To that end, we will prove that $\lceil \hat \phi^i_{\cdot} \rceil(\cdot)$ is  monotonously increasing in both arguments. That is, $\lceil \hat \phi^i_{\br S_1} \rceil (S') \subseteq \lceil \hat \phi^i_{\br S_2} \rceil (S'')$ 
for all $S' \subseteq S'' \in 2^Q$ and for all $\br S_1 = (S_{11},S_{12},...,S_{1n})$ and $\br S_2 = (S_{21},S_{22},...,S_{2n})$ 
with $S_{1k} \subseteq S_{2k} \text{ for all } 1 \leq k \leq n$.
This can be proved by induction on the structure of $\phi$ as follows:
\begin{enumerate}
	\item For $\phi = p$ or $\phi = \neg p$, the property holds since in this case $\lceil \phi \rceil$ is constant.
	\item For $\phi = Z_r$, $\lceil \phi \rceil$ is the $r$-th projection.
		If $r=i$, then $\lceil \hat \phi^i_{\br S_1} \rceil (S') = S' \subseteq S'' = \lceil \hat \phi^i_{\br S_2} \rceil (S'')$. 
		Otherwise, $\lceil \hat \phi^i_{\br S_1} \rceil (S') = S_{1r} \subseteq S_{2r} = \lceil \hat \phi^i_{\br S_2} \rceil (S'')$.
	
	\item For $\phi = \phi_1 \vee \phi_2$, $\lceil \phi \rceil = \lceil \phi_1 \rceil \cup \lceil \phi_2 \rceil$. 
	Since $\lceil \hat \phi^i_{\br S} \rceil (S') = \lceil \phi \rceil (S_1,...,S_{i-1},S',S_{i+1},...,S_n)$, 
	assuming that the property holds for $\phi_1$ and $\phi_2$, 
	$\lceil \hat \phi^i_{1,\br S_1} \rceil (S') \subseteq \lceil \hat \phi^i_{1,\br S_2} \rceil (S'')$ 
	and $\lceil \hat \phi^i_{2,\br S_1} \rceil (S') \subseteq \lceil \hat \phi^i_{2,\br S_2} \rceil (S'')$. 
	Hence, $\lceil \hat \phi^i_{1,\br S_1} \rceil (S') \cup \lceil \hat \phi^i_{2,\br S_1} \rceil (S') \subseteq \lceil \hat \phi^i_{1,\br S_2} \rceil (S'') \cup \lceil \hat \phi^i_{2,\br S_2} \rceil (S'')$ 
	and therefore $\lceil \hat \phi^i_{\br S_1} \rceil (S') \subseteq \lceil \hat \phi^i_{\br S_2} \rceil (S'')$. \\
	For $\phi = \phi_1 \wedge \phi_2$ the proof is similar.
	
	\item For $\phi = AX \phi_1$, $\lceil \phi \rceil = AX_f \circ \lceil \phi_1 \rceil = \{ q \in Q \mid \forall r \in Q, \text{ if } q \rightarrow r \in \delta \text{ then } r \in \lceil \phi_1 \rceil \}$.
	 That is, the predecessors of nodes in $\lceil \phi_1 \rceil$ that have no successors outside $\lceil \phi_1 \rceil$.\\
	It is easy to see that $AX_f$ is monotonous using the definition: given $S_1, S_2 \in 2^Q$, $S_1 \subseteq S_2$, we have  
	\begin{align}		
	&AX_f(S_1) = \{ q \in Q \mid \forall r \in Q, \text{ if } q \rightarrow r \in \delta \text{ then } r \in S_1 \} \tag*{}
	&\intertext{ and since $S_1 \subseteq S_2$, if $r \in S_1$, then $r \in S_2$. Hence, }
	&AX_f(S_1) \subseteq \{ q \in Q \mid \forall r \in Q, \text{ if } q \rightarrow r \in \delta \text{ then } r \in S_2 \} = AX_f(S_2). \tag*{}
\end{align}
	Then, since $\lceil \hat \phi^i_{1, \br S_1} \rceil (S') \subseteq \lceil \hat \phi^i_{1, \br S_2} \rceil (S'')$, 
	by applying $AX_f$ which is monotonous, we obtain that $AX_f(\lceil \hat \phi^i_{1, \br S_1} \rceil (S')) \subseteq AX_f(\lceil \hat \phi^i_{1, \br S_2} \rceil (S''))$. 
	That is, $\lceil \hat \phi^i_{\br S_1} \rceil (S') \subseteq \lceil \hat \phi^i_{\br S_2} \rceil (S'')$. 
	A similar proof can be given for $\phi = EX\phi_1$.
	
	\item For $\phi = K_a \phi_1$, $\lceil \phi \rceil(S) = K^{f}_a \circ \lceil \phi_1 \rceil(S) = \{ q \in Q \mid \forall s \in Q \text{ with } (s,q) \in \Gamma_a \text{ then } s \in \lceil \phi_1 \rceil(S) \}$. 
	We can prove, as we did in the case of $AX_f$, that $K^{f}_a$ is monotonous and 
	applying it to $\lceil \hat \phi^i_{1, \br S_1} \rceil (S') \subseteq \lceil \hat \phi^i_{1, \br S_2} \rceil (S'')$ 
	from the inductive hypothesis, we obtain that $K^{f}_a ( \lceil \hat \phi^i_{1, \br S_1} \rceil (S')) \subseteq K^{f}_a ( \lceil \hat \phi^i_{1, \br S_2} \rceil (S''))$. 
	That is,$\lceil \hat \phi^i_{\br S_1} \rceil (S') \subseteq \lceil \hat \phi^i_{\br S_1} \rceil (S')$. \\
	For $\phi = P_a \phi_1$ the proof results from the duality of $K_a^f$ and $P_a^f$, i.e., $P_a^f(S) = \br{K_a^{f}(\br S)}$. 
	We have that $\lceil \hat \phi^i_{1, \br S_1} \rceil (S') \subseteq \lceil \hat \phi^i_{1, \br S_2} \rceil (S'')$ and then $\br{\lceil \hat \phi^i_{1, \br S_1} \rceil (S')} \supseteq \br{\lceil \hat \phi^i_{1, \br S_2} \rceil (S'')}$. Applying $K_a^{f}$ to it and then computing the dual set, we have that $\br{K_a^{f}(\br{\lceil \hat \phi^i_{1, \br S_1} \rceil (S')})} \subseteq \br{K_a^{f}(\br{\lceil \hat \phi^i_{1, \br S_2} \rceil (S'')})}$. That is, $P^{f}_a ( \lceil \hat \phi^i_{1, \br S_1} \rceil (S')) \subseteq P^{f}_a ( \lceil \hat \phi^i_{1, \br S_2} \rceil (S''))$.
	
	\item For $\phi = \mu Z_r.\phi_1$, $\lceil \phi \rceil = \lfp_{\lceil \phi_1 \rceil}^r$. 
	We have that \\
	$\lceil \hat \phi^i_{\br S_1} \rceil (S')$ = $\lceil \phi \rceil (S_{1,1},...,S_{1,i-1},S',S_{1,i+1},...,S_{1,n})$ \\ 
	= $\lfp_{\lceil \phi_1 \rceil_r(S_{1,1},...,S_{1,r-1},\cdot,S_{1,r+1},...,S_{1,i-1},S',S_{1,i+1},...,S_{1,n})}$.\\
	From the inductive hypothesis we have that $\lceil \hat \phi^i_{1,\br S_1} \rceil (S') \subseteq \lceil \hat \phi^i_{1,\br S_2} \rceil (S'') $ and then, by applying $\lfp$, we obtain that
	\begin{align*}	 
	&\lfp_{\lceil \phi_1 \rceil_r(S_{1,1},...,S_{1,r-1},\cdot,S_{1,r+1},...,S_{1,i-1},S',S_{1,i+1},...,S_{1,n})} \\
	& \subseteq \lfp_{\lceil \phi_1 \rceil_r(S_{2,1},...,S_{2,r-1},\cdot,S_{2,r+1},...,S_{2,i-1},S',S_{2,i+1},...,S_{2,n})}
	\end{align*}
	and then 
	 $\lceil \hat \phi^i_{\br S_1} \rceil (S') \subseteq \lceil \hat \phi^i_{\br S_1} \rceil (S')$.
\end{enumerate}
	
	We may conclude that $\lceil \hat \phi^i_{\br S} \rceil$ is monotonously increasing. 
	Now, we know that $\emptyset \subseteq \lceil \hat \phi^i_{\br S} \rceil(\emptyset)$. 
	
	By induction, we can prove that 
	$\lceil \hat \phi^i_{\br S} \rceil^j(\emptyset) 
	\subseteq \lceil \hat \phi^i_{\br S} \rceil^{j+1}(\emptyset), \forall j \geq 0$. 
	Hence, $(\lceil \hat \phi^i_{\br S} \rceil^j(\emptyset))_{j \geq 0}$ is an increasing sequence of subsets.

 This sequence stabilizes at a certain integer $k$, 
which is the fixpoint of $\lceil \hat\phi^i_{\br S}\rceil $:
\begin{equation}\label{id:fixpoint-Q}
\lfp_{\lceil \hat\phi^i_{\br S}\rceil} =
\lceil \hat\phi^i_{\br S}\rceil^{k} (\emptyset) = 
\lceil \hat\phi^i_{\br S}\rceil^{k+1}(\emptyset)
\end{equation}

As a consequence of this and of Identity \ref{id:hat}, the fixpoint of 
$\hat\phi^i_{\br S}$ is reached for 
\[
\lfp_{\hat\phi^i_{\br S}}  =
t_M^{-1}\big(\lceil \hat\phi^i_{\br S}\rceil^{k} (\emptyset)\big)
\]
which ends the proof of Claim \ref{claim:fixpoint}
\end{proof}

%We then denote $k_{\lfp}$ the first index for which 

%Hence, the right-hand part of the Identity \ref{lfp} holds.
%A similar argument implies that the right-hand part of the 
%Identity \ref{gfp} holds.

%It remains to prove that the set constructed in Identity \ref{id:hat}
%is indeed the least fixpoint for $\hat\phi^i_{\br S}$.
%Note that we cannot use a continuity argument and the Kleene fixpoint theorem 
%since $\hat\phi^i_{\br S}$ is not necessarily continuous -- see the Remark \ref{rema:noncont} above.

%So let's take a post-fixpoint $T \subseteq \supp(t_M)$ for the function $\hat\phi^i_{\br S}$, that is,
%$T \supseteq (\hat\phi^i_{\br S})(T)$.
%We may then prove, by induction on $j$, that 
%\[
%%t_M^{-1} (\lceil \hat\phi^i_{\br S}\rceil^k(\emptyset)) 
%(\hat\phi^i_{\br S})^j (\emptyset)
%\subseteq T
%\]
%using a monotonicity argument for the function $\|\phi\|$.
%It then follows that 
%$(\hat\phi^i_{\br S})^j (\emptyset)$ is smaller than all post-fixpoints for $\hat\phi^i_{\br S}$.
%Since it is clearly also a fixpoint for this function, the conclusion of the claim follows.

%The choice of $k_\lfp$ and $k_\gfp$ imply that $\hat \phi^i_{\br S}$ satisfies the 
%hypotheses of Lemma \ref{lema:fixpoint},
%which implies the left-hand side of both Identities \ref{lfp} and \ref{gfp}.

%Note that, unlike the proof in Proposition \ref{prop:fixpoint-Q}, we 
%cannot use the Kleene fixpoint theorem 
%since $\hat\phi^i_{\br S}$ is not necessarily continuous -- see the Remark \ref{rema:noncont} above.

Returning to the proof of Theorem \ref{teo:plain},
 the induction step concerning the least fixed point follows easily:
\begin{align*}
\|\lfp^i_\phi\| \circ (t_M^{-1})^n (S_1,\ldots,S_n) & = \| \lfp^i_\phi\|(t_M^{-1}(S_1),...,t_M^{-1}(S_n)) = \lfp^i_{\| \phi \|(t_M^{-1}(S_1),...,t_M^{-1}(S_n))} \\
& = \lfp_{\hat\phi^i_{(S_1,\ldots,S_n)}} 
%& = t_M^{-1} (\lceil \hat\phi^i_{(S_1,\ldots,S_n)}\rceil^{k_\lfp}(\emptyset)) \\
 = t_M^{-1} (\lfp^i_{\lceil \phi\rceil }(S_1,\ldots,S_n))
\end{align*}
where in the last step we utilized the claim above. % \ref{claim:fixpoint}.
%Identity \ref{id:fixpoint-Q}.
A similar proof gives the commutation property for the greatest fixpoint. \hfill$\Box$
\end{proof}

%As a corollary we get the well-known result:

%\begin{corollary}
%The model-checking problem for \mucalc is decidable.
%\end{corollary}
%\begin{proof}
%Theorem \ref{teo:plain}, applied for closed formulas $\phi$,
%says that the set of nodes of the tree $t_M$ in which $\phi$ is satisfied 
%equals  the inverse image of the set of states of $M$ which belong to
%$\lceil \phi \rceil$, computable by structural induction on $\phi$. \hfill$\Box$
%\end{proof}

%%%%%%%%%%%%%%%%%%%%%%%%%%%%%%%%%%%%%%%%%%%%%
%%%%%%%%%%%%%%%%%%%%%%%%%%%%%%%%%%%%%%%%%%%%%
%%%%%%%%%%%%%%%%%%%%%%%%%%%%%%%%%%%%%%%%%%%%%
%%%%%%%%%%%%%%%%%%%%%%%%%%%%%%%%%%%%%%%%%%%%%

%!TEX root = mu-calculus.tex

\section{A Fragment of the $\mu$-calculus of Knowledge with a Decidable Model-Checking Problem}

%Start with introducing two notions.

%We give here our fragment of the \muknl.

In this section, we first introduce some additional notations and notions.
Given a MAS $M$ and two agents $a_1,a_2 \in Ag$,
we say that the two agents \textbf{have compatible observability}
if either $\Pi_{a_1} \subseteq \Pi_{a_2}$ or $\Pi_{a_1} \supseteq \Pi_{a_2}$.
%We also say that a formula $\phi$ is \textbf{standard} if 
%each fixpoint operator applies to a distinct variable.

Given a formula $\phi$, let $T_\phi$ denote the syntactic tree of $\phi$. We also consider that, in $T_\phi$, each node labeled with a \emph{variable} also has a \emph{successor}, labeled with $\top$.
This convention brings the property that each node in $T_\phi$ whose formula
is a variable has a closed subformula (which is $\top$).

The syntactic tree is constructed by structural induction, with 
\begin{itemize}
\item $\supp(T_p) = \{ \epsilon \}$, $T_p(\epsilon) = p$,
\item $\supp(T_{\neg p}) = \{ \epsilon \}$, $T_{\neg p}(\epsilon) = \neg p$,
\item $\supp(Z) = \{ \epsilon, 1 \}$, $T_Z(\epsilon) = Z$, $T_Z(1)= \top$,

\item $\supp(T_{Op \phi_1}) = \{\epsilon\} \cup \{ 1x \mid x \in \supp(\phi_1)\}$, $T_{Op \phi_1}(\epsilon) = Op$, $T_{Op \phi_1}(1x) = T_{\phi_1}(x)$, where $Op \in \{ AX,EX, K_a, P_a, \mu Z, \nu Z \}$

\item $\supp(T_{\phi_1 Op \phi_2}) = \{\epsilon\} \cup \{ 1x \mid x \in \supp(\phi_1)\} \cup \{ 2x \mid x \in supp(\phi_2) \}$, $T_{\phi_1 Op \phi_2}(\epsilon) = Op$, $T_{\phi_1 Op \phi_2}(1x) = T_{\phi_1}(x)$, $T_{\phi_1 Op \phi_2}(2x) = T_{\phi_2}(x)$, $Op \in \{ \wedge, \vee \}$

\end{itemize}

%For reasons that will come up in the proof of the main results of this section,
%The successors of a node $n\in \supp(T_\phi)$ are denoted 
%$s_l(n)$ (for the left successor), resp. $s_r(n)$ (for the right successor)
%which is nonempty only for nodes $n$ associated with the binary boolean operators.
We then denote $form(x)$ the subformula of $\phi$ whose 
syntactic tree is $T_\phi\restr{x}$, i.e. the subtree of $T_\phi$ rooted at $x$, 
and say that $x$ is \textbf{closed} if $form(x)$ is closed.

We then say that an epistemic operator $Op \in \{K_a,P_a\mid a\in Ag \}$ is \textbf{non-closed} at a node $x$ 
in a formula $\phi$ if $form(x)$ is not closed, $Op$ labels a node $y \succeq x$ 
and for all the nodes $y'$ lying on the path between $x$ and $y$
we have that $form(y')$ is not closed.

For each node $x \in \supp(T_\phi)$, we also define 
$AgNCl_\phi(x)$ as being the set of agents $a$ for which $K_a$ or $P_a$ 
is non-closed at $x$.
In addition, given two distinct nodes $x_1 \prec x_2$ with $x_2$ being closed, 
we say that $x_2$ is a \emph{nearest closed successor} of $x_1$ if 
no other closed node lies on the path from $x_1$ to $x_2$.

\begin{definition}
The \textbf{\munonmix }
is the fragment of the \muknl consisting of formulas $\phi$ 
satisfying the following property:
\begin{quote}
Any two agents $a$ and $b$ for which there exist epistemic operators 
$Op_a \in \{K_a,P_a\}$, $Op_b \in \{K_b,P_b\}$ such that both $Op_a$ and $Op_b$ 
are not closed at some node $x$ of $T_\phi$
must have compatible observability, i.e. $\Pi_a \subseteq \Pi_b$ or $\Pi_b\subseteq \Pi_a$.
\end{quote}
\end{definition}

%For each subformula of $\phi$ that contains two (possibly identical) free variables $Z_1,Z_2$,
%if $Z_1$ occurs in the scope of a knowledge operator for agent $a_1$ and $Z_2$ occurs in 
%the scope of the knowledge operator for agent $a_2$, then the two agents have 
%compatible observability relations.

%if both $\psi_1$ and $\psi_2$ are non-closed
%and in each there exists a free variable $Z$ occurring in the scope of 
%an epistemic operator for agent $a_1$, resp. $a_2$, then the two agents are the same.
%For every open subformula $\psi$ of $\phi$, 
%if a variable that is free in $\psi$
%occurs within the scope of two epistemic operators, 
%then the two epistemic operators refer to the same agent.

All formulas of $KB_n$ \cite{HalpernVardi86,HalpernVardi89}, that is, 
$CTL$ with individual knowledge operators, 
%(also denoted $CTLK$ in \cite{dima08clima})
are formulas of the \munonmix.
Other examples of nonmixing formulas are the following ($a$ and $b$ are two agents such that $\Pi_a\subseteq\Pi_b$):
\vspace{-1.5mm}
$$\mu Z_1 . (p \vee K_a( EX .Z_1) \wedge \nu Z_2.(q \wedge Z_1 \wedge K_a (EX Z_2)))  $$
$$\mu Z_1 . (p \vee K_a( EX .Z_1) \wedge \nu Z_2.(q \wedge K_b (EX Z_2))) $$

\vspace{-2mm}
\noindent
Examples of formulas that are not in the \munonmix are ($a$ and $b$ are two agents such that $\Pi_a\not\subseteq\Pi_b$ and $\Pi_b\not\subseteq\Pi_a$):
\vspace{-1.5mm}
$$C_{a,b} \phi = \nu Z . (\phi \wedge K_a Z \vee K_b Z ) $$
$$\mu Z_1 . (p \vee K_a( EX .Z_1) \wedge \nu Z_2.(q \wedge Z_1 \wedge K_b (EX Z_2)))$$

\vspace{-4mm}
\begin{theorem}\label{mck-nonmix}
The model-checking problem for the \munonmix is decidable.
\vspace{-2mm}
\end{theorem}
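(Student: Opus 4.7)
My plan is to prove the theorem by an induction over the structure of $\phi$ that, at each non-closed epistemic context, augments the underlying MAS via a subset construction and then extends the commutative diagram of Theorem \ref{teo:plain} to the transformed system. The crucial observation is that the non-mixing condition ensures that at any node $x$ of $T_\phi$ whose subformula is not closed, the set $AgNCl_\phi(x)$ is totally ordered by $\subseteq$ on the observation alphabets, so it admits a \emph{finest} agent $a_x^\star$ (one whose $\Pi_{a_x^\star}$ contains all the others). I will do a single subset construction with respect to $a_x^\star$ and show that every $K_a$/$P_a$ with $a \in AgNCl_\phi(x)$ becomes a \emph{local} state operation in the constructed MAS.

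First I would perform the induction bottom-up on $T_\phi$, processing closed subformulas recursively: for a closed $\psi$ the interpretation $\lVert\psi\rVert$ only depends on $t_M$ and, by the induction hypothesis, can be computed by finitary model-checking in a suitable finite structure and summarized as a fresh atomic proposition labeling the appropriate states. This lets me focus on a maximal non-closed region $R$ of $T_\phi$ at whose leaves sit either free variables or atomic propositions (including the fresh ones introduced for closed subformulas). Let $A = \bigcup_{x \in R} AgNCl_\phi(x)$ and let $a^\star$ be the finest agent in $A$ (guaranteed unique by the non-mixing restriction applied node by node, then taking unions along a chain). I would then define $\widetilde M$ as the knowledge-based subset construction of $M$ with respect to the observation $\Pi_{a^\star}$: states of $\widetilde M$ are the reachable $\sim_{a^\star}$-equivalence classes of histories, and there is a canonical unfolding morphism $\kappa : \supp(t_M) \to \supp(t_{\widetilde M})$ that preserves labels.

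Next I would establish the central technical lemma: for every $a \in A$, because $\Pi_a \subseteq \Pi_{a^\star}$, the equivalence $\sim_a$ on $\supp(t_M)$ factors through $\kappa$ in the sense that $x \sim_a y$ in $t_M$ if and only if $\kappa(x) \sim_a \kappa(y)$ in $t_{\widetilde M}$, and the corresponding $\Gamma_a^{\widetilde M}$ on the states of $\widetilde M$ computes knowledge/possibility correctly. This is exactly where the compatibility hypothesis pays off, because coarser observations remain well-defined on the finer subset construction while finer ones would not on a coarser one. With this in hand, the tree transformers $K_a$ and $P_a$ restricted to the non-closed region $R$ commute with $\kappa^{-1}$ in exactly the same manner that $AX$ and $EX$ commute with $t_M^{-1}$ in the proof of Theorem \ref{teo:plain}. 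Combined with the commutativity properties of $\kappa^{-1}$ with finite unions, intersections, $AX^f$, $EX^f$, and (by an adaptation of Claim \ref{claim:fixpoint}) the fixpoint operators, I obtain the diagram
\begin{equation*}
\begin{diagram}[tight,height=20pt,width=4em]
\bigl(2^{\widetilde Q}\bigr)^n & \rTo^{\lceil \phi \rceil_{\widetilde M}} & 2^{\widetilde Q} \\
\dTo^{(\kappa^{-1})^n} & & \dTo_{\kappa^{-1}} \\
\bigl(2^{\supp(t_M)}\bigr)^n & \rTo^{\|\phi\|_M} & 2^{\supp(t_M)} \\
\end{diagram}
\end{equation*}
for the non-closed region $R$, and by pasting these diagrams together across nested closed subformulas (each of which has already been handled and summarized by a fresh atomic proposition) I cover the entire formula.

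Decidability then follows because the extended structure $\widetilde M$ is effectively constructible and finite (exponential in $M$), the finitary transformer $\lceil\phi\rceil_{\widetilde M}$ is computable by finite iteration of monotone transformers on $2^{\widetilde Q}$, and one checks $t_M \models \phi$ by verifying that the initial state of $\widetilde M$ (the image of the root under $\kappa$) lies in this set. The hard step I anticipate is the fixpoint case in the inductive construction of the commutative diagram: one must check that the fixpoint iterates over $\widetilde Q$ and over $\supp(t_M)$ stay in correspondence through $\kappa^{-1}$ even in the presence of non-closed epistemic modalities lying under the binder, which is precisely what the non-mixing condition is designed to permit, but formalizing this through a monotonicity argument analogous to Claim \ref{claim:fixpoint} requires some care, since one must argue that the finite stabilization of $\lceil \hat\phi^i_{\bar S}\rceil^j(\emptyset)$ in $\widetilde Q$ transfers to stabilization of $\hat\phi^i_{\bar S}$ under $\kappa^{-1}$.
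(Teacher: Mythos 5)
Your overall architecture (bottom-up evaluation of the syntactic tree, finitary summarization of closed subformulas, a subset construction to restore commutation of the epistemic operators, and a commutative diagram in the style of Theorem~\ref{teo:plain}) is the same as the paper's, but your central technical lemma is false: a \emph{single} subset construction with respect to the finest agent $a^\star$ does not make $K_a/P_a$ commute for the coarser agents $a$ with $\Pi_a\subseteq\Pi_{a^\star}$. By Proposition~\ref{prop:restricted}, commutation of $K_a/K_a^f$ on $\Delta_{a^\star}(M)$ requires $\Delta_{a^\star}(M)$ to be $a$-distinguished, i.e.\ $\Gamma_a$ to be a congruence there, and this can fail. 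Take $M$ with states $q_0,u_1,u_2,u_3,v,w$, labels $\pi(q_0)=\pi(v)=\pi(w)=\emptyset$, $\pi(u_1)=\{p,q\}$, $\pi(u_2)=\{p\}$, $\pi(u_3)=\{q\}$, transitions $q_0\to u_1$, $q_0\to u_2$, $q_0\to u_3$, $u_1\to v$, $u_2\to v$, $u_3\to w$, $v\to v$, $w\to w$, and $\Pi_{a^\star}=\{p,q\}$, $\Pi_a=\{q\}$. In $\Delta_{a^\star}(M)$ the runs through $u_1$ and through $u_2$ both end in the \emph{same} state $(v,\{v\})$, yet the first is $\sim_a$-equivalent to the run through $u_3$ ending in $(w,\{w\})$ while the second is not (their $\Pi_a$-observation sequences are $\emptyset\{q\}\emptyset$ versus $\emptyset\emptyset\emptyset$). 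Hence $\Gamma_a$ on $\Delta_{a^\star}(M)$ relates $(w,\{w\})$ to $(v,\{v\})$ but not conversely, so it is not even symmetric; $\Delta_{a^\star}(M)$ is not $a$-distinguished, and Diagram~\ref{diagram:deterministic} fails for this $a$ (for instance $t^{-1}(P_a^f(\{(w,\{w\})\}))$ contains the node corresponding to $q_0u_2v$ whereas $P_a(t^{-1}(\{(w,\{w\})\}))$ does not). The underlying reason is that a state $(s,S)$ of $\Delta_{a^\star}(M)$ records what $a^\star$ considers possible but not what a coarser $a$ does, so the finitary operator $\Gamma_a$ on $\Delta_{a^\star}(M)$ cannot simulate $K_a$ on the tree.

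The paper avoids this by iterating the construction, one distinction $\Delta_{a_j}$ per agent occurring non-closed at the relevant node, inserted at the correct position in a chain ordered by inclusion of the $\Pi_{a_j}$; the compatibility hypothesis is used through Proposition~\ref{prop:a-b-distinction}, whose direction is the opposite of the one you invoke: applying the distinction for a \emph{coarser} agent $a$ preserves $b$-distinguishedness for a \emph{finer} agent $b$, so each newly inserted $\Delta_a$ keeps all previously secured commutations intact. Your guiding intuition (``coarser observations remain well-defined on the finer subset construction'') is precisely the converse implication, and it is the one that does not hold. To repair your argument you would need to replace $\widetilde M=\Delta_{a^\star}(M)$ by the composite $\Delta_{a_1}(\Delta_{a_2}(\cdots\Delta_{a_k}(M)\cdots))$ with $\Pi_{a_1}\subseteq\cdots\subseteq\Pi_{a_k}$, prove the analogue of Proposition~\ref{prop:a-b-distinction}, and also handle the bookkeeping for where the distinctions are applied relative to the closed subformulas (Propositions~\ref{prop:commut-finite} and~\ref{prop:second-commute}), at which point you have reconstructed the paper's proof.
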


The crux of the proof consists of proving a
commutativity property relating $t_M^{-1}$ with the operators $K_a$/$K_a^f$, resp. $P_a$/$P_a^f$, similar with the 
properties relating  $t_M^{-1}$ with $AX$/$AX^f$, resp. $EX$/$EX^f$.
Unfortunately, this commutativity property does not hold for any 
MAS $M$, as it is shown by the following example. % in the following figure.
%\begin{figure}[ht]
%\begin{center}
%\vspace*{-6pt}
%\input{lics-noncommut1.pstex_t}
%\vspace*{-5pt}
%\caption{A one-agent system with $\Pi_a = \{p_1\}$.}
%\label{fig:noncommut1}
%\vspace*{-10pt}
%\end{center}
%\vspace{-3mm}
%\end{figure}

\begin{figure}[ht]
\begin{minipage}{6cm}
\vspace*{-6pt}
\begin{center}
\vspace*{-10pt}
\begin{tikzpicture}[->,bend angle=25,auto,node distance=.9cm, scale=.6] 

\tikzstyle{nod}=[scale=.7, circle,thick,draw=black,minimum size=10pt,inner sep=1pt]

  \node (n0) at (5,12) {};
  \node[nod] (n1) at (5,11) {1,$p_1$};
  \node[nod] (n2) at (8,11) {2,$p_1$};
  \node[nod] (n3) at (2,11) {3,$p_1$};
	
  \path (n0) edge (n1);
  \path (n1) edge (n2);
  \path (n2) edge (n1);
  \path (n1) edge (n3);
  \path (n3) edge[loop above] (n3);
  
\end{tikzpicture}

(a)
\end{center}
\end{minipage}
\begin{minipage}{6cm}
\begin{center}
\vspace*{-6pt}
\begin{tikzpicture}[->,bend angle=25,auto,node distance=.9cm, scale=.5] 
%  [scale=.8,auto=left,every node/.style={circle,draw}]

\tikzstyle{nod}=[scale=.7, circle,thick,draw=black,minimum size=10pt,inner sep=1pt]

  \node (n0) at (3.5,11) {};
  \node[nod] (n1) at (5,11) {1,$p_1$};
	
  \path (n0) edge (n1);

  \node[nod] (n2) at (3,9.5)  {3,$p_1$};
  \node[nod] (nn11) at (3,8) {3,$p_1$};
  \node[nod] (nn12) at (3,6.5) {3,$p_1$};
  \node[nod] (nn13) at (3,5) {3,$p_1$};
  \node (nn14) at (3,3.5) {...};
  
  %left path
  \path (n1) edge (n2); 
  \path (n2) edge (nn11);
  \path (nn11) edge (nn12);
  \path (nn12) edge (nn13);	
  \path (nn13) edge (nn14);
  
  % right 1
  \node[nod] (n3) at (7,9.5)  {2,$p_1$};
  \node[nod] (n4) at (7,8)  {1,$p_1$};
   
  \path (n1) edge (n3);
  \path (n3) edge (n4);
  
  %left 2
  \node[nod] (nn21) at (5,6.5) {3,$p_1$};
  \node[nod] (nn22) at (5,5) {3,$p_1$};
  \node (nn23) at (5,3.5) {...};
  
  \path (n4) edge (nn21);
  \path (nn21) edge (nn22);
  \path (nn22) edge (nn23);
  
  %right 2
  \node[nod] (n5) at (9,6.5) {2,$p_1$};  
  \node[nod] (n6) at (9,5) {1,$p_1$};
  
  \path (n4) edge (n5);
  \path (n5) edge (n6);
  
  %left 3
  \node[nod] (nn31) at (7,3.5) {3,$p_1$};
  \node[nod] (nn32) at (7,2) {3,$p_1$};
  \node (nn33) at (7,0.5) {...};
  
  \path (n6) edge (nn31);
  \path (nn31) edge (nn32);
  \path (nn32) edge (nn33);
  
  %right 3
  \node[nod] (n7) at (11,3.5) {2,$p_1$};
  \node[nod] (n8) at (11,2) {1,$p_1$};
  \node (n9) at (10,0.5) {...};
  \node (n10) at (12,0.5) {...};
  
  \path (n6) edge (n7);
  \path (n7) edge (n8);
  \path (n8) edge (n9);
  \path (n8) edge (n10);

%  \foreach \from/\to in {n1/n2,n3/n1,n2/n4}
%    \draw (\from) -- (\to);

\end{tikzpicture}
\vspace*{-5pt}
%\vspace*{-10pt}

(b)
\end{center}
\end{minipage}
\vspace{-3mm}
\caption{(a) 
A one-agent system with $\Pi_a = \{p_1\}$, 
(b) The unfolding of the system in Fig.~\ref{fig:noncommut1}(a)
}
\label{fig:noncommut1}
\vspace{-5mm}
\end{figure}
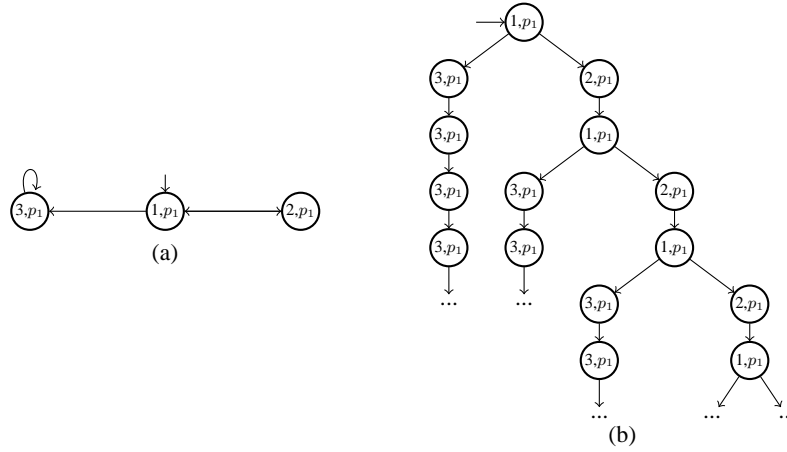

\vspace{-2mm}
\begin{example}
Let $M$ be the one-agent system in Fig.~\ref{fig:noncommut1}. 
If we put $S = \{1,3\}$ then using Figure \ref{fig:noncommut1} b) and the definitions if $K_a/P_a$ and $K_a^f/P_a^f$, we have that
$K_a^f(\{1,3\}) = \br{P_a^f(\{2\})} = \br{\{ 2,3 \}} = \{ 1 \}$.That is, 
 $$t_M^{-1}(K^f_a(S)) = \{ x \in \supp(t) \mid x[|x|] = 1 \}$$
 Similarly,
  \begin{align*}
  K_a(t_M^{-1}(\{ 1,3 \}))& = \br{P_a(t_M^{-1}(\{ 2 \}))}\\
  & = \br{ \{ x \in \supp(t) \mid x[|x|] = 2 \vee (x[|x|] = 3 \wedge |x| \text{ is even}) \} } \\
  & = \{ x \in \supp(t) \mid x[|x|] = 1 \vee (x[|x|] = 3 \wedge |x| \text{ is odd}) \}.
\end{align*}

We can observe that $t_M^{-1}(K^f_a(S))$ 
contains only nodes of $t_M$ labeled with state $1$, whereas
$K_a(t_M^{-1}(S))$ contains more nodes, in particular 
nodes labeled with $3$ occurring on the odd levels of $t_M$.
\end{example}

%In the following, we identify situations 
%when the commutativity property holds.

%The proof is based on the following notion:

%The commutativity property only holds if the original system
%is ``determinized'', in the sense to be developed in the sequel.

\begin{definition}\label{def:in_split}
Given two MASs 
$M_i = (Q_i,Ag,\delta_i,q_0^i,\Pi,(\Pi_a)_{a\in Ag},\pi_i)$  ($i=1,2$) over the same set of atomic propositions,
% $\Pi$,
we say that $M_1$ is an \textbf{in-splitting} of $M_2$ 
%or that $M_2$ is a \textbf{refinement} of $M_1$,
if there exists a pair of surjective mappings $\chi = (\chi_{st},\chi_{tr})$, 
%called the \textbf{in-splitting} 
%or the \textbf{refinement}
with $\chi_{st} : Q_1 \sd Q_2$, $\chi_{tr}: \delta_1\sd \delta_2$ 
satisfying the following properties:
\begin{enumerate}
\item For each $q,r \in Q_1$, $(q,r)\in \delta_1$, $\chi_{tr}((q, r)) = (\chi_{st}(q) , \chi_{st}(r)) \in \delta_2$.
\item For each $q \in Q_1$, $\pi_2(\chi_{st}(q)) = \pi_1(q)$.
\item For each $q \in Q_1$, $\outdegree(\chi_{st}(q)) = \outdegree(q)$, where $\outdegree(q)$ is the number of transitions starting in $q$.
\item $\chi_{st}(q_0^1) = q_0^2$.
\end{enumerate}
The in-splitting is an \textbf{isomorphism} whenever $\chi_{st}$ and $\chi_{tr}$ are bijective.
\end{definition}

%Given a MAS $M_2$, a pair of surjective mappings $\chi = (\chi_{st},\chi_{tr})$ with $\chi_{st} : Q_1 \sd Q_2$, $\chi_{tr}: \delta_1\sd \delta_2$, for some $Q_1$ and $\delta_1$, satisfying the properties in Definition~\ref{def:in_split}, uniquely defines a MAS $M_1 = (Q_1,Ag,\delta_1,q_0^1,\Pi,(\Pi_a)_{a\in Ag},\pi_1)$ which is an in-splitting of $M_2$. 

Further, the pair $\chi = (\chi_{st},\chi_{tr})$ is called an \emph{in-splitting mapping}. 
Also, we may write $\chi:M_1\sd M_2$ to denote the fact that $\chi = (\chi_{st},\chi_{tr})$ is a witness for 
$M_1$ being an in-splitting of $M_2$.

Note that an in-splitting mapping (term borrowed from symbolic dynamics \cite{LindMarcus}) 
represents a surjective functional bisimulation between two transition systems.
The following proposition can be seen as a generalization of this remark 
(proof given in \cite{bozianu-dima-enea-arxiv}):

\begin{proposition}\label{prop:commut-finite}
Consider two MASs 
$M_i = (Q_i,Ag,\delta_i,q_0^i,\Pi,(\Pi_a)_{a\in Ag},\pi_i)$  ($i=1,2$) over the same set of atomic propositions,
%$\Pi$,
%with $Q_1 = \{1,\ldots, n_1\}$ and $Q_2 = \{1,\ldots, n_2\}$,
connected by an in-splitting mapping $\chi = (\chi_{st},\chi_{tr}) : M_1 \sd M_2$.
Then for any plain \mucalc formula $\phi$ 
the following diagram commutes:
\begin{equation}
%\hspace*{100pt}
\begin{diagram}[tight,height=20pt,width=4em]
\big(2^{Q_1}\big)^n & \rTo^{\lceil \phi \rceil_{M_1}} & 2^{Q_1} \\
\uTo^{(\chi_{st}^{-1})^n} & & \uTo_{\chi_{st}^{-1}} \\
\big(2^{Q_2}\big)^n & \rTo^{\lceil\phi\rceil_{M_2}} & 2^{Q_2} \\
\end{diagram} 
\label{diagram:refinement} 
\end{equation}
%Moreover, compositions of in-splittings are in-splittings.
\end{proposition}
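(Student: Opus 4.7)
The plan is to prove the commutativity of diagram~(\ref{diagram:refinement}) by structural induction on $\phi$, in close parallel with the proof of Theorem~\ref{teo:plain}: the unfolding $t_M^{-1}$ there plays the role that $\chi_{st}^{-1}$ plays here, and the properties of $t_M^{-1}$ step by step become properties of the in-splitting. The base cases are immediate: condition~(2) of Definition~\ref{def:in_split} gives $\chi_{st}^{-1}(\lceil p\rceil_{M_2}) = \lceil p\rceil_{M_1}$ (and the dual identity for $\neg p$), the variable case is trivial since the interpretation is a projection, and the Boolean cases follow because the preimage operator $\chi_{st}^{-1}$ commutes with arbitrary unions and intersections.

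The nexttime step is the substantive content of the proposition. Conditions~(1) and~(3) of Definition~\ref{def:in_split}, together with the surjectivity of $\chi_{tr}$, are used to show that for every $q \in Q_1$ the restriction of $\chi_{st}$ to the out-neighbourhood of $q$ in $M_1$ is a bijection onto the out-neighbourhood of $\chi_{st}(q)$ in $M_2$: condition~(1) embeds one into the other, and condition~(3) forces their sizes to coincide. Once this local successor bijection is in place, the identities $\chi_{st}^{-1} \circ AX^f_{M_2} = AX^f_{M_1} \circ \chi_{st}^{-1}$ and $\chi_{st}^{-1} \circ EX^f_{M_2} = EX^f_{M_1} \circ \chi_{st}^{-1}$ follow by a routine unfolding of the definitions.

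For the fixpoint cases I would reproduce the argument of Claim~\ref{claim:fixpoint} verbatim, with $t_M^{-1}$ replaced by $\chi_{st}^{-1}$. Fixing a parameter tuple $\br S = (S_1,\ldots,S_n) \in (2^{Q_2})^n$, the one-place operators obtained from $\lceil \phi_1\rceil_{M_1}$ and $\lceil \phi_1\rceil_{M_2}$ by freezing all arguments except the $i$-th are monotone on the finite lattices $2^{Q_1}$ and $2^{Q_2}$, so their least and greatest fixpoints are reached by Kleene iteration from $\emptyset$ and from $Q_i$ respectively, in finitely many steps. A straightforward induction on the iteration index~$j$, using the structural induction hypothesis at each step, shows that the $j$-th iterate in $M_1$ equals $\chi_{st}^{-1}$ applied to the $j$-th iterate in $M_2$, and stabilization of the two chains yields the desired equality of fixpoints (for the greatest fixpoint one additionally needs $\chi_{st}^{-1}(Q_2) = Q_1$, which holds since $\chi_{st}$ is total). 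The main obstacle is the nexttime step, as this is the only case where the specific in-splitting structure is genuinely needed; the delicate point there is extracting the local bijection of out-neighbourhoods from the abstract conditions of Definition~\ref{def:in_split}, a structural feature that distinguishes genuine in-splittings from arbitrary labelled quotients.
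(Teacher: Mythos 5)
Your proposal is correct and follows essentially the same route as the paper's proof: structural induction with the base and Boolean cases handled by condition~(2) and the compatibility of $\chi_{st}^{-1}$ with unions/intersections, the nexttime case via the local successor correspondence extracted from conditions~(1) and~(3) together with surjectivity, and the fixpoint case by transporting the iteration/characterization of $\lfp$ and $\gfp$ along $\chi_{st}^{-1}$ using its monotonicity. The only cosmetic difference is that you run the fixpoint case by Kleene iteration (as in Claim~\ref{claim:fixpoint}) whereas the paper's proof of this proposition instead pushes $\chi_{st}^{-1}$ through the $\min$-characterization of the least fixpoint; both are sound and interchangeable here.
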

\begin{proof}
Let $S_1,...,S_n \subseteq 2^{Q_2}$. We prove hence by structural induction on the structure of the formula $\phi$ that \[ \lceil \phi \rceil_{M_1}(\chi_{st}^{-1}(S_1,...,S_n)) = \chi_{st}^{-1}(\lceil \phi \rceil_{M_2}(S_1,...,S_n)) \]
\begin{enumerate}
	\item For $\phi = p$ we have
	\begin{align*}
	 &\lceil \phi \rceil_{M_1}(\chi_{st}^{-1}(S_1,...,S_n)) = \{ q \in Q_1 \mid p \in \pi_1(q) \} \\
	 &\chi_{st}^{-1}( \lceil \phi \rceil_{M_2}(S_1,...,S_n)) = \chi_{st}^{-1}(\{ q \in Q_2 \mid p \in \pi_2(q) \})
	 \intertext{Since $\pi_2(\chi_{st}(q)) = \pi_1(q)$ and $\chi_{st}$ is surjective, we can conclude that}
	 &\lceil \phi \rceil_{M_1}(\chi_{st}^{-1}(S_1,...,S_n)) = \chi_{st}^{-1}( \lceil \phi \rceil_{M_2}(S_1,...,S_n)) 
	 \end{align*}	
	The proof is similar for $\phi = \neg p$.
	
	\item For $\phi = Z_i$, $\lceil \phi \rceil$ is the $i$-th projection and hence
	\begin{align*}		
	 \lceil \phi \rceil_{M_1} (\chi_{st}^{-1} &(S_1,...,S_n)) = \chi_{st}^{-1} (S_i)
	  = \chi_{st}^{-1}(\lceil \phi \rceil_{M_2}(S_1,...,S_n)).
	\end{align*}
	
	\item For $\phi = \phi_1 \vee \phi_2$, $\lceil \phi \rceil = \lceil \phi_1 \rceil \cup \lceil \phi_2 \rceil$. Then,
	\begin{align}
	\lceil \phi &\rceil_{M_1}(\chi_{st}^{-1}(S_1,...,S_n)) 
	= \lceil \phi_1 \rceil_{M_1}(\chi_{st}^{-1}(S_1,...,S_n) \cup \lceil \phi_2 \rceil_{M_1}(\chi_{st}^{-1}(S_1,...,S_n) \tag*{}\\
	&= \chi_{st}^{-1}(\lceil \phi_1 \rceil_{M_2}(S_1,...,S_n)) \cup \chi_{st}^{-1}(\lceil \phi_2 \rceil_{M_2}(S_1,...,S_n)) \tag*{by induction}\\
	&= \chi_{st}^{-1}(\lceil \phi \rceil_{M_2}(S_1,...,S_n)) \tag*{}
	\end{align}
	The proof for $\phi = \phi_1 \wedge \phi_2$ is similar.
	
	\item For $\phi = AX \phi_1$, $\lceil \phi \rceil = AX_f \circ \lceil \phi_1 \rceil$.
	\begin{align}
	\lceil \phi \rceil_{M_1}&(\chi_{st}^{-1}(S_1,...,S_n)) = AX_f(\lceil \phi_1 \rceil_{M_1}(\chi_{st}^{-1}(S_1,...,S_n))) \tag*{} \\
	 &= \{ q \in Q_1 \mid \forall r \in Q_1, \text{ if } q \sd r \in \delta_1 \tag*{} \\
	 		& \qquad \qquad	\text{ then } r \in \lceil \phi_1 \rceil_{M_1}(\chi_{st}^{-1}(S_1,...,S_n))\} \tag*{by induction} \\
	 &= \{ q \in Q_1 \mid \forall r \in Q_1, \text{ if } q \sd r \in \delta_1 
	 				\text{ then } r \in \chi_{st}^{-1}(\lceil \phi_1 \rceil_{M_2}(S_1,...,S_n)) \} \tag*{} 
	 \end{align}
	 Further, we want to prove that this set equals to
	 \begin{align*}
	 \chi_{st}^{-1}&(\lceil \phi \rceil_{M_2}(S_1,...,S_n)) = \chi_{st}^{-1}(AX_f(\lceil \phi_1 \rceil_{M_2}(S_1,...,S_n))) \\
	 &= \{ \chi_{st}^{-1}(q') \mid q' \in Q_2 and \forall r' \in Q_2, \text{ if } q' \sd r' \in \delta_2 \text{ then } r' \in \lceil \phi_1 \rceil_{M_2}(S_1,...,S_n)\}
	 \end{align*}
	 We prove it by double inclusion. 
	 Let first take a $q$ in the first set. We have that for all $ r \in Q_1, \text{ if } q \sd r \in \delta_1 \text{ then } r \in \chi_{st}^{-1}(\lceil \phi_1 \rceil_{M_2}(S_1,...,S_n))$. From the surjectivity of $\chi_{st}$ and $\chi_{tr}$ and properties 1) and 3) in Definition 13, there is a $q' \in Q_2$ such that $q \in \chi_{st}^{-1}(q')$ and for all $r'=\chi_{st} (r) \in Q_2$, if $q' \sd r'$ then $r' \in \chi_{st}(\chi_{st}^{-1}(\lceil \phi_1 \rceil_{M_2}(S_1,...,S_n)))$. That is, using again the surjectivity, for all $r'\in Q_2$, if $q' \sd r'$ then $r' \in \lceil \phi_1 \rceil_{M_2}(S_1,...,S_n)$.\\
	  For the inverse inclusion, take $q' \in Q_2$ s.t. for all $r' \in Q_2, \text{ if } q' \sd r' \in \delta_2 \text{ then } r' \in \lceil \phi_1 \rceil_{M_2}(S_1,...,S_n)$. 
	  Again, from the surjectivity of $\chi_{st}$ and $\chi_{tr}$ and properties 1) and 3) in the above definition we have that there exists $q \in Q_1$ s.t. $\chi_{st}(q) = q'$ and for any transition $q' \sd r' \in \delta_2$ we have transition $q \sd r \in \delta_1$ such that $q' \sd r' = \chi_{st}(q) \sd \chi_{st}(r)$. 
	  Since $\chi_{st}$ is surjective and property 3) holds, 
	  if $q' \in Q_2$, $r' \in Q_2$ with $q' \sd r' \in \delta_2$ implies that $r' \in \lceil \phi_1 \rceil_{M_2}(S_1,...,S_n)$, 
	  then there exists $q \in Q_1, q \in \chi_{st}^{-1}(q') \text{ s.t. for all } r = \chi_{st}^{-1} (r') \in Q_1 \text{, if } q \sd r \in \delta_1 \text{ then } r \in \chi_{st}^{-1}(\lceil \phi_1 \rceil_{M_2}(S_1,...,S_n))$. 
	 We can then conclude that $\lceil \phi \rceil_{M_1}(\chi_{st}^{-1}(S_1,...,S_n)) = \chi_{st}^{-1}(\lceil \phi \rceil_{M_2}(S_1,...,S_n))$.

	\item For $\phi = \mu Z_r. \phi_1$, $\lceil \phi \rceil = \lfp^i_{\lceil \phi_1 \rceil}$.
	\begin{align}
	\chi_{st}^{-1}(\lceil \phi \rceil_{M_2}&(S_1,...,S_n)) = \chi_{st}^{-1}(\lfp^i_{\lceil \phi_1 \rceil_{M_2}}(S_1,...,S_n)) \tag*{}\\
	&= \chi_{st}^{-1}(\lfp_{\lceil \phi \rceil_{i,M_2}(S_1,...,S_{i-1}, \cdot, S_{i+1},...,S_n)}) \tag*{} \\
	&= \chi_{st}^{-1}(\min\{ S \mid \lceil \phi_1 \rceil_{i,M_2}(S_1,...,S_{i-1}, S, S_{i+1},...,S_n) = S \}) \tag*{since $\chi_{st}^{-1}$ is monotonous} \\
	&= \min \{ \chi_{st}^{-1}(S) \mid \chi_{st}^{-1}(\lceil \phi_1 \rceil_{1,M_2}(S_1,...,S_{i-1}, S, S_{i+1},...,S_n)) = \chi_{st}^{-1}(S) \} \tag*{} \\
	&= \lfp_{\chi_{st}^{-1}(\lceil \phi_1 \rceil_{i,M_2}(S_1,...,S_{i-1}, \cdot, S_{i+1},...,S_n))} \tag*{from inductive hypothesis} \\
	&= \lfp_{\lceil \phi_1 \rceil_{i,M_1}(\chi_{st}^{-1}(S_1,...,S_{i-1}, \cdot, S_{i+1},...,S_n))} \tag*{}\\
	&= \lfp^i_{\lceil \phi_1 \rceil_{M_1}}(\chi_{st}^{-1}(S_1,...,S_n)) \tag*{} \\
	&= \lceil \phi \rceil_{M_1}(\chi_{st}^{-1}(S_1,...,S_n)) \tag*{}
	\end{align}

\end{enumerate}

\end{proof}

%as noted in the beginning of this section.
%The reason comes from the non-existence for the pairs  $K_a/K_a^f$, resp. $P_a/P_a^f$,
%of a commutative diagram of the type \ref{diagram:refinement} above.
%A simple example can be obtained by modifying the example for Remark \ref{rema:noncommut} above.

%\begin{figure}[ht]
%%\begin{minipage}{6cm}
%\vspace*{-6pt}
%\begin{center}
%\vspace*{-10pt}
%\begin{tikzpicture}[->,bend angle=25,auto,node distance=.9cm, scale=.6] 
%
%\tikzstyle{nod}=[scale=.7, circle,thick,draw=black,minimum size=10pt,inner sep=1pt]
%
%  \node (n0) at (5,12) {};
%  \node[nod] (n1) at (5,11) {1,$p_1$};
%  \node[nod] (n2) at (7,11) {2,$p_1$};
%  \node[nod] (n3) at (3,11) {3,$p_1$};
%  \node[nod] (n4) at (1,11) {4,$p_1$};
%	
%  \path (n0) edge (n1);
%  \path (n1) edge (n2);
%  \path (n2) edge (n1);
%  \path (n1) edge (n3);
%  \path (n3) edge (n4);
%  \path (n4) edge[loop left] (n4);
%  
%\end{tikzpicture}
%\end{center}
%%\end{minipage}
%%\begin{minipage}{6cm}
%%\begin{center}
%%\input{in-splitting.tex}
%%\end{center}
%%\end{minipage}
%\vspace*{-5pt}
%\caption{%(a)
% An in-splitting of the system from Fig.~\ref{fig:noncommut1}
%%; (b)The unfolding of the system from Fig.~\ref{fig:in-splitting}(a)
%}
%\label{fig:in-splitting}
%%\vspace*{-10pt}
%\vspace{-5mm}
%\end{figure}

\begin{rem}\label{rema:commut-finite}
Proposition \ref{prop:commut-finite} does not hold for general \muknl formulas.
To see this, consider the system depicted in Fig.~\ref{fig:in-splitting} (a), which is an
in-splitting of the system from Fig.~\ref{fig:noncommut1} (a),
resulting from splitting state $3$ in two states, denoted $3$ and $4$,
(i.e. $\chi(1) = 1, \chi(2) = 2, \chi(3) = \chi(4) = 3$)
with transitions $(3,4)\in \delta$ and $(4,4) \in \delta$.

\begin{figure}[ht]
\begin{minipage}{6cm}
\vspace*{-6pt}
\begin{center}
\vspace*{-10pt}
\begin{tikzpicture}[->,bend angle=25,auto,node distance=.9cm, scale=.6] 

\tikzstyle{nod}=[scale=.7, circle,thick,draw=black,minimum size=10pt,inner sep=1pt]

  \node (n0) at (5,12) {};
  \node[nod] (n1) at (5,11) {1,$p_1$};
  \node[nod] (n2) at (7,11) {2,$p_1$};
  \node[nod] (n3) at (3,11) {3,$p_1$};
  \node[nod] (n4) at (1,11) {4,$p_1$};
	
  \path (n0) edge (n1);
  \path (n1) edge (n2);
  \path (n2) edge (n1);
  \path (n1) edge (n3);
  \path (n3) edge (n4);
  \path (n4) edge[loop left] (n4);
  
\end{tikzpicture}

(a)
\end{center}
\end{minipage}
\begin{minipage}{6cm}
\begin{center}
\begin{tikzpicture}[->,bend angle=25,auto,node distance=.9cm, scale=.5] 
%  [scale=.8,auto=left,every node/.style={circle,draw}]

\tikzstyle{nod}=[scale=.7, circle,thick,draw=black,minimum size=10pt,inner sep=1pt]

  \node (n0) at (3.5,11) {};
  \node[nod] (n1) at (5,11) {1,$p_1$};
	
  \path (n0) edge (n1);
 
  %left path
  \node[nod] (n2) at (3,9.5)  {3,$p_1$};
  \node[nod] (nn11) at (3,8) {4,$p_1$};
  \node[nod] (nn12) at (3,6.5) {4,$p_1$};
  \node[nod] (nn13) at (3,5) {4,$p_1$};
  \node (nn14) at (3,3.5) {...};
  
  \path (n1) edge (n2); 
  \path (n2) edge (nn11);
  \path (nn11) edge (nn12);
  \path (nn12) edge (nn13);	
  \path (nn13) edge (nn14);
  
  % right 1
  \node[nod] (n3) at (7,9.5)  {2,$p_1$};
  \node[nod] (n4) at (7,8)  {1,$p_1$};
   
  \path (n1) edge (n3);
  \path (n3) edge (n4);
  
  %left 2
  \node[nod] (nn21) at (5,6.5) {3,$p_1$};
  \node[nod] (nn22) at (5,5) {4,$p_1$};
  \node[nod] (nn23) at (5,3.5) {4,$p_1$};
  \node (nn24) at (5,2) {...};
  
  \path (n4) edge (nn21);
  \path (nn21) edge (nn22);
  \path (nn22) edge (nn23);
  \path (nn23) edge (nn24);
  
  %right 2
  \node[nod] (n5) at (9,6.5) {2,$p_1$};  
  \node[nod] (n6) at (9,5) {1,$p_1$};
  
  \path (n4) edge (n5);
  \path (n5) edge (n6);
  
  %left 3
  \node[nod] (nn31) at (7,3.5) {3,$p_1$};
  \node[nod] (nn32) at (7,2) {4,$p_1$};
  \node[nod] (nn33) at (7,0.5) {4,$p_1$};
  \node (nn34) at (7,-1) {...};
  
  \path (n6) edge (nn31);
  \path (nn31) edge (nn32);
  \path (nn32) edge (nn33);
  \path (nn33) edge (nn34);
  
  %right 3
  \node[nod] (n7) at (11,3.5) {2,$p_1$};
  \node[nod] (n8) at (11,2) {1,$p_1$};
  \node (n9)[nod] at (10,0.5) {3,$p_1$};
  \node (n10)[nod] at (12,0.5) {2,$p_1$};
  
  \node (n11) at (10,-1) {...};
  \node (n12) at (12,-1) {...};
  
  \path (n6) edge (n7);
  \path (n7) edge (n8);
  \path (n8) edge (n9);
  \path (n8) edge (n10);
  
  \path (n9) edge (n11);
  \path (n10) edge (n12);
  
\end{tikzpicture}

(b)
\end{center}
\end{minipage}
\vspace*{-5pt}
\caption{(a)
 An in-splitting of the system from Fig.~\ref{fig:noncommut1}
; (b)The unfolding of the system from Fig.~\ref{fig:in-splitting}(a)
}
\label{fig:in-splitting}
%\vspace*{-10pt}
\vspace{-5mm}
\end{figure}
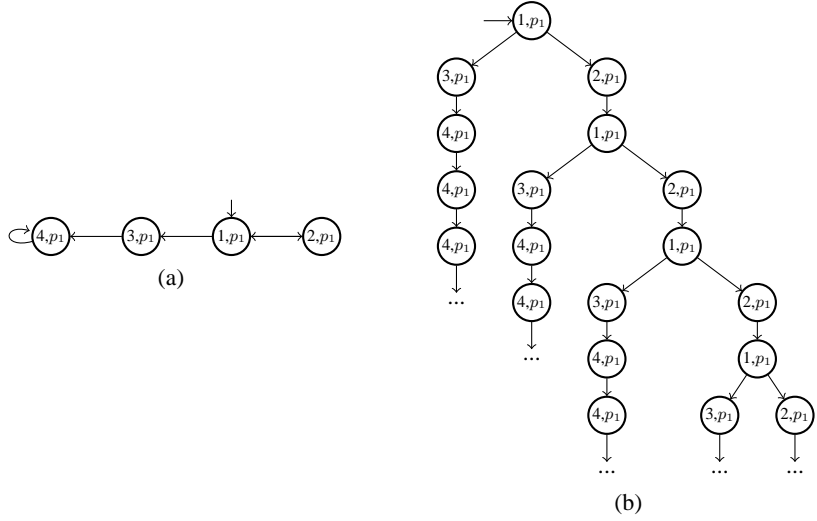

%The initial part of the tree unfolding of the system from Fig.~\ref{fig:in-splitting}(a) is depicted in Fig.~\ref{fig:in-splitting}(b).

Note that $K_a^f(\{1,4\}) = \br{P_a^f(\{2,3\})} = \br{\{ 2,3 \}} = \{ 1,4 \}$ and hence, 
 $$t_M^{-1}(K^f_a(\{1,4\})) = \{ x \in \supp(t) \mid x[|x|] = 1 \vee x[|x|] = 4 \}$$
 On the other hand,
  \begin{align*}
  K_a(t_M^{-1}(\{ 1,4 \}))& = \br{P_a(t_M^{-1}(\{ 2,3 \}))}\\
  & = \br{ \{ x \in \supp(t) \mid x[|x|] = 2 \vee x[|x|] = 3 \vee (x[|x|] = 4 \wedge |x| \text{ is even}) \} } \\
  & = \{ x \in \supp(t) \mid x[|x|] = 1 \vee (x[|x|] = 4 \wedge |x| \text{ is odd}) \}.
\end{align*}

That is, $t_M^{-1}(K_a^f(S))$ 
contains all nodes of $t_M$ labelled with states $1$ or $4$, whereas
$K_a(t_M^{-1}(S))$ contains fewer nodes, in particular 
nodes labelled with $1$  and nodes labelled with $4$ occurring on the odd levels of $t_M$.
% which are different.

%That is, $t_M^{-1}(K_a^f(S))$ 
%contains all nodes of $t_M$ labelled with states $1$ or $4$, whereas
%$K_a(t_M^{-1}(S))$ contains fewer nodes, in particular 
%nodes labelled with $1$  and nodes labelled with $4$ occurring on the odd levels of $t_M$.

%\begin{figure}[ht]
%\begin{center}
%\vspace*{-6pt}
%\input{in-splitting.tex}
%\vspace*{-5pt}
%\caption{The unfolding of system from Figure \ref{fig:noncommut1}}
%\label{fig:unfold-in-splitting}
%\vspace*{-10pt}
%\end{center}
%\vspace{-3mm}
%\end{figure}

%\begin{figure}%
%\input{lics-noncommut2.pstex_t}%
%\caption{}%
%\label{}%
%\end{figure}

%More formally, 
%there exist in-splittings $\chi : M_1 \sd M_2$ 
%satisfying the hypotheses of Proposition \ref{prop:commut-finite},
%and sets of states $S \subseteq Q_2$ such that
%$K_a^{f,Q_1}(\chi_Q^{-1}(S)) \neq \chi_Q^{-1}(K_a^{f,Q_2}(S))$.

% where $K_a^{f,Q_1} : 2^{Q_1} \sd 2^{Q_1}$ 
%and $K_a^{f,Q_2} : 2^{Q_2} \sd 2^{Q_2}$ are the 
\end{rem}

%(the second one is stated in Proposition \ref{prop:restricted} below):

%\begin{proof}
%By straightforward structural induction on the formula $\phi$, along the lines of 
%the proof of Theorem \ref{teo}.
%The only new cases are those implying the mappings $K_a$/$K_a^f$  and $P_a$/$P_a^f$.
%The following properties can be easily proved, solving 
%these extra cases: for each $S \subseteq Q_2$,
%\begin{align*}
%\chi_Q (K_a^f(S)) & = K_a^f(\chi_Q(S)) \\
%\chi_Q (P_a^f(S)) & = P_a^f(\chi_Q(S)) 
%\end{align*}
%\end{proof}

%The following construciton the well-known 
%determinization construction:

%Let us further consider the following relation :
%\begin{definition}

%Given a MAS $M = (Q,Ag,\delta,q_0,\Pi,(\Pi_a)_{a\in Ag},\pi)$,
%and an agent $a \in Ag$,
%we may define the relation $\Gamma_a \subseteq Q \times Q$ as follows: 
%a pair $(q,r)$ is in $\Gamma_a$ if $\pi_a(q) = \pi_a(r)$ and 
%for any run in $M$ ending in $q$ there exists a run $\rho'$ ending in $r$ with the same length and with
%$\rho\sim_a \rho'$. 

%If further $M$ is related to a second system $M'$ via an in-splitting $\chi : M\sd M'$,
%we then say that $M$ is an \textbf{$a$-distinction of $M'$}.

%\end{definition}

The following notion corresponds with the ``determinization'' used for model-checking 
LTLK/CTLK \cite{meyden-shilov,dima08clima} or solving 2-player parity games with one player having incomplete information
\cite{chatterjee-doyen-fullpogames}: 
%\cite{Reif84}:

\begin{definition} \label{def:a-distingtion}
Given a MAS $M = (Q,Ag,\delta,q_0,\Pi,(\Pi_a)_{a\in Ag},\pi)$, we define
%the \textbf{$a$-distinction} of $M$ is the transition system
the multi agent system $\Delta^{pre}_a(M) = (\tilde Q^{pre}, Ag,\tilde \delta,\tilde q_0,\Pi,(\Pi_a)_{a\in Ag},\tilde \pi)$ as follows:
\begin{itemize}
\item $\tilde Q^{pre} = \{(s,S) \mid s\in Q, S \subseteq \{ q \in Q \mid \pi_a(q) = \pi_a(s) \} \}$
and $\tilde q_0 = (q_0, \{q_0\})$.
\item $\tilde \delta$ is composed of all tuples 
of the form $((s,S) , (r,R))$ where $(s, r) \in \delta$ and 
$R = \{r' \in Q \mid \pi_a(r') = \pi_a(r) \text{ and } \exists s' \in S \text{ with } (s' , r') \in \delta\}$.
\item $\tilde \pi(s,S) = \pi(S) = \pi(s)$.
\end{itemize}
The \textbf{$a$-distinction} of $M$, denoted $\Delta_a(M)$, 
is the restriction of $\Delta^{pre}_a(M)$ to reachable states, i.e., 
$\Delta_a(M) = (\tilde Q, Ag,\tilde \delta\mid_{\tilde Q},\tilde q_0,\Pi,
(\Pi_a)_{a\in Ag},\tilde \pi \mid_{\tilde Q})$ where $\tilde Q = \{ \tilde s \in \tilde Q^{pre} \mid \tilde s \text{ is reachable from } \tilde q_0 \}$.
\end{definition}

Given a run $\rho$ in $\Runs(\Delta_a(M))$, we denote $\rho \restr{1}$ the projection of $\rho$ onto its first component.

\begin{lemma}\label{lemma:def_S}
	Given a MAS $M = (Q,Ag,\delta,q_0,\Pi,(\Pi_a)_{a\in Ag},\pi)$, the following two properties hold:
	\begin{enumerate}
	\item For each run $\br \rho$ in $\Delta_a(M)$ ending in $(s,S)$, $$S=\{ r \in Q \mid \exists \rho' \text{ in } M \text{ that ends in } r \text{ with } \rho' \sim_a \br \rho \restr{1} \}$$
	\item For each two runs $\br \rho$, $\br \rho'$ in  $\Delta_a(M)$ with $\br \rho$ ending in $(s,S) \in \tilde Q$, if $ \br \rho' \sim_a \br \rho$, then there exists $r \in Q$ s.t. $\br \rho'$ ends in $(r,S)$.
	\end{enumerate}
\end{lemma}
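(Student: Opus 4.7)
The plan is to prove part (1) by induction on the length of $\bar\rho$, and then derive part (2) from part (1) by leveraging the fact that observations on states of $\Delta_a(M)$ are determined entirely by their first component (since $\tilde\pi(s,S) = \pi(s)$).

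For the base case of part (1), $\bar\rho$ has length $0$, so $\bar\rho = \tilde q_0 = (q_0, \{q_0\})$, and $S = \{q_0\}$. The only run of length $0$ in $M$ is $q_0$, which is trivially $\sim_a$-equivalent to itself, so the claimed characterization of $S$ holds. For the inductive step, write $\bar\rho = \bar\rho_0 \cdot (s,S)$ with $\bar\rho_0$ ending in some $(s_{prev}, S_{prev})$ and $((s_{prev}, S_{prev}), (s,S)) \in \tilde\delta$. By the definition of $\tilde\delta$, $S = \{r \in Q \mid \pi(r)\cap\Pi_a = \pi(s)\cap\Pi_a \text{ and } \exists s' \in S_{prev} \text{ with } (s',r) \in \delta\}$. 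The induction hypothesis characterizes $S_{prev}$ as the set of end-states of runs in $M$ that are $\sim_a$-equivalent to $\bar\rho_0\restr{1}$.

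The two inclusions then follow essentially from the definitions: for $(\subseteq)$, extend a run $\rho''$ witnessing $s' \in S_{prev}$ by $r$ to produce $\rho' = \rho'' \cdot r \sim_a \bar\rho\restr{1}$ (equality of observations holds on the prefix by induction and at the last position by the $\pi_a$-condition in the definition of $\tilde\delta$). For $(\supseteq)$, any run $\rho' = \rho'' \cdot r$ in $M$ ending in $r$ with $\rho' \sim_a \bar\rho\restr{1}$ has $\rho''$ ending in some $s'$ satisfying $\rho'' \sim_a \bar\rho_0\restr{1}$ (so $s' \in S_{prev}$ by induction) and $(s',r) \in \delta$, hence $r \in S$ by the definition of $\tilde\delta$.

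Part (2) then becomes a short corollary. Since $\tilde\pi(s,S) = \pi(s)$, the relation $\bar\rho \sim_a \bar\rho'$ in $\Delta_a(M)$ implies $\bar\rho\restr{1} \sim_a \bar\rho'\restr{1}$ in $M$, and conversely the last states of $\bar\rho'$ agree with those of $\bar\rho$ on $\Pi_a$. Applying part (1) to both $\bar\rho$ and $\bar\rho'$, and using that $\sim_a$ is an equivalence, yields that the second components of the ending states of $\bar\rho$ and $\bar\rho'$ coincide, so $\bar\rho'$ ends in some $(r, S)$ with the same $S$. The main obstacle, if any, is just being careful about the definition of $\sim_a$ on runs and verifying that the $\pi_a$-condition attached to the transition relation $\tilde\delta$ exactly captures the last step needed to preserve observability in the induction; once this is set up, both claims are bookkeeping.
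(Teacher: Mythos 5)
Your proof is correct. For part (1) you follow essentially the same route as the paper: induction on the length of $\br \rho$, using the induction hypothesis to characterize the second component of the penultimate state and the $\pi_a$-condition in the definition of $\tilde \delta$ to handle the last step; the two inclusions you sketch are exactly the bookkeeping the paper performs. For part (2), however, you take a genuinely different and cleaner route. The paper runs a \emph{second} induction on the length of the two runs, propagating the equality of the second components step by step through the transition relation of $\Delta_a(M)$. You instead observe that, because $\tilde \pi(s,S) = \pi(s)$, the relation $\br\rho \sim_a \br\rho'$ in $\Delta_a(M)$ is equivalent to $\br\rho\restr{1} \sim_a \br\rho'\restr{1}$ in $M$, and then simply apply the already-proved characterization of part (1) to both runs: since $\sim_a$ is transitive and symmetric, the two defining conditions $\rho'' \sim_a \br\rho\restr{1}$ and $\rho'' \sim_a \br\rho'\restr{1}$ pick out the same set, so the second components coincide. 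This buys you a shorter argument with no extra induction and makes the logical dependence of (2) on (1) explicit, whereas the paper's version is self-contained at each step but duplicates the inductive machinery. The only point worth stating explicitly in your write-up is the observation you already flag, namely that $\sim_a$ on runs of $\Delta_a(M)$ is computed from $\tilde\pi$, which factors through the first component; once that is recorded, your derivation of (2) is complete.
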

\begin{proof}
	We prove the first property by induction on the length of the path $\rho$.
	It easy to see that property holds when $\br \rho = \tilde q_0 = (q_0, \{q_0\})$. In this case, $\br \rho'$ can be only $\br \rho$.
	
	Suppose now the property holds for any path $\br \rho$ in $\Delta_a(M)$ with $|\br \rho| =n$. Let $\br \rho'$ in $\Delta_a(M)$ with $|\br \rho'|=n+1$ that ends in $(q,S)$. Then exists a path $\br \rho''=\big((q_i,S_i)\big)_{1 \leq i \leq n}$ of length $n$ such that $\br \rho' = \br \rho''\cdot(q,S)$ with $(q_{n+1},S_{n+1}) = (q,S)$.
	From the inductive hypothesis, $S_{n-1} = \{r \in Q \mid \exists \rho \text{ in } M \text{ that ends in } r \text{ with } \rho \sim_a \br \rho'' \restr{1} \}$.\\
	Since $(q_{n-1},S_{n-1}) \sd (q,S) \in \delta$, by definition: 
	$$S=\{ r' \in Q \mid \pi_a(r') = \pi_a(q) \text{ and } \exists s' \in S_{n-1} \text{ with } s' \sd r' \in \delta \}$$
	
	From $s' \in S_{n-1}$ we have that exists $\rho$ in $M$ that ends in $s'$ with $\rho \sim_a \br \rho'' \restr{1}$. Because $s' \sd r' \in \delta$, $\pi_a(r') = \pi_a(q)$ and $\br \rho' = \br \rho'' \cdot (q,S)$, we can conclude that there exists $\rho'$ in $M$, $\rho' = \rho \cdot r'$ that ends in $r'$ with $\rho' \sim_a \br \rho' \restr{1}$. That is,
	$$S=\{ r' \in Q \mid \exists \rho' \text{ in } M \text{ that ends in } r' \text{ with } \rho' \sim_a \br \rho' \restr{1} \}$$
	
	For the second property we also use the induction on the length of paths $\rho$ and $\rho'$. The basic case is similar as above, since $\rho'$ can only be $\rho$.
	
	Suppose the property holds for any $\rho$ and any $\rho'$ of length $n$ with $\rho' \sim_a \rho$ and 
	take $\br \rho$ and $\br \rho' \in \Runs(\Delta_a(M))$ of length $n+1$, with $\br \rho$ that ends in $(s,S)$ and $\br \rho' \sim_a \br \rho$.
	Denote $\rho = \big( (q_i,S_i) \big)_{1 \leq i \leq n}$ and $\rho' = \big( (q'_i,S'_i) \big)_{1 \leq i \leq n}$.
	Because $\br \rho \sim_a \br \rho'$, we have that $\rho \sim_a \rho'$ and then $(q'_n,S'_n) = (q'_n,S_n)$.\\
	From $(q_n,S_n) \sd (s,S)$ we have that
	$$S = \{ r'' \in Q \mid \pi_a(r'')=\pi_a(s) \text{ and } \exists s'' \in S' \text{ with } s'' \sd r'' \in \delta \}$$
	and from $(q'_n,S_n) \sd (r,R)$,
	$$R = \{ r'' \in Q \mid \pi_a(r'')=\pi_a(r) \text{ and } \exists s'' \in S' \text{ with } s'' \sd r'' \in \delta \}$$
	Since $\br \rho \sim_a \br \rho'$, we have that $\pi_a(s) = \pi_a(r)$ and then $S = R$. That is, $\br \rho'$ ends in $(r,S)$.
	
\end{proof}

Given a MAS $M = (Q,Ag,\delta,q_0,\Pi,(\Pi_a)_{a\in Ag},\pi)$, and an agent $a \in Ag$, we say that $M$ is \textbf{$a$-distinguished} if $\Gamma_a$ defined on page \pageref{def:Gamma} is a \textbf{congruence relation}, that is, an equivalence relation with the following property:
\vspace{-1mm}
\begin{equation}
 \text{for any } q,r\in Q, \text{ if } q \Gamma_a r, (q , q') \in \delta, (r , r') \in \delta \text{ and } \pi_a(q') = \pi_a(r'), \text{ then } q' \Gamma_a r'. 
 \vspace{-1.5mm}
\end{equation}

\begin{lemma}\label{lemma:Gamma-exists}
	For a MAS $M = (Q,Ag,\delta,q_0,\Pi,(\Pi_a)_{a\in Ag},\pi)$ and an agent $a \in Ag$ with $\Gamma_a$ a congruence relation, we have that $s \Gamma_a r$ if and only if there exists $\rho$ and $\rho' \in \Runs(M)$ s.t. $\rho$ ends in $s$ and $\rho'$ ends in $r$ with $\rho \sim_a \rho'$.
\end{lemma}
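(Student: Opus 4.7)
The plan is to prove the two directions separately. The forward direction is nearly immediate from the definition of $\Gamma_a$ together with the global reachability assumption, while the backward direction is the one that actually uses the congruence hypothesis, applied inductively along the lengths of the runs.

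For the $(\Rightarrow)$ direction, assume $s \Gamma_a r$. Since by the standing assumption $Q$ contains only reachable states, there exists a run $\rho$ in $M$ ending in $s$. Instantiating the defining property of $\Gamma_a^M$ with this $\rho$ yields a run $\rho'$ ending in $r$ with $\rho \sim_a \rho'$, as required.

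For the $(\Leftarrow)$ direction, suppose we are given runs $\rho = q_0 q_1 \ldots q_n$ with $q_n = s$ and $\rho' = q'_0 q'_1 \ldots q'_n$ with $q'_n = r$ satisfying $\rho \sim_a \rho'$; note that $\sim_a$ forces $|\rho| = |\rho'|$ and that $q_0 = q'_0 = q_0^M$, the initial state. I would then prove by induction on $i \in \{0, \ldots, n\}$ the statement $q_i \Gamma_a q'_i$, which at $i = n$ gives $s \Gamma_a r$. The base case $i = 0$ follows from reflexivity of $\Gamma_a$, which holds because $\Gamma_a$ is assumed to be an equivalence relation (and which can, in any case, be verified directly from the definition by taking $\rho' = \rho$). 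For the inductive step, assume $q_i \Gamma_a q'_i$; since $\rho$ and $\rho'$ are runs we have $(q_i, q_{i+1}), (q'_i, q'_{i+1}) \in \delta$, and since $\rho \sim_a \rho'$ we get $\pi(q_{i+1}) \cap \Pi_a = \pi(q'_{i+1}) \cap \Pi_a$, i.e.\ $\pi_a(q_{i+1}) = \pi_a(q'_{i+1})$. The congruence hypothesis applied to this configuration delivers $q_{i+1} \Gamma_a q'_{i+1}$, closing the induction.

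There is no real obstacle in this argument: the congruence property is precisely what is needed to propagate $\Gamma_a$ one step forward along a pair of $\sim_a$-synchronous transitions, and the remainder is bookkeeping. The only mildly subtle point is to realize that the congruence hypothesis by itself is enough, without any appeal to symmetry or transitivity beyond the trivial use of reflexivity at the initial state.
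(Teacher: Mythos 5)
Your proof is correct and follows essentially the same route as the paper's: the forward direction is immediate from the definition of $\Gamma_a$ plus reachability, and the backward direction is an induction along the two $\sim_a$-equivalent runs, using reflexivity at the initial state and the congruence property to propagate $\Gamma_a$ across each pair of synchronous transitions. Your presentation (inducting on the position $i$ within a fixed pair of runs rather than on the run length with an auxiliary relation $\Gamma_a^n$) is a cleaner rendering of the same argument.
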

\begin{proof}
	For the direct implication the proof follows form the definition of $\Gamma_a$.
	The proof in the other direction is made by induction on the length of the path $\rho$. 
	
	We define $\Gamma_a^n$ with $s \Gamma_a^n r$ if and only if for any run $\rho$ in $M$ ending in $s$ with $|\rho| = n$, there exists a run $\rho'$ ending in $r$ with $\rho \sim_a \rho'$. 
	
	We show by induction that $\Gamma_n$ is a congruence.
	It is easy to see that for the base case, for any $\rho$, $|\rho|=1$, there exists $\rho' = q_0 \sim_a \rho$ since $\rho = q_0$.\\
	Suppose that it holds for $n$ and prove for $n+1$. 
	Take $\rho$ that ends in $s$, with $|\rho | = n+1$ for which there exists  $\rho'$ ending in $r$ s.t. $\rho \sim_a \rho'$. That means that 
	there exists $\br \rho$ that ends in $s'=\rho[|\rho|-1]$, with $|\br \rho | = n$ for which there exists $\br \rho'$ ending in  $r'=\rho'[|\rho'|-1]$ s.t. $\br \rho \sim_a \br \rho'$.
	From the inductive step, we have that $s' \Gamma_a^n r'$. Since $s' \sd s$, $r' \sd r$, $\pi_a(s)=\pi_a(r)$ and $\Gamma_a$ is a congruence, we can conclude that $s \Gamma_a^{n+1} r$ and then $s \Gamma_a r$.
\end{proof}

\begin{lemma}\label{lemma:S=R}
	Given a MAS $M$ and an agent $a \in Ag$, for any two reachable states $(q,S)$ and $(r,R)$ in $\Delta_a(M)$, $(q,S) \Gamma_a (r,R)$ if and only if $S=R$.
\end{lemma}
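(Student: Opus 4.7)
The plan is to prove both directions as direct consequences of Lemma \ref{lemma:def_S}, using the fact that every run in $M$ can be uniquely lifted to a run in $\Delta_a(M)$ by the deterministic construction of the second component.

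For the forward direction, suppose $(q,S)\,\Gamma_a\,(r,R)$. Since $(q,S)$ is reachable in $\Delta_a(M)$, pick any run $\br\rho$ ending at $(q,S)$. By the definition of $\Gamma_a$ there exists a run $\br\rho'$ ending at $(r,R)$ with $\br\rho \sim_a \br\rho'$. Lemma \ref{lemma:def_S}(2) then immediately forces $R=S$.

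For the backward direction, suppose $S=R$. Take an arbitrary run $\br\rho$ in $\Delta_a(M)$ ending at $(q,S)$; I must produce a run $\br\rho'$ ending at $(r,S)$ with $\br\rho \sim_a \br\rho'$. First observe that $r \in S$: since $(r,S)$ is reachable, some run $\br\sigma$ in $\Delta_a(M)$ ends there, and Lemma \ref{lemma:def_S}(1) applied to $\br\sigma$ (with $\br\sigma\restr{1}$ itself as witness) gives $r \in S$. Next, apply Lemma \ref{lemma:def_S}(1) to $\br\rho$: since $r \in S$, there exists a run $\rho''$ in $M$ ending at $r$ with $\rho'' \sim_a \br\rho\restr{1}$. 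Lift $\rho''$ to its unique corresponding run $\br\rho'$ in $\Delta_a(M)$ (the second components are forced by the transition relation of $\Delta_a^{pre}(M)$ starting from $\{q_0\}$). By Lemma \ref{lemma:def_S}(1) applied to $\br\rho'$, the final second component equals $\{r' \mid \exists \sigma$ in $M$ ending at $r'$ with $\sigma \sim_a \rho''\}$, which by transitivity of $\sim_a$ equals $S$. Hence $\br\rho'$ ends at $(r,S)$, and since $\br\rho'\restr{1} = \rho'' \sim_a \br\rho\restr{1}$ (and the labeling $\tilde\pi(s,S)=\pi(s)$ depends only on the first component), we have $\br\rho \sim_a \br\rho'$ as required.

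The only subtlety I anticipate is checking that $r \in S$ for every reachable state $(r,S)$, which is not stated explicitly but is needed to start the construction in the backward direction; this follows from Lemma \ref{lemma:def_S}(1) applied to any witnessing run, as sketched above. The rest of the argument is essentially bookkeeping: translating between runs of $M$ and runs of $\Delta_a(M)$ through the canonical lift, and using transitivity of $\sim_a$ to identify the second components.
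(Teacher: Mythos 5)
Your proof is correct and follows essentially the same route as the paper's: both directions are derived from Lemma \ref{lemma:def_S}, with the forward direction reading off $R=S$ from the characterization of the second component along $\sim_a$-equivalent runs, and the backward direction first establishing $r\in S$ from reachability and then lifting a witnessing run of $M$ back to $\Delta_a(M)$. The only difference is cosmetic: where the paper invokes point 2 of Lemma \ref{lemma:def_S} to identify the endpoint of the lifted run, you make the canonical lift explicit and pin down its second component via point 1 and transitivity of $\sim_a$, which if anything is slightly more careful at the step the paper leaves terse.
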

\begin{proof}
	We use Lemma \ref{lemma:def_S} for the proof.
	
	In the direct sense, if $(q,S)$ is reachable, we have that $S = \{ s \in Q \mid \exists \rho' \text{ in } M, \rho' \sim_a \rho \restr{1} \}$. 
	From $(q,S) \Gamma_a (r,R)$, we have that there exists $\br \rho$ in $M$ ending in $(r,R)$ with $\rho \sim_a \br \rho$ and
%	\begin{align*}	
\[	R = \{ s' \in Q \mid \exists \rho'' \text{ in } M, \rho'' \sim_a \br \rho \restr{1} \} 
	=\{ s' \in Q \mid \exists \rho'' \text{ in } M, \rho'' \sim_a \rho \restr{1} \} = S \]
%	\end{align*}
	
	In the other direction, let take $(q,S)$ and $(r,S)$ two reachable states in $\Delta_a(M)$ then for all $\rho$ that ends in $(q,S)$ we have that
	$$S = \{ s \in Q \mid \exists \rho' \text{ in } M, \rho' \sim_a \rho \restr{1} \}$$
	Since $(r,S)$ is reachable, $r \in S$. Then there exists $\rho' \text{ in } M, \text{ ending in } r \text{ with } \rho' \sim_a \rho \restr{1}$. Then, by the second point of Lemma \ref{lemma:def_S}, there exists $\br \rho' \text{ in } \Delta_a(M) \text{ with } \br \rho' \sim_a \rho$ which ends the proof.
\end{proof}

%Needed ????????????
%\begin{proposition}
%For each transition system $M$, the first projection $prj^M_a : \tilde Q \sd Q$, defined by 
%$prj^M_a(s,S) = s$, is an in-splitting.
%\end{proposition}

%Next we give some basic properties of $a$-distinctions:

%The following are basic properties of $a$-distinctions:
%whose proofs follow easily from the definitions:
\begin{proposition}\label{prop:distinction-in-splitting}
\begin{enumerate}
\item For any MAS $M$, $\Delta_a(M)$ is an in-splitting of $M$.
We denote this in-splitting as $\Delta^{-1}_{a,M} : \Delta_a(M) \sd M$. 
Whenever the MAS $M$ is clear from the context, we use the notation $\Delta^{-1}_a$ instead of $\Delta^{-1}_{a,M}$.
\item For any agent $a \in Ag$ we have that
$\Delta_a(M)$ is $a$-distinguished.
\end{enumerate}
\end{proposition}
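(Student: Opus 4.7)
The plan for Part 1 is to define the in-splitting witnesses by projection onto the first coordinate: $\chi_{st}(s,S) = s$ and $\chi_{tr}\bigl((s,S),(r,R)\bigr) = (s,r)$. I would then verify each of the four conditions of Definition \ref{def:in_split}. Conditions 1, 2, and 4 are immediate from the definitions of $\tilde\delta$, $\tilde\pi$, and $\tilde q_0$ respectively. For condition 3, the key observation is that in the definition of $\tilde\delta$ the successor set $R$ along any transition from $(s,S)$ with first component $r$ is uniquely determined by $s$, $S$ and $r$ via the formula
\[
R = \{r' \in Q \mid \pi_a(r') = \pi_a(r),\ \exists s' \in S,\ (s',r') \in \delta\}.
\]
Hence there is a bijection between outgoing transitions of $(s,S)$ in $\Delta_a(M)$ and outgoing transitions of $s$ in $M$, yielding equal out-degrees and moreover making $\chi_{tr}$ well-defined. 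For surjectivity of $\chi_{st}$ and $\chi_{tr}$, I would invoke the standing assumption that every state of $M$ is reachable: each $s \in Q$ lies on some run of $M$, which lifts inductively to a run in $\Delta_a(M)$ ending in $(s,S)$ for the set $S$ explicitly characterized by Lemma \ref{lemma:def_S}; and any $(s,r) \in \delta$ extends such a run and thus lifts to a transition in $\tilde\delta$.

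For Part 2, I would exploit Lemma \ref{lemma:S=R}, which identifies $\Gamma_a$ on reachable states of $\Delta_a(M)$ with equality of the second coordinate: $(q,S) \Gamma_a (r,R)$ iff $S = R$. From this characterization $\Gamma_a$ is immediately reflexive, symmetric and transitive. For the congruence property, assume $(q,S) \Gamma_a (r,S)$ together with transitions $\bigl((q,S),(q',S')\bigr)$ and $\bigl((r,S),(r',R')\bigr)$ in $\tilde\delta$ such that $\pi_a(q') = \pi_a(r')$. Unpacking the definition of $\tilde\delta$ gives
\[
S' = \{r'' \in Q \mid \pi_a(r'') = \pi_a(q'),\ \exists s' \in S,\ (s',r'') \in \delta\}
\]
and an identical description for $R'$ with $\pi_a(r')$ in place of $\pi_a(q')$. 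Using $\pi_a(q') = \pi_a(r')$ we conclude $S' = R'$, hence $(q',S') \Gamma_a (r',R')$ by Lemma \ref{lemma:S=R} applied in the converse direction.

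I do not expect a genuine obstacle; the proof is essentially careful bookkeeping. The key structural fact powering both parts is the uniqueness of the successor set $R$ in the definition of $\tilde\delta$, which simultaneously justifies that $\chi_{tr}$ is well-defined, supplies the out-degree condition in Part 1, and drives the congruence argument in Part 2 via Lemma \ref{lemma:S=R}.
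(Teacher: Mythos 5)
Your proposal is correct and follows essentially the same route as the paper: the same projection maps $\chi_{st},\chi_{tr}$ with surjectivity from reachability for Part 1, and Lemma \ref{lemma:S=R} combined with the determinism of the second component of $\tilde\delta$ for Part 2. Your direct use of the ``iff'' in Lemma \ref{lemma:S=R} to get the equivalence-relation properties is a slight streamlining of the paper's symmetry argument, but the substance is identical.
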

\begin{proof}
	 For the first property, suppose $\Delta_a(M)$ is an in-splitting of $M$. We define the following mapping $\chi: \Delta_a(M) \sd M$ for any $q',r' \in \tilde Q, q'=(q,S_1) \text{ and } r' = (r,S_2)$ as:
	$$ \chi_{st}(q') = \chi_{st}(q,S_1) = q $$ 
	$$ \chi_{tr}(q' \sd r') = q \sd r $$
		These two mappings satisfy the properties from Definition \ref{def:in_split} since:
		\begin{align}
			 \chi_{tr}(q' \sd r') &= q \sd r = \chi_{st}(q') \sd \chi_{st}(r') \tag*{}\\
			 \pi(\chi_{st}(q')) &= \pi(q) = \tilde \pi(q')  \tag*{by definition \ref{def:a-distingtion}}\\
			 outdeg(\chi_{st}(q')) &= outdeg(q) = outdeg(q') \tag*{by definition \ref{def:a-distingtion}}\\
			 \chi_{st}(q_0') &= \chi_{st}(q_0,{q_0}) = q_0 \tag*{}
		\end{align}
		The surjectivity follows from the definition and the assumption that we work only with MAS in which Q contains only reachable states.
		
	 For the second property, we have to prove that $\Gamma_a$ is a congruence relation over $\Delta_a(M)$. To prove the symmetry, take $(q,S) \Gamma_a (r,R)$ in $\Delta_a(M)$. From Lemma \ref{lemma:S=R}, we have that $R=S$.\\
	Let now, take any path $\br \rho'$ in $\Delta_a(M)$ ending in $(r,S)$. From Lemma \ref{lemma:def_S} we have that 
	$$S = \{ q \in Q \mid \exists \rho \text{ in } M, \text{ ending in } q , \rho \sim_a \br \rho' \restr{1} \}$$
	Since $(q,S)$ is reachable, $q \in S$ and then, there exists $\rho$ in $M$, ending in $q$ such that $\rho \sim_a \br \rho' \restr{1}$. That is,
	there exists $\br \rho \text{ ending in } (q,S') \text{ s.t. } \br \rho \sim_a \br \rho'$
	And, using the second property of Lemma \ref{lemma:def_S}, we get that $S' = S$, i.e., 
	$$\text{ there exists } \br \rho \text{ ending in } (q,S) \text{ s.t. } \br \rho \sim_a \br \rho'$$
	Hence $(r,S) \Gamma_a (q,S)$ which means that $\Gamma_a$ is symmetric.
	Reflexivity and transitivity hold trivially.	
	
	For proving that $\Gamma_a$ is a congruence, note first that,
	from the definition of $\Delta_a(M)$ if $(q,S) \sd (q',S') \in \tilde \delta$, $(r,S)\sd (r',S'') \in \tilde \delta$ and $\pi(q') = \pi(r')$, then $S'=S''$.
	
	Suppose now that $(q,S) \Gamma_a (r,S)$, $(q,S) \sd (q',S') \in \tilde \delta$, $(r,S) \sd (r',S') \in \tilde \delta$ and $\pi(q',S') = \pi(r',S')$. Then, $(q',S')$ and $(r',S')$ are reachable, and using Lemma \ref{lemma:S=R} we may conclude that $(q',S') \Gamma_a (r',S')$ which ends on $\Gamma_a$ is a congruence relation and $\Delta_a(M)$ is $a$-distinguished.
	
\end{proof}

\begin{proposition}\label{prop:a-b-distinction}
	For any MAS $M$ and two agents $a,b \in Ag$ with $\Pi_a \subseteq \Pi_b$, if $M$ is $b$-distinguished, then $\Delta_a(M)$ is $b$-distinguished too.
\end{proposition}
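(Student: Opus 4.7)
The plan is to show that $\Gamma_b^{\Delta_a(M)}$ is an equivalence relation satisfying the congruence condition, by transferring the corresponding properties of $\Gamma_b^M$ through the in-splitting mapping $\Delta^{-1}_{a,M}:\Delta_a(M)\sd M$. The crucial technical observation is a refinement of Lemma \ref{lemma:def_S}: since $\Pi_a\subseteq\Pi_b$, any two runs $\br\rho,\br\rho'$ in $\Delta_a(M)$ with $\br\rho\sim_b\br\rho'$ also satisfy $\br\rho\sim_a\br\rho'$, hence by Lemma \ref{lemma:def_S}(2) they end in states with identical second component. Thus every $\Gamma_b^{\Delta_a(M)}$-related pair has the form $(q,S),(r,S)$, mirroring the statement of Lemma \ref{lemma:S=R}.

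Next I would establish that runs in $M$ starting at $q_0$ lift uniquely to runs in $\Delta_a(M)$: this is immediate from the definition of $\tilde\delta$, which makes the second component of each successor deterministically determined by the current second component and the next first component. Using this unique lift, Lemma \ref{lemma:Gamma-exists} applied to $M$ (which is $b$-distinguished by hypothesis) and the fact that $\sim_b$ coincides on $M$-runs and on their lifts in $\Delta_a(M)$ (labels depend only on the first component) imply that $(q,S)\Gamma_b^{\Delta_a(M)}(r,S')$ iff $q\Gamma_b^M r$ and $S=S'$. Symmetry, reflexivity, and transitivity of $\Gamma_b^{\Delta_a(M)}$ then follow at once from the corresponding properties of $\Gamma_b^M$.

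For the congruence condition, suppose $(q,S)\Gamma_b^{\Delta_a(M)}(r,S)$ and consider edges $(q,S)\sd(q',S')$ and $(r,S)\sd(r',S'')$ in $\tilde\delta$ with $\tilde\pi_b(q',S')=\tilde\pi_b(r',S'')$. Since $\tilde\pi=\pi\circ\chi_{st}$ and $\Pi_a\subseteq\Pi_b$, this equality forces $\pi_a(q')=\pi_a(r')$; inspecting the definition of $\tilde\delta$ then gives $S'=S''$ (both are computed from $S$ via the same $\pi_a$-class). On the $M$-side, $q\Gamma_b^M r$ together with the edges $(q,q'),(r,r')\in\delta$ and $\pi_b(q')=\pi_b(r')$ yields $q'\Gamma_b^M r'$ because $M$ is $b$-distinguished. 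By the correspondence established above, $(q',S')\Gamma_b^{\Delta_a(M)}(r',S')$, as required.

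The main obstacle I anticipate is being rigorous about the correspondence between $\sim_b$ on $\Delta_a(M)$-runs and on their $M$-projections, and correctly invoking Lemma \ref{lemma:def_S}(2) with $\sim_a$ (not $\sim_b$) to justify the equality of second components; this is exactly where the hypothesis $\Pi_a\subseteq\Pi_b$ is used in an essential way, and the symmetric case $\Pi_b\subseteq\Pi_a$ would require a different argument (which is why the statement is asymmetric). Once this is handled, the remaining bookkeeping is routine via the in-splitting mapping.
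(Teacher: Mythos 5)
Your proposal is correct and follows essentially the same route as the paper: both arguments combine Lemma \ref{lemma:def_S} (to force equal second components via $\sim_b\,\subseteq\,\sim_a$, i.e.\ the hypothesis $\Pi_a\subseteq\Pi_b$), Lemma \ref{lemma:Gamma-exists} together with the $b$-distinguishedness of $M$ (to transfer a single pair of $\sim_b$-equivalent runs into $q\,\Gamma_b^M\,r$), and the determinism of the second component of $\tilde\delta$ for the congruence step. Your packaging of the transfer as the explicit equivalence $(q,S)\,\Gamma_b^{\Delta_a(M)}\,(r,S')$ iff $q\,\Gamma_b^M\,r$ and $S=S'$ is a slightly cleaner organization of the same ideas, not a different proof.
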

\begin{proof}
	For $\Delta_a(M)$ to be $b$-distinguished, $\Gamma_b^{\Delta_a(M)}$ has to be a congruence relation.
	To prove the symmetry of $\Gamma_b^{\Delta_a(M)}$, take $(s,S)\Gamma_b^{\Delta_a(M)} (r,R)$. Since $\Pi_a \subseteq \Pi_b$, we have that $(s,S)\Gamma_a (r,R)$ and then, from Lemma \ref{lemma:S=R}, $S=R$.\\
	We then also have that for all $\rho \in \Delta_a(M) \text{ ending in } (s,S), \text{ there exists } \rho' \text{ ending in }\\
	(r,S), \rho \sim_b \rho'$. That is, 
	$$\exists \br \rho \text{ in } M \text{ ending in } s \text{ for which } \exists \br \rho' \text{ ending in } r, \br \rho \sim_b \br \rho'$$
	Since $M$ is $b$-distinguished, $\Gamma_b^M$ must be a congruence relation, and using Lemma \ref{lemma:Gamma-exists} we have that $s \Gamma_b^M r$.\\
	Since $\Gamma_b^M$ is congruence, we have that $r \Gamma_b^M s$. That is,
	$$\forall \br \rho' \text{ ending in } r, \exists \br \rho \text{ ending in } s \text{ s.t. } \br \rho \sim_b \br \rho'$$
	Let $\rho'$ ending in $(r,S)$. By lemma \ref{lemma:def_S}, we have that $S = \{ q \in Q \mid \exists \br \rho \text{ in } M \text{ s.t. } \br \rho \sim_a \rho' \restr{1} \}$.\\
	Also, $(s,S)$ is reachable, hence, $s \in S$, i.e., there exists $\br \rho$ in $M$ s.t. $\br \rho \sim_b \rho' \restr{1}$. Therefore, there must exist
	$\br{ \br \rho}$ in $\Delta_a(M)$ ending in $(s,S')$ s.t. $\br{ \br \rho} \sim_b \rho'$. Then, from the second property of Lemma \ref{lemma:def_S}, we have that $\br{ \br \rho}$ ends in $(s,S)$, and $(r,S)\Gamma_b (s,S)$. That is, $\Gamma_b$ is symmetric.
	
	Next, for proving that $\Gamma_b^{\Delta_a(M)}$ is a congruence, suppose that $(r,S) \Gamma_b (s,S)$, $r \sd r' \in \delta$, $s \sd s' \in \delta$ and $\pi_b(s') = \pi_b(r')$. 
	From $(r,S)\Gamma_b (s,S)$ and the fact that $\Gamma_b^M$ is a congruence relation, applying Lemma \ref{lemma:Gamma-exists} we have that $r \Gamma_b^M s$ and then $r' \Gamma_b^M s'$. 
	In a similar way we write the proof for the symmetry, we prove that $(r',S') \Gamma_b (s',S')$, where $S' = \{ q' \in Q \mid \pi_a(q')=\pi_a(r') \text{ and } \exists q \in S, q \sd q' \in \delta \}$.
		
\end{proof}

The close relationship between 
the relation $\Gamma_a$ and the epistemic operators is resumed by the following proposition:
\begin{proposition}\label{prop:restricted}
For any MAS $M$, the following diagram commutes iff $M$ is $a$-distinguished:
\vspace{-3mm}
\begin{equation}
%\hspace*{100pt}
\begin{diagram}[tight,height=20pt,width=4em]
2^Q & \rTo^{K_a^f} & 2^Q \\
\dTo^{t_M^{-1}} & & \dTo_{{t_M}^{-1}}  \\
2^{\supp(t_M)} & \rTo^{K_a} & 2^{\supp(t_M)} \\
\end{diagram} 
\label{diagram:deterministic} 
\end{equation} 
%where $\tilde Q$ is the set of states of $\delta_a(M)$.

The same holds if the pair $K_a/K_a^f$ is replaced with 
$P_a/P_a^f$.
\end{proposition}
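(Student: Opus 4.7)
The approach is to first reduce diagram~(\ref{diagram:deterministic}) to a pointwise set equality, and then analyze when that equality holds. For a node $x \in \supp(t_M)$, let $\rho_x$ denote the run of $M$ corresponding to the root-to-$x$ path, and define
\[
A_q = \{s \in Q \mid (s,q) \in \Gamma_a\}, \qquad B_x = \{t_M(y) \mid y \in \supp(t_M),\ y \sim_a x\}.
\]
Unfolding the definitions shows $x \in t_M^{-1}(K_a^f(S))$ iff $A_{t_M(x)} \subseteq S$, while $x \in K_a(t_M^{-1}(S))$ iff $B_x \subseteq S$; since $S$ is arbitrary, the diagram commutes iff $A_{t_M(x)} = B_x$ for every $x$. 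The same equality governs commutativity for $P_a/P_a^f$, since there the corresponding conditions become $A_{t_M(x)} \cap S \neq \emptyset$ versus $B_x \cap S \neq \emptyset$.

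For the backward direction, assume $M$ is $a$-distinguished. By Lemma~\ref{lemma:Gamma-exists}, $s\,\Gamma_a\,r$ iff there exist runs $\rho,\rho'$ ending at $s,r$ respectively with $\rho \sim_a \rho'$, and $\Gamma_a$ is symmetric. The inclusion $B_x \subseteq A_{t_M(x)}$ is then immediate: a witness $y$ with $t_M(y) = r$ and $y \sim_a x$ gives $\rho_y \sim_a \rho_x$, hence $r\,\Gamma_a\,t_M(x)$. For $A_{t_M(x)} \subseteq B_x$, take $r \in A_{t_M(x)}$; by symmetry $t_M(x)\,\Gamma_a\,r$, so applying the (now-universal) definition of $\Gamma_a$ to the specific run $\rho_x$ yields an $a$-equivalent run ending at $r$, whose tree node $y$ then satisfies $t_M(y) = r$ and $y \sim_a x$.

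For the forward direction, assume the diagram commutes. The properties of reflexivity, transitivity, and the compatibility condition in the definition of $a$-distinguished all follow directly from the definition of $\Gamma_a$, without using commutativity: reflexivity is trivial, transitivity chains the witness runs, and compatibility extends a witness pair $\rho_0 \sim_a \rho'_0$ ending at $q,r$ by the matching successors $q',r'$, which is valid because $\pi_a(q') = \pi_a(r')$. The only nontrivial property is symmetry, and here the diagram equality is essential: given $q\,\Gamma_a\,r$ and an arbitrary run $\rho$ ending at $r$, identified with a tree node $y$ satisfying $t_M(y) = r$, we have $q \in A_r = B_y$ by commutativity, so there is an $x$ with $t_M(x) = q$ and $x \sim_a y$; the run $\rho_x$ is then the required $a$-equivalent run ending at $q$, yielding $r\,\Gamma_a\,q$.

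The main obstacle is the asymmetry between the universally quantified definition of $\Gamma_a$ and the symmetric, existential flavor of the tree relation $\sim_a$. Diagram commutativity is precisely the extra ingredient that turns an ``$\forall$-$\exists$'' witnessing obligation into a plain existence statement, thereby forcing $\Gamma_a$ to be symmetric; once symmetry holds, Lemma~\ref{lemma:Gamma-exists} collapses the universal definition of $\Gamma_a$ to an existential one, which reconciles the two sides of the diagram and also handles the $P_a/P_a^f$ case through the reduction of the first paragraph.
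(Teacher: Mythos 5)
Your reduction of the diagram to the pointwise identity $A_{t_M(x)} = B_x$ is correct and is a cleaner organization than the paper's (which proves the two inclusions of $t_M^{-1}(P_a^f(S)) = P_a(t_M^{-1}(S))$ separately, the harder one by an induction on run length via auxiliary relations $\Gamma_a^n$); your backward direction, resting on Lemma~\ref{lemma:Gamma-exists} and symmetry, is also sound. However, the forward direction has a genuine gap: you claim that the compatibility condition (if $q\,\Gamma_a\,r$, $(q,q')\in\delta$, $(r,r')\in\delta$ and $\pi_a(q')=\pi_a(r')$, then $q'\,\Gamma_a\,r'$) ``follows directly from the definition of $\Gamma_a$, without using commutativity,'' by extending a witness pair of equivalent runs. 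This is false. Because $\Gamma_a$ is \emph{universally} quantified over runs ending in the first component, extending one witness pair only shows that \emph{some} run ending in $q'$ has an $a$-equivalent run ending in $r'$ --- but a run ending in $q'$ need not pass through $q$ at its penultimate step, so the witnessing obligation for such runs is not discharged. The paper's own Example (Fig.~\ref{fig:noncommut1}) is a counterexample to your claim: there $2\,\Gamma_a\,3$, $(2,1)\in\delta$, $(3,3)\in\delta$ and all states carry the same observation, yet $1\,\Gamma_a\,3$ fails (the length-one run ending in $1$ has no $a$-equivalent run ending in $3$). This is exactly the same $\forall$-vs-$\exists$ asymmetry that you correctly identify as the obstacle for symmetry in your closing paragraph; it afflicts compatibility just as much.

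The repair is to use commutativity for compatibility as well, as the paper does. In your notation: commutativity forces $B_z = A_{t_M(z)}$, so any two nodes with the same label have the same set $B$ of labels of $a$-equivalent nodes. Given $q\,\Gamma_a\,r$ and the hypotheses of compatibility, pick one node $x$ labelled $q$; since $r \in A_q = B_x$ (using the symmetry you already derived), there is $y \sim_a x$ labelled $r$; extending $x$ by a child labelled $q'$ and $y$ by a child labelled $r'$ produces equivalent nodes $z \sim_a w$, so $r' \in B_z = A_{q'}$, and then $r' \in B_{z'}$ for \emph{every} node $z'$ labelled $q'$ --- which is precisely $q'\,\Gamma_a\,r'$ under the universal definition. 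Without this step (or an equivalent one), your forward direction only establishes that $\Gamma_a$ is an equivalence relation, not that $M$ is $a$-distinguished.
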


\begin{proof}
 Suppose first that the diagram holds and prove that $\Gamma_a$ is a congruence.
	
	For reflexivity the proof is straightforward and the transitivity results from the transitivity of $\sim_a$.
	
	To prove the symmetry, take $q \Gamma_a r$.
	From $q \Gamma_a r$ we have that $t_M^{-1}(\{r\}) \subseteq P_a(t_M^{-1}(\{q\}))$. \\
	
	By the diagram commutativity for $P_a/P_a^f$, we have that $P_a(t_M^{-1}(\{q\})) = t_M^{-1}(P_a^f(\{q\}))$ and then $t_M^{-1}(\{r\}) \subseteq t_M^{-1}(P_a^f(\{q\}))$. Applying $t_M$, we obtain $r \in P_a^f({q})$. That is $r \Gamma_a q$. \\
	Now, if the diagram commutes for the pair $K_a/K_a^f$, recall first that for all $ S \subseteq 2^{\supp(t_M)}, P_a (S) = \br{K_a(\br{S})}$ and $P_a^f (S) = \br{K_a^f(\br{S})} ,\forall S \subseteq 2^Q$.\\
	Thereby we have $P_a(t_M^{-1}(S)) = \br{K_a(\br{t_M^{-1}(S)})} = \br{K_a(t_M^{-1}(\br S))} = \br{t_M^{-1}(K_a^f(\br S))} = t_M^{-1}(\br{K_a^f(\br S)}) = t_M^{-1}(P_a^f(S))$. 
	That is, the diagram commutes for $P_a/P_a^f$ too and we can proceed as above for proving symmetry.
	
	If now the diagram commutes, for proving that $\Gamma_a$ is a congruence, take $q \Gamma_a r$ and $q \sd q' \in \delta$ and $r \sd r' \in \delta$ and $\pi_a(q') = \pi_a(r')$. \\
	From $q \Gamma_a r$ we have that $r \in P_a^f(q)$ and then
	$t_M^{-1}(r) \subseteq t_M^{-1}(P_a^f(q)) = P_a(t_M^{-1}(q))$. We get that 
	there exists $x \in \supp(t_M), x[|x|] = r \text{ and } x \in P_a(t_M^{-1}(q))$ 
	and therefore there exists $y \in \supp(t_M) \text{ with } x \sim_a y \text{ and } y[|y|] = q$.\\
	Because $x \sim_a y$, $x[|x|] = r \sd r'$ and $y[|y|] = q \sd q'$ with $\pi_a(q') = \pi_a(r')$
	we obtain that for $x' = xr'$ and $y'=yq' \text{ with } x' \sim_a y'$
	 $x' \in P_a(t_M^{-1}(q')) = t_M^{-1}(P_a^f(q'))$. That is, $(q', r') \in \Gamma_a$.

	For the inverse implication, suppose that $\Gamma_a$ is a congruence.
	
	First, we observe that $t_M^{-1}(P_a^f(S)) \subseteq P_a(t_M^{-1}(S))$ holds from the fact that $t_M^{-1}(\Gamma_a(S)) \subseteq P_a(t_M^{-1}(S))$.
	
	Recall that we defined $q \Gamma_a^n r$ as: $(q,r) \in \Gamma_a^n$ iff $\forall \rho, |\rho| \leq n$ ending in q, $\exists \rho'$ ending in r such that $\rho \sim_a \rho'$. \\
	We also define $P_a^{f,n}(S) = \{ q \in S \mid \exists s \in S, (s,q) \in \Gamma_a^n \}$.
		
	We now prove by induction that $P_a(t_M^{-1}(r)^{\leq n}) \subseteq t_M^{-1}(P_a^{f,n}(r))$ for all $n \in \Nset$ and for all $r \in Q$. \\
	For $n$ = 0 the inequality trivially holds. 
	Suppose that for $n$ the equation holds for all $q \in Q$. 
	Let $x \in \supp(t_M)$, $|x|=n+1$, in $P_a(t_M^{-1}(r)^{\leq n+1})$. We want to prove that it is in $t_M^{-1}(P_a^{f,n}(r))$ too.\\
	Since $x \in P_a(t_M^{-1}(r)^{\leq n+1})$, then there exists $y \in \supp(t_M)$ with $x \sim_a y$ and $y[|y|] = r, |y| \leq n+1$.\\
	Let $x = x'i$. Then $x' \in P_a(t_M^{-1}(y[|y|-1])^{\leq n})$ (from the inductive hypothesis and the fact that $t_M^{-1}(\{r\}) \subseteq t_M^{-1}(P_a^f(\{q\}))$).\\ 
	From the inductive hypothesis, we obtain that $x' \in t_M^{-1}(P_a^{f,n}(y[|y|-1]))$ and then $(y[|y|-1],x'[|x'|]) \in \Gamma_a^n$. \\
	But, because $x \sim_a y$, we have that $x'[|x'|] \sd x[|x|] \in \delta$ and $y[|y|-1] \sd y[|y|] \in \delta$ and $\pi_a(x[|x|]) = \pi_a(y[|y|])$. Then, $ y[|y|] \Gamma_a x[|x|]$.
	This means that $x \in t_M^{-1}(P_a^{f,n+1}(r))$.\\
	It results that, for all $n \in \Nset$ and for all $r \in Q$, $P_a(t_M^{-1}(r)^{\leq n}) \subseteq t_M^{-1}(P_a^{f,n}(r))$ and then  
	$P_a(t_M^{-1}(S)) \subseteq t_M^{-1}(P_a^f(S))$, for all $S$ which ends the proof of the reverse inclusion.\\
	From that and from the first inclusion, we can say that $P_a(t_M^{-1}(S)) = t_M^{-1}(P_a^f(S)), \forall S$.

\end{proof}

\begin{definition}
We say that the pair of epistemic operators $K_a/K_a^f$, resp. $P_a/P_a^f$, \textbf{commutes for} 
$M$ if the 
diagram \ref{diagram:deterministic} is commutative for the respective pair. 
%The is said for the pair 
\end{definition}
Proposition \ref{prop:restricted} gives the first 
restricted form which may lead to the commutativity of 
Diagram \ref{diagram:plain} for formulas of the \muknl.
The second restricted form in which the pair $K_a/K_a^f$ (resp. $P_a/P_a^f$) commutes with a system 
is stated as point 2 in the next proposition:

\begin{proposition}\label{prop:second-commute}
Consider two MASs $M_i = (Q_i,Ag,\delta_i,q_0^i,\Pi,(\Pi_a)_{a\in Ag},\pi_i)$  ($i=1,2$), 
with $Q_1 = \{1,\ldots, n_1\}$ and $Q_2 = \{1,\ldots, n_2\}$,
related by an in-splitting $\chi = (\chi_{st},\chi_{tr}) : M_1 \sd M_2$,
and define the tree mapping
$\hat \chi : \supp(t_{M_1}) \sd \supp(t_{M_2})$,
%\hat \chi (x) = \chi_{tr}(x)
%where $\chi_Q(\eps)  = \eps$ and $\chi_Q(xi) =  \chi_Q(x) \cdot \chi_Q(i)$.
where $\hat \chi (\eps)  = \eps$ and $\hat \chi (xi) =  \hat \chi (x) \cdot \chi_{st}(i)$, for any $x\in \supp(t_{M_1})$ and $i\in Q_1$. 
Then the following properties hold:
\begin{enumerate}
\item $\hat \chi$ is a tree isomorphism between $t_{M_1}$ and $t_{M_2}$ and 
%\item 
$t_{M_2} \circ \hat \chi = \hat\chi \circ t_{M_1}$.
%\item $t_{M_2} \circ \chi_{st} = \chi_{st} \circ t_{M_1}$.
\item For any closed formula $\phi$ of the \muknl
for which the diagram \ref{diagram:plain} commutes in the system $M_2$, 
the following property holds:
\vspace*{-5pt}
$$
\|\phi\|_{M_1} = t_{M_1}^{-1} (\chi_{st}^{-1} ( \lceil \phi \rceil_{M_2} )) 
%t_{M_1}^{-1} (\lceil \phi \rceil_{M_1} ) 
$$\label{id:commute-muknl}
\vspace*{-5pt}
\end{enumerate}
\end{proposition}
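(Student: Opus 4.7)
The plan is to prove the two parts in sequence and then chain them using Proposition \ref{prop:tree-iso}.

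For part 1, I proceed by induction on the length of $x\in\supp(t_{M_1})$, showing simultaneously that (i) $\hat\chi(x)\in\supp(t_{M_2})$ with $t_{M_2}(\hat\chi(x))=\chi_{st}(t_{M_1}(x))$, and (ii) the restriction of $i\mapsto\chi_{st}(i)$ to successors of $t_{M_1}(x)$ in $\delta_1$ is a bijection onto successors of $\chi_{st}(t_{M_1}(x))$ in $\delta_2$. The base case $x=\eps$ is exactly property 4 of Definition \ref{def:in_split}. For the inductive step, whenever $xi\in\supp(t_{M_1})$ we have $(t_{M_1}(x),i)\in\delta_1$, so property 1 gives $(\chi_{st}(t_{M_1}(x)),\chi_{st}(i))\in\delta_2$; combined with the induction hypothesis this yields $\hat\chi(xi)=\hat\chi(x)\cdot\chi_{st}(i)\in\supp(t_{M_2})$. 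The local bijection in (ii) follows from property 3 (equal out-degrees) together with property 1 and surjectivity of $\chi_{tr}$. Global bijectivity of $\hat\chi$ then follows level-by-level, and the commutation $t_{M_2}\circ\hat\chi=\chi_{st}\circ t_{M_1}$ is immediate from the recursive definition $\hat\chi(xi)=\hat\chi(x)\cdot\chi_{st}(i)$.

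For part 2, since $\phi$ is closed, both $\|\phi\|_{M_i}$ and $\lceil\phi\rceil_{M_i}$ are plain sets rather than transformers of positive arity. Applying Proposition \ref{prop:tree-iso} to the tree isomorphism $\hat\chi$ from part 1 yields $\|\phi\|_{M_1}=\hat\chi^{-1}(\|\phi\|_{M_2})$. The hypothesis that diagram \ref{diagram:plain} commutes for $\phi$ in $M_2$ gives $\|\phi\|_{M_2}=t_{M_2}^{-1}(\lceil\phi\rceil_{M_2})$. Taking preimages in the functional identity $t_{M_2}\circ\hat\chi=\chi_{st}\circ t_{M_1}$ from part 1 yields $\hat\chi^{-1}\circ t_{M_2}^{-1}=t_{M_1}^{-1}\circ\chi_{st}^{-1}$ as operators on subsets, so chaining gives
\[
\|\phi\|_{M_1}=\hat\chi^{-1}(t_{M_2}^{-1}(\lceil\phi\rceil_{M_2}))=t_{M_1}^{-1}(\chi_{st}^{-1}(\lceil\phi\rceil_{M_2})).
\]

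The main obstacle I expect is step (ii) of the induction in part 1: establishing that $\chi_{st}$ is locally injective on the successors of each reachable state. Without this, distinct siblings in $t_{M_1}$ would collapse onto the same node of $t_{M_2}$ and $\hat\chi$ would fail to be a tree isomorphism, invalidating the appeal to Proposition \ref{prop:tree-iso}. The clauses in Definition \ref{def:in_split} (especially out-degree preservation together with surjectivity of $\chi_{tr}$, combined with the assumption on page \pageref{assumplion} that all states are reachable) supply this local injectivity, but the argument is not automatic and constitutes the technical heart of part 1.
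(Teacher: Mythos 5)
Your proposal is correct and follows essentially the same route as the paper's proof: establish that $\hat\chi$ is a tree isomorphism satisfying $t_{M_2}\circ\hat\chi=\chi_{st}\circ t_{M_1}$, invoke Proposition \ref{prop:tree-iso} to get $\|\phi\|_{M_1}=\hat\chi^{-1}(\|\phi\|_{M_2})$, apply the hypothesis on diagram \ref{diagram:plain} in $M_2$, and chain via the identity $\hat\chi^{-1}\circ t_{M_2}^{-1}=t_{M_1}^{-1}\circ\chi_{st}^{-1}$. Your inductive treatment of part 1 (in particular the local bijectivity of $\chi_{st}$ on successors via out-degree preservation) is actually more explicit than the paper, which simply asserts this from Definition \ref{def:in_split}.
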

\begin{proof}
The first property is implied by the bijectivity of $\hat \chi$ and the fact that $\chi$ is an in-splitting and the property 3 of Definition \ref{def:in_split}.

For the second property, we may easily prove that 
\begin{equation}\label{id:t-chi}
t_{M_2} (\hat \chi(x)) = \chi_{st} (t_{M_1}(x))
\end{equation}
Let $x' \in \supp(t_{M_1})$. Then, by definition of $\tilde \chi$, $x'=xi$ for some $x \in \supp(t_{M_1})$ and $i \in Q_1$.
\begin{equation}
t_{M_2} (\hat \chi(x')) = t_{M_2} (\hat \chi(xi)) = t_{M_2} (\hat \chi(x) \cdot \hat \chi_{st}(i)) = \hat \chi_{st}(i)
\end{equation}
On the other hand, we have that
\begin{equation}
\chi_{st} (t_{M_1}(x')) = \chi_{st} (t_{M_1}(xi)) = \chi_{st} (i)
\end{equation}
From the last two equations, we can see that Identity \ref{id:t-chi} holds for any $x \in \supp(t_{M_1})$.\\
We may observe that $\hat \chi (t_{M_1}(x)) = \hat \chi (x[|x|]) = \hat \chi(\epsilon) \cdot \chi_{st}(x[|x|]) = \chi_{st}(x[|x|])$. \\
And from the Identity \ref{id:t-chi}, we have that $\hat \chi (t_{M_1}(x)) = t_{M_2}(\hat \chi (x)), \text{ for all } x \in \supp(t_{M_1})$.

%If we take inverses in Identity \ref{id:t-chi} we get:

As a premise of the third property, we may observe that:
\[
t_{M_1}^{-1} \circ \chi_{st}^{-1} = \hat\chi^{-1} \circ t_{M_2}^{-1}
\]
From this, combined with the hypothesis on the commutativity of diagram \ref{diagram:plain} for $\phi$ in $M_2$,
the isomorphism property for $\hat \chi$ in Proposition \ref{prop:tree-iso}, and Identity \ref{id:t-chi},
we get:
%\[
%\|\phi\|_{M_1}  = \hat \chi^{-1} \big( \|\phi\|_{M_2} \big) = \hat \chi^{-1} \Big( t_{M_2}^{-1} \big( \lceil \phi \rceil_{M_2} \big) \Big) 
% = t_{M_1}^{-1} \Big(\chi_Q^{-1} \big(\lceil \phi \rceil_{M_2} \big) \Big)
%\]
\begin{align*}
\|\phi\|_{M_1} = \hat \chi^{-1} \big( \|\phi\|_{M_2} \big) = \hat \chi^{-1} \Big( t_{M_2}^{-1} \big( \lceil \phi \rceil_{M_2} \big) \Big) 
 = t_{M_1}^{-1} \Big(\chi_{st}^{-1} \big(\lceil \phi \rceil_{M_2} \big) \Big)
\end{align*}
\end{proof}

\vspace*{-20pt}

\begin{rem}\label{rema:commut-closed}
The previous proposition tells us that, for \emph{closed} formulas of the 
\muknl for which Diagram \ref{diagram:plain} commutes in $M_2$,
in the eventuality that the system $M_2$ needs to be replaced with 
a ``larger'' system $M_1$ (for reasons related with the ``determinization''
that ensures the first type of commutativity of $K_a/P_a$),
the validity of $\phi$ on the tree $t_{M_1}$ 
can be recovered from the set of states 
$\chi_{st}^{-1} \big(\lceil \phi \rceil_{M_2} \big)$, through the 
inverse tree mapping $t_{M_1}^{-1}$.
\end{rem}

We have now the essential ingredients that ensure the
decidability of the model-checking problem for the \munonmix.
The algorithm runs as follows: 
we proceed by constructing the state-transformer interpretations of the 
subformulas of $\phi$ on the given system $M$, 
in a bottom-up traversal of the syntactic tree $T_\phi$.
As long as we only treat subformulas not containing any epistemic operator,
Theorem \ref{teo:plain}
ensures that these state transformers are correct finitary abstractions 
of the tree semantics of our subformulas.

The first time we encounter in $T_\phi$ an epistemic operator $K_a/P_a$, say,
the subformula in the current node is $K_a \phi'$,
%the state-transformer $\lceil K_a \phi' \rceil_M$ 
%might not be used to compute the tree semantics $\|K_a \phi'\|$,
%as noted in Remark \ref{rema:commut-finite}.
we need to replace $M$ with its \emph{$a$-distinction}, $\Delta_a(M)$, 
in order for the appropriate diagram to commute.
This replacement is easier when $\phi'$ is a closed plain \mucalc formula. 
By combining Propositions 
\ref{prop:second-commute} and \ref{prop:restricted},
the tree semantics of the formula $K_a \phi'$ can be computed using 
the state transformer
$K_a^f \big( \Delta_{a}^{-1} \big(\lceil \phi' \rceil_{M} \big) \big) $ in $\Delta_a(M)$,
%where $M$ is the initial system.
where $\Delta_{a}^{-1} \big(\lceil \phi' \rceil_{M} \big)$ represents 
%In words, we need to replace $M$ by $\Delta_a(M)$ and to identify 
the set of states in $\Delta_a(M)$ on which $\phi'$ holds.
%$\lceil \phi'\rceil$ holds, and this set is 
%$\Delta_{a}^{-1} \big(\lceil \phi' \rceil_{M} \big)$.

%Once this $a$-distinction is accomplished, 
%we may go to the upper levels of our syntactic tree.
%Note that in general 
%$\Delta_{a}^{-1} \big(\lceil \phi' \rceil_{M} \big) \neq \lceil K_a^f \phi' \rceil_{\Delta_{a}^{-1}(M)}$.

The procedure is different when $\phi'$ is non-closed.
In this situation, we cannot determinize $M$, as observed in the remark \ref{rema:commut-finite}.
Therefore we need to descend along the syntactic tree to \emph{all} the ``nearest''
nodes whose formulas are closed, and 
only there apply the $a$-distinction construction, thanks to Proposition \ref{prop:second-commute}.

Suppose even further that $\phi'$ itself contains other knowledge operators,
and some other knowledge operator $K_b$ is encountered
during this descent. The ``nonmixing'' assumption on our formula 
implies that this other agent $b$ has compatible observability with our $a$
($K_a$ and $K_b$ are non-closed at the node associated with $K_a$).
Therefore, the $a$-distinction of the models applied at lower levels commutes with 
$K_b$, fact which is ensured by Proposition \ref{prop:restricted} when the two agents 
have compatible observability.

This whole process ends when we arrive in the root of the syntactic tree,
with an in-splitting $M'$ of the initial system $M$ and a 
(constant) state-transformer $\sigma$, which gives the 
finitary abstraction of the set of nodes of the 
tree $t_M$ where $\phi$ holds.
The following paragraphs formalize this process.

\begin{proof}[Proof of Theorem \ref{mck-nonmix}]
Given a formula $\phi$ in the \munonmix and a MAS $M$, we associate with each node $x$ of $T_\phi$ 
an in-splitting mapping, denoted $T^{Ins}_{\phi}(x)$, %$InS(\phi,n)$,
such that the following properties hold:
%\vspace*{-6pt}
\begin{enumerate}
%\item For any two nodes $n_1,n_2$ for which $n_2$ is a nearest closed successor of $n_1$,
%%with $form(n_1)$ being a subformula of $form(n_2)$,
%%both being closed formulas and 
%%such that there exists no distinct node $n_3$ with $form(n_3)$ closed
%%and $n_1 \prec n_3 \prec n_2$,
%%form(n_1) \prec_{dir} form(n_2)$,
%we have that $dom(InS(\phi,n_1)) = codom(InS(\phi,n_2))$.
%\item There exists a unique in-splitting $\chi$ which represents the composition of 
%the in-splittings that label the closed nodes along a maximal path in $T_\phi$: 
%for any maximal sequence of nodes in $T_\phi$ starting at the root $n_0$,
%$\rho = (n_0,n_1,\ldots, n_k)$ 
%s.t. $n_{i+1}$ is a nearest closed successor of $n_i$
%% $\ldots \prec n_k$ and with $form(n_i)$ closed,
%denote $\chi_{\rho}$ the composition of the in-splittings along this sequence,
%$\chi_\rho = InS(\phi,n_0)\circ \ldots \circ InS(\phi,n_k)$.
%Then $\chi_\rho$ is independent on the choice of $\rho$, i.e. 
%$\chi_{\rho} = \chi_{\rho'}$ for any two runs $\rho$ and $\rho'$, and 
%we denote this unique in-splitting as $InS(\phi)$.

\item For the root $\epsilon$ and any not closed node $x$ in $\supp(T_{\phi})$, $T^{Ins}_{\phi}(\epsilon) = id_M$ and 
 $T^{Ins}_{\phi}(x) = id_{M'}$, with $M$ and $M'$ appropriate MASs.
\item For any $x, xi \in \supp( T_{\phi}), i\in \{1,2\}$, $codom (T^{Ins}_{\phi}(x)) = dom (T^{Ins}_{\phi}(xi))$,
\item For any nodes $x_1, x_2 \in \supp(T_{\phi})$ with $x_1 \preceq x_2$, 
the \textit{in-splitting mapping} between the two nodes is the composition of the mappings from $x_1$ to $x_2$. Formally, 
$$T^{Ins}_{\phi}(x_1...x_2) =^{not} T^{Ins}_{\phi}(x_1) \circ ... \circ T^{Ins}_{\phi}(x_2)$$
Then, for any $x_1, x_2$ leaves in $T_{\phi}$, $T^{Ins}_{\phi}(\epsilon... x_1) = T^{Ins}_{\phi}(\epsilon ... x_2)$, where $\epsilon$ is the root.

\item For any node $x_1$ which is a nearest closed successor of the root $\epsilon$,
%there exists some node $n_2$ with $n_0 \prec n_2 \prec n_1$ and 
if $AgNCl_\phi(\epsilon) = \{a_1,\ldots,a_k\}$ 
and $\Pi_{a_1}\subseteq \ldots \subseteq \Pi_{a_k}$, then 
%$InS(\phi,n_1)$ 
$T^{Ins}_{\phi}(x_1)$ has the form:
\vspace*{-6pt}
$$
T^{Ins}_{\phi}(x_1) = \Delta_{a_1}^{-1} \circ \ldots \circ \Delta_{a_k}^{-1} \circ \chi,\mbox{ for some $\chi$,}
$$
%with $\{a_1,\ldots,a_k\}=AgNCl_\phi(n_0)$ 
%with the set $A = \{a_1,\ldots,a_k\}$ representing the set of agents
%which are not closed at $n_0$,
%for which there exists a node $n_2$ with $n_0 \prec n_2 \prec n_1$ and 
%$AgNCl_\phi(n_2) \cap A \neq \emptyset$,
%and $\Pi_{a_1}\subseteq \ldots \subseteq \Pi_{a_k}$.
\end{enumerate}

Next, assuming that $T^{InS}_{\phi}$ is constructed with all the properties above, we denote 
$InS(T^{Ins}_{\phi}) = T^{Ins}_{\phi}(\epsilon...x)$ with $x$ any leave in $T_\phi$.

%node $n_1$ with $form(n_1)$ closed 
%such that there exists no distinct node $n_2$  
%with $form(n_2)$ closed and $n_0 \prec n_2 \prec n_1$,
%and $form(n_1)$ being a direct subformula of $form(n_2)$, 
%we have that $a \in AgNCl_\phi(n_0)$ if and only if 

%the system $M' = dom(InS(\phi,n_1))$
%is $a$-distinguished and, in this case, 
%both pairs $K_a/K_a^f$ and $P_a/P_a^f$ commute 
%for $M'$.

%\item $InS(\phi,n)$ is the identity (on the appropriate in-splitting of $M$) 
%for any node $n$ with $form(n)$ not closed or when $n = n_0$ and 
%$\phi$ is closed.

The construction of $T^{Ins}_{\phi}$ proceeds by structural induction on $\phi$. Whenever we want to emphasize a property of the root of the syntactic tree $T_{\phi}$, we denote it $\epsilon^\phi$.

For the base case we put $T^{Ins}_{p}(\epsilon) = T^{Ins}_{\neg p}(\epsilon) = id_M$, 
%$InS(Z,n) = InS(p,n) = InS(\neg p,n) = id_M$ 
for any $p\in \Pi$.
For $\phi = Z$, $Z \in \ZZZ$, note that, by construction, the root of $T_Z$ has a leaf successor which is the only child node. Then, $T^{Ins}_{Z}(\epsilon) = T^{Ins}_{Z}(1) = id_M$.

For the induction case, 
% and $InS(\phi',n)$ is defined for the direct subformulas of $\phi$.
take a formula $\phi = Op. \phi'$ where $Op \in \{AX, EX, \mu Z, \nu Z\}$,
and assume %$InS(\phi',n)$
$T^{Ins}_{\phi'}(x)$ is defined. 
Then we put  $T^{Ins}_{\phi}(1x) = T^{Ins}_{\phi'}(x)$ for any node $x$ of $\supp(T_{\phi'})$,
and $T^{Ins}_{\phi}(\epsilon^\phi) = id_{M'}$, where $M' = dom(T^{Ins}_{\phi'} (\epsilon^{\phi'}))$.

Suppose $\phi = K_a \phi'$ or $\phi = P_a \phi'$.
Note that for each node $1x$ which is not closed in $T_\phi$, the node $x$ is not closed in $T_{\phi'}$ either. 
Then we put 
$T^{Ins}_{\phi}(1x) = T^{Ins}_{\phi'}(x) = id_{M'}$, with $M'$ the appropriate MAS. 
We also put $T^{Ins}_{\phi}(\epsilon^\phi) = Id_{M_0}$ for the appropriate $M_0$.
Furthermore, for each closed node $1x_1 \in \supp(T_{\phi})$ which \emph{not} a nearest closed successor of $\epsilon^\phi$,
%with $form(n_1)$ closed and for which there exist a node $n_2$ with $form(n_2)$ closed
%and with $n_0^\phi \prec n_2 \prec n_1$ 
we put $T^{Ins}_{\phi}(1 x_1) = T^{Ins}_{\phi'}(x_1)$.

Take further a node $1x_1$ which is a nearest closed successor of the root $\epsilon^\phi$ and $AgNCL(\epsilon^\phi) = \{a_1,...,a_k\}$.
By the above property 4 from the induction hypothesis, the in-splitting mapping in $x_1$ is
of the form 
$T^{Ins}_{\phi'}(x_1) = \Delta_{a_1}^{-1} \circ \ldots \circ \Delta_{a_k}^{-1} \circ \chi$
with 
%all agents $a_i$ having compatible observability, i.e.,
$\Pi_{a_1}\subseteq \ldots \subseteq \Pi_{a_k}$.
On the other hand, by the assumption that $\phi$ is a nonmixing formula,
$a$ must have compatible observability with all the agents $a_1,\ldots, a_k$.
Therefore, there must exist some $i \leq k$ such that 
$\Pi_{a_1}\subseteq \ldots \subseteq \Pi_{a_i} \subseteq \Pi_a \subseteq \Pi_{a_{i+1}} \subseteq \ldots \subseteq \Pi_{a_k}$.
We then define 
\vspace*{-2pt}
$$
T^{Ins}_{\phi}(1x_1) = \Delta_{a_1}^{-1} \circ \ldots \circ \Delta_{a_i}^{-1} \circ \Delta_{a}^{-1} \circ  
\Delta_{a_{i+1}}^{-1} \circ \ldots \circ \Delta_{a_k}^{-1} \circ \chi
$$
Note that the domain and the codomain of each $\Delta^{-1}_{a_j}$, 
($j \leq i$) are different in $T^{Ins}_{\phi}$ from those in $T^{Ins}_{\phi'}$, 
due to the insertion of $\Delta_a^{-1}$.

According to the above constructions for $\phi = K_a \phi'$ of $\phi = P_a \phi'$, 
all the four properties are satisfied by $T^{Ins}_{\phi}$, the fourth one resulting from the construction 
of the in-splitting mapping for the nearest closed successors of the root.

Finally, take $\phi\! = \!\phi_1 Op \, \phi_2$ ($Op \!\in \!\{\wedge,\vee\}$).
If $T^{Ins}_{\phi_1} \!= \! T^{Ins}_{\phi_2}$, put
$T^{Ins}_{\phi}(1n) \!=\! T^{Ins}_{\phi_1}(x)$ for all nodes $x \in \supp(T_{\phi_1})$, 
$T^{Ins}_{\phi}(2n) \!=\! T^{Ins}_{\phi_2}(x)$ for all $n \in \supp(T_{\phi_2})$ and 
$T^{Ins}_{\phi}(\epsilon) = Id_M$.

Suppose now $T^{Ins}_{\phi_1} \neq T^{Ins}_{\phi_2}$. 
Consider
$AgNCl(1) = \{a_1,\ldots,a_k\}$ and  
$AgNCl(2) = \{b_1,\ldots,b_l\}$ with 
$\Pi_{a_1}\subseteq \ldots \subseteq \Pi_{a_k}$ and
$\Pi_{b_1}\subseteq \ldots \subseteq \Pi_{b_l}$.
Take then a node $x_1$ which is a nearest closed successor of the root of $T_{\phi_1}$, $\epsilon^{\phi_1}$,
and a node $x_2$ which is a nearest closed successor of 
$\epsilon^{\phi_2}$.
%, the root of $T_{\phi_2}$.
By the induction hypothesis we have:
\vspace*{-2pt}
\begin{align*}
T^{Ins}_{\phi}(x_1) & = \Delta_{a_1}^{-1} \circ \ldots \circ \Delta_{a_k}^{-1} \circ \chi_1 & InS(T^{Ins}_{\phi_1}) & = T^{Ins}_{\phi}(x_1) \circ \chi_1' \\
T^{Ins}_{\phi}(x_2) & = \Delta_{b_1}^{-1} \circ \ldots \circ \Delta_{b_l}^{-1} \circ \chi_2 & InS(T^{Ins}_{\phi_2}) & = T^{Ins}_{\phi}(x_2) \circ \chi_2' 
\end{align*}
with appropriate in-splittings $\chi_1,\chi_1',\chi_2,\chi_2'$.

On the other hand, by the assumption on $\phi$ being nonmixing, 
for any $i\leq k,j\leq l$, the two agents $a_i$ and $b_j$ 
must have compatible observability.
It therefore follows that 
there exists a reordering of the union $\{a_1,\ldots,a_k\} \cup \{b_1,\ldots,b_l\}$ 
as $\{c_1,\ldots,c_m\}$ such that 
$\Pi_{c_i}\subseteq \Pi_{c_{i+1}}$ for all $i\leq m-1$.
Denote then:
\vspace*{-2pt}
$$
\chi_0 = \Delta_{c_1}^{-1} \circ \ldots \circ \Delta_{c_m}^{-1} 
$$
By Proposition \ref{prop:a-b-distinction}, 
$\chi_0$ is a $c$-distinction for any 
$c \in \{a_1,\ldots,a_k\} \cup \{b_1,\ldots,b_l\}$.
Also, by property 2 of the induction hypothesis, 
$\chi_0$ is independent of the choice of the nodes $x_1,x_2$.

The same property from the induction hypothesis also ensures that, 
%for any node $\br n_1 \in T_{\phi_1}$ which is a nearest closed successor of $n_0^{\phi_1}$, 
%there exist appropriate in-splittings $\br \chi^{\phi_1,\br n_1}_1,\tilde \chi^{\phi_1,\br n_1}_1$ such that:
%%$InS(\phi_1,n_1')$ has the following form:
%\begin{equation}\label{id1:ins_n_1_prim}
%InS(\phi_1,\br n_1) & = \Delta_{a_1}^{-1} \circ \ldots \circ \Delta_{a_k}^{-1} \circ \br \chi^{\phi_1,\br n_1}_1 \\
%\label{id2:ins_n_1_prim}
%InS(\phi_1) & = InS(\phi_1,\br n_1) \circ \tilde\chi^{\phi_1,\br n_1}_1
%\end{equation}
%Similarly, 
for any nearest closed successor $\br x_2$ of $\eps^{\phi_2}$, there exist 
in-splittings $\br\chi^{\phi_2,\br x_2}_2,\tilde\chi^{\phi_2,\br x_2}_2$ such that:
\vspace*{-2pt}
\begin{equation}
%\label{id1:ins_n_2_prim}
T^{Ins}_{\phi}(\br x_2) = \Delta_{b_1}^{-1} \circ \ldots \circ \Delta_{b_l}^{-1} \circ \br \chi^{\phi_2,\br x_2}_2 \qquad
\label{id2:ins_n_2_prim}
InS(T^{Ins}_{\phi_2}) = T^{Ins}_{\phi}(\br x_2) \circ \tilde\chi^{\phi_2,\br x_2}_2
\end{equation}

\vspace*{-2pt}
We will then construct $T^{Ins}_{\phi}(\cdot)$ as follows:
\begin{enumerate}
\item For each closed node $x$ which is a leaf in $T_{\phi_1}$ but not a nearest closed successor of $\epsilon^{\phi_1}$,
we put $T^{Ins}_{\phi}(1x) = T^{Ins}_{\phi_1}(x) \circ \chi_2 \circ \chi_2' $.
\item For each non-leaf, closed node $x$ in $T_{\phi_1}$ 
which is not a nearest closed successor of $\epsilon^{\phi_1}$ we copy
$T^{Ins}_{\phi}(1x) = T^{Ins}_{\phi_1}(x)$.
\item For each nearest closed successor $x$ of $\epsilon^{\phi_1}$ which is not a leaf in $T_{\phi_1}$ 
we put $T^{Ins}_{\phi}(1x) = \chi_0 \circ \chi_1$.
\item For each closed node $x$ which is a leaf in $T_{\phi_1}$ and a nearest closed successor of $\epsilon^{\phi_1}$,
we put $T^{Ins}_{\phi}(1x) = \chi_0 \circ \chi_1 \circ \chi_1' \circ \chi_2 \circ \chi_2' $.
\item For each closed node $x$ which is not a close 
successor of $\epsilon^{\phi_2}$ we copy
$T^{Ins}_{\phi}(2x) = T^{Ins}_{\phi_2}(x)$.
\item For each closed node $x$ which is a nearest closed successor 
of $\epsilon^{\phi_2}$ we put
$T^{Ins}_{\phi}(2x) = \chi_0 \circ \chi_1 \circ \chi_1'\circ \br\chi^{\phi_2,x}_2$,
where $\br\chi^{\phi_2,x}_2$ is the in-splitting mapping 
associated to node $x$ as in Identity \ref{id2:ins_n_2_prim} above.
\item For the root $\epsilon$ and the non-closed nodes $x$ of $T_\phi$, $T^{Ins}_{\phi}(\epsilon) = Id_{M'}$ 
%and $T_\phi$, 
and $T^{Ins}_{\phi}(x) = Id_{M''}$, with $M'$ and $M''$ appropriate MASs. 
\end{enumerate}

It's not difficult to see that the resulting mapping $T^{Ins}_{\phi_2}(\cdot)$ satisfies 
the five desired properties.
More specifically, 
property 2 amounts to the following identity:
\vspace*{-2pt}
$$
InS(T^{Ins}_{\phi}) = \chi_0 \circ \chi_1 \circ \chi_1' \circ \chi_2 \circ \chi_2' 
$$

%In order not to overload the notation, we will consider henceforth that, 
%when $\phi = Op. \phi'$ for some unary operator $Op$,
%the nodes of $T_\phi$ are exactly the nodes of $T_{\phi'}$ plus a new root, denoted $\epsilon^\phi$.
%Similarly, for $\phi = \phi_1 Op \phi_2$ for some binary operator $Op$,
%the nodes of $T_\phi$ are the union of the nodes of $T_{\phi_1}$ and $T_{\phi_2}$,
%plus a new root denoted $\epsilon^\phi$ too.

Further, let $M_x$ denote the MAS which is the \emph{domain} of the in-splitting $T^{Ins}_\phi(x)$, and denote $Q_x$
its state-space.
Also, for convenience, we denote $\br M_x$ the MAS which represents the \emph{codomain} of 
$T^{Ins}_\phi(x)$, and $\br Q_x$ its state-space. 
Note that
when $x,x1 \in \supp(T_\phi)$, $\br M_x = M_{x1}$, and similarly $\br M_x = M_{x2}$ when $x2 \in \supp(T_\phi)$.

Once we built the tree $T^{Ins}_\phi$, we associate with
each node $x$ in $T_\phi$ a state-transformer that will give all the information on
the satisfiability of $form(x)$ in the given model. 
Formally, we build a tree $T^{str}_\phi$ whose domain is $\supp(T_\phi) \setminus \{x \mid T_\phi(x) = \bot\}$
and which, for each node $x$, represents a state-transformer
$T^{str}_\phi(x) : (2^{Q_x})^n \sd 2^{Q_x}$.
The construction will be achieved such that
\begin{equation}\label{mck}
%\hspace*{80pt}
\|form(x)\| \circ \big( t^{-1}_{M_x}\big)^n = t^{-1}_{M_x} \circ T^{str}_\phi(x)
\end{equation}
for each node $x$ with $form(x) \neq \bot$.

The construction proceeds bottom-up on $\supp(T_\phi)$. 
We actually build \emph{two} trees, $T^{str}_\phi$ and $\br T^{str}_\phi$, 
such that 
$\br T^{str}_\phi(x) : (2^{\br Q_x})^n \sd 2^{\br Q_x}$
and $T^{str}_\phi(x) = \br T^{str}_\phi(x) \circ \Big[ \big(T^{Ins}_\phi(x)\big)^{-1} \Big]^n$,
that is,
\begin{equation}\label{t-br-t}
%\hspace*{70pt}
 T^{str}_\phi(x)(S_1,\ldots,S_n) = \br T^{str}_\phi(x) \big( \big(T^{Ins}_\phi(x)\big)^{-1}(S_1,\ldots,S_n)\big)
\end{equation}
Note that, once we build $\br T^{str}_\phi(x)$ for a node $x$, 
$T^{str}_\phi(x)$ is defined by Identity \ref{t-br-t}, so 
we only explain the construction for $\br T^{str}_\phi(x)$.

For $x$ leave in $T_\phi$ with $T_\phi(x) = p \in \Pi$, we put 
$\br T^{str}_\phi(x) = \lceil p \rceil_M$, the constant state-transformer.
Recall that we do not define $T^{str}_\phi(x)$ for $\br T_\phi(x) = \bot$.
For $T_\phi(x) = Z_i \in \ZZZ$ we put 
$\br T^{str}_\phi(x)(S_1,\ldots,S_n) = S_i$, the $i$-th projection on $(2^{Q_x})^n$.

For nodes $x$ with $T_\phi(x) = Op \in \{AX, EX, K_a, P_a \mid a \in Ag\}$ we put
$$
\br T^{str}_\phi(x)(S_1,\ldots,S_n) = Op \big( T^{str}_\phi(x1) (S_1,\ldots,S_n)\big)
$$
For $T_\phi(x) \in \{\wedge, \vee\}$ we put 
$\br T^{str}_\phi(x)(S_1,\ldots,S_n) = 
\big( T^{str}_\phi(x1) (S_1,\ldots,S_n)\big)
Op \big( T^{str}_\phi(x2) (S_1,\ldots,S_n)\big)$.

For $T_\phi(x) = \mu Z_i$ with $1\leq i\leq n$ we put
$\br T^{str}_\phi(x) = 
\lfp^i_{\lceil T^{str}_\phi(x1)\rceil}$,
and, similarly, for $T_\phi(x) = \nu Z_i$ we define
$\br T^{str}_\phi(x) = 
\gfp^i_{\lceil T^{str}_\phi(x1)\rceil}$.

The validity of the 
Identity \ref{mck} is a corollary of 
%Theorem \ref{teo:plain}, when $form(x)$ is closed, 
Propositions \ref{prop:restricted} and \ref{prop:second-commute}.

The final step consists of checking whether $q_0^{\eps} \in T^{str}_\phi(\eps)$,
where $q_0^\eps$ is the 
initial state in the MAS $M_\eps$ associated with the root of $T_\phi$.

\end{proof}

\vspace{-2mm}
The following result follows from a similar result for
LTLK from \cite{meyden-shilov}.
A self-contained proof can be found in \cite{bozianu-dima-enea-arxiv}:

\vspace{-1.5mm}
\begin{theorem}
The model checking problem for the \munonmix is hard for non-elementary time.
\vspace{-3mm}
\end{theorem}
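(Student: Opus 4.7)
The plan is to reduce from the emptiness problem for star-free regular expressions, a classically non-elementarily hard decision problem, to model-checking for the \munonmix. The first useful observation is that if we restrict ourselves to a \emph{single} agent, then the non-mixing compatibility condition is vacuous at every node of the syntactic tree, since it only constrains pairs of \emph{distinct} agents. Hence every single-agent \muknl formula automatically lies in the \munonmix, and it suffices to establish non-elementary hardness already for the single-agent fragment.

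Next I would follow the van der Meyden--Shilov template. Given a star-free expression $R$ over an alphabet $\Sigma$, I would build a MAS $M_R$ whose runs encode candidate words together with an auxiliary non-deterministic guess of a factorization, and a \muknl formula $\phi_R$ by structural induction on $R$. Letters and positional markers are encoded by atomic propositions, only some of which are declared observable to the single agent $a$; union translates to disjunction; concatenation translates to a temporal-existential guess of the split point (expressed by $EX$ and a least fixpoint tracking the ``current position''); and complementation translates to $K_a$ applied to the argument. The agent's observability partition is tuned so that $K_a$ quantifies universally exactly over the factorization guesses that are indistinguishable from the current one, converting Boolean complement into failure of any completing witness. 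The standard encoding of ``along the run'' as an alternation-free $\mu$-calculus fixpoint closes the translation, producing $\phi_R$ of size polynomial in $R$ and a MAS $M_R$ of polynomial size, such that $R$ denotes a non-empty language iff $M_R \models \phi_R$.

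The main obstacle, and the reason the lower bound transfers, lies in tracking how the alternation of $K_a$ and its dual in $\phi_R$ mirrors the alternation of complement and union in $R$: every added layer of complementation forces another exponential blowup in the $a$-distinction construction that the decision procedure from Theorem~\ref{mck-nonmix} must perform, matching the well-known non-elementary blowup between star-free expressions and their minimal DFAs. The delicate technical step is verifying that, in the reduction, $K_a$ can legitimately be applied to non-closed subformulas containing the fixpoint variable corresponding to the ``position pointer'' without violating any syntactic restriction of the \munonmix; with only one agent this is automatic, since the compatibility requirement is vacuous, so the inductively constructed $\phi_R$ never leaves the fragment. Composing the reduction with the single-agent-to-\munonmix embedding then delivers the claimed non-elementary lower bound.
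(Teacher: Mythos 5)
Your proposal takes essentially the same route as the paper: the theorem is obtained from the non-elementary hardness of model checking the linear temporal logic of knowledge with synchronous perfect recall \cite{meyden-shilov}, and the self-contained proof deferred to the full version is precisely the reduction you sketch --- emptiness of star-free regular expressions, with complementation simulated by $K_a$ quantifying over runs that hide the nondeterministic factorization guesses, and with a single agent so that the non-mixing restriction is vacuously satisfied. One caveat on your last paragraph: the lower bound transfers solely because the reduction is polynomial-size and star-free emptiness is non-elementary hard; the exponential blowups incurred by the iterated $a$-distinction in the decision procedure of Theorem~\ref{mck-nonmix} concern the upper bound and play no logical role in establishing hardness.
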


\section{Conclusions and comments}

We have presented a fragment of the \muknl having a decidable model-checking problem.
We argued in the introduction that the decidability result does not seem to be 
achievable using tree automata or multi-player games. 
Two-player games with one player having incomplete information and with non-observable winning 
conditions from \cite{chatterjee-doyen-fullpogames} do not seem to be appropriate for 
the whole calculus as they are only equivalent with 
a restricted type of combinations of knowledge operators and fixpoints,
as shown on page \pageref{rema:games}.
We conjecture that 
the formula $\nu Z \big( p  \vee AX. P_a Z \big)$ is not equivalent with any 
(tree automaton presentation of a) two-player game with 
path winning conditions.
Translating this formula to a generalized tree automaton seems to require
specifying a winning condition on 
concatenations of finite paths in the tree 
with ``jumps'' between two identically-observable positions in the tree.
This conjecture extends the non-expressivity results from \cite{bulling-jamroga-mu} relating 
$ATL$ and $\mu-ATL$.

The second reason for which the above-mentioned generalization would not work comes from 
results in \cite{dima-jlc} showing that the  satisfiability problem for 
CTL or LTL is undecidable with the concrete observability relation presented here.
It is then expectable that if a class of generalized tree automata is equivalent with the 
\munonmix, then that class would have an undecidable emptiness problem
and only its``testing problem'' would be decidable. 
Therefore, the classical determinacy argument for two-player games would 
not be translatable to such a class of automata. 

%Finally, it would be interesting to investigate whether 
%the \muknl with past-time operators can be used for 
%encoding $ATL$ formulas of the form $\llangle a \rrangle p_1 \UUU p_2$, by introducing a means for 
%testing whether the goal $p_2$ has been achieved on some
%of the identically-observable runs, and it only remains to be achieved on some other runs. 
%That is, whether the special atom $past_p$ can be replaced by some 
%past-time fixpoint expression.

%This suggestion comes from the fact that, in the model-checking algorithm for $ATL_{iR}$ with coalitions 
%based on distributed knowledge \cite{dima-gandalf}, a history-based mechanism 
%is employed for remembering the identically-observable runs on which the goal $p_2$ was achieved.

\vspace*{-10pt}
\subsection*{Acknowledgments}

\vspace*{-3pt}
Many thanks to D. Guelev for his careful reading of earlier versions of this paper.

\vspace*{-10pt}

\bibliographystyle{plain}
\bibliography{lics-bibtex}
\nocite{HalpernVardi86,lomuscio-mcmas,kacprzak-penczek-aamas-2005,bulling-jamroga-mu,goranko-drimmelen06,agotnes-synthese06,arnold-rudiments,walukiewicz-tcs-mu-calculus}

\end{document}